\documentclass[%
 aip,
 amsmath,amssymb,
reprint, onecolumn %%%%% Comment onecolumn to override one column format %%%%%%%
]{revtex4-2}

\usepackage{graphicx}% Include figure files
\usepackage{dcolumn}% Align table columns on decimal point
\usepackage{bm}% bold math
\usepackage{todonotes}
\usepackage[utf8]{inputenc}
\usepackage[T1]{fontenc}
\usepackage{mathptmx}
\usepackage{etoolbox}
\usepackage{tikz-cd}
\usepackage{amsthm}
\usepackage{tensor}

\newcommand{\FB}{\mathcal{F\!\!B}}
\newcommand{\NB}{\mathcal{N\!\!B}}
\newcommand{\NVB}{\mathcal{N\!V\!\!B}}

\newtheorem{thm}{Theorem}
\newtheorem{lem}{Lemma}
\newtheorem{defn}{Definition}
\newtheorem{cor}{Corollary}
\newtheorem{prop}{Proposition}

\makeatletter
\def\@email#1#2{%
 \endgroup
 \patchcmd{\titleblock@produce}
  {\frontmatter@RRAPformat}
  {\frontmatter@RRAPformat{\produce@RRAP{*#1\href{mailto:#2}{#2}}}\frontmatter@RRAPformat}
  {}{}
}%
\makeatother
\begin{document}

\preprint{AIP/123-QED}

\title[Minimally-coupled Quasilinear Symmetric Hyperbolic Systems, Naturally]{Minimally-coupled Quasilinear Symmetric Hyperbolic Systems, Naturally}
% Force line breaks with \\
\author{Eleanor March}
\affiliation{
Department of Logic and Philosophy of Science, University of California, Irvine
}
 \email{eleanor.march@uci.edu; james.owen.weatherall@uci.edu}
\author{James Owen Weatherall}
\affiliation{
Department of Logic and Philosophy of Science, University of California, Irvine
}
\affiliation{Department of Mathematics, University of California, Irvine}

\date{\today}% It is always \today, today,
             %  but any date may be explicitly specified

\begin{abstract}
We consider symmetric hyperbolicity and well-posedness in the context of natural equations, which are closed subobjects of jet bundles over natural bundles.  We show how non-natural equations can be ``naturalized'', and we discuss how a precise version of ``minimal coupling'' offers a (provably unique, up to isomorphism) method for constructing natural equations from certain equations defined on a single manifold.  We discuss how to define symmetrizers, hyperbolizations, and constraints for minimally coupled natural equations, and we show how these can be used to recover standard existence and uniqueness results, relative to any choice of background structure (a metric, for minimally coupled systems).  Additionally, we show that minimal coupling and naturality together imply that the causal cones associated with a (first order, quasilinear, minimally coupled) naturally symmetric hyperbolic natural equation must always coincide with the metric lightcones, and that a certain rank-2 tensor, associates with a natural hyperbolization, must always be divergence free relative to the associated Levi-Civita derivative operator.
\end{abstract}

\maketitle

\section{\label{sec:intro} Introduction}

Let $\mathcal{M}_n$ be the category of smooth $n$-manifolds, with smooth embeddings as arrows; and let $\FB$ be the category of smooth fiber bundles, with smooth bundle morphisms as arrows.  A \emph{natural bundle} over $n$-manifolds is a functor $X:\mathcal{M}_n\rightarrow \FB$ such that for each $M\in\mathrm{ob}(\mathcal{M}_n)$, $XM$ is a smooth fiber bundle with base space $M$, and for each $\varphi\in\mathrm{mor}(\mathcal{M}_n)$, $X\varphi$ is a smooth bundle morphism covering $\varphi$ that is such that the maps it induces on fibers are diffeomorphisms.\citep{Palais+Terng,Kolar+etal}  (We use set-theoretic notation because it is compact, though we do not intend any commitment on whether the categories under consideration here are small or locally small.)  A ($k$th-order) \emph{natural equation} on a natural bundle $X$ is a closed subobject of $J^kX$, i.e., a pair $(E,e)$, up to isomorphism,\footnote{That is, ismorphism in the slice category of natural bundles over $J^kX$.  See \S \ref{sec:prelim}.} where $E:\mathcal{M}_n\rightarrow \FB$ is a natural bundle and $e:E\rightarrow J^k X$ is a monic natural transformation whose image, at each $M\in\mathrm{ob}(\mathcal{M}_n)$, is a closed submanifold.\citep{MarchWeatherallnaturaltheories} (Here, $J^kX$ is the $k$th jet bundle of $X$.)  One can think of $(E,e)$ as determining, for each $M\in\mathrm{ob}(\mathcal{M}_n)$, a (well-behaved) closed submanifold of the $k$th jet bundle of $XM$, and thus a certain $k$th order system of partial differential equations on sections of $XM$, which is ``natural'' (or ``generally covariant'') in the sense of being functorial over $n$-manifolds.\citep{MarchWeatherallcovariance}

Natural equations are of considerable interest in physics.  Examples of natural equations include Einstein's equation on $n$-manifolds and, for $n>4$, Einstein-Gauss-Bonnet gravity; as discussed in \S \ref{Mat+MC}, many other physically important systems, such as Maxwell's equations, can be ``naturalized'', i.e., reformulated as natural equations. For this reason and others, it is striking that (sufficiently rich) natural equations never have a well-posed initial value problem, at least in the standard sense of admitting unique solutions for given initial data.\citep{MarchWeatherallnaturaltheories}  This is true even for naturalized versions of systems such as Maxwell's equations which are symmetric hyperbolic, and would otherwise be expected to have well-posed initial value problems.  The reason is closely related to Einstein's famous ``hole argument'': since natural equations are required to be functorial over diffeomorphisms of their base space, initial data can only ever determine solutions ``up to diffeomorphism''.\footnote{Compare with the discussion by \citet{Halvorson+etal} about determinism for mathematically formulated theories.  As they argue, uniqueness up to unique isomorphism is likely enough.}  Thus, the strongest sense of well-posedness one can expect for a natural equation is that solutions exist and are unique up to (unique) diffeomorphism.  This means, in particular, that no (sufficiently rich) first-order quasilinear natural equation can be symmetric hyperbolic, since any symmetric hyperbolic system has a well-posed initial value problem.\citep{Friedrichs,Geroch1996PDEs}  And yet, as we will show, natural equations may satisfy something very close to symmetric hyperbolicity.  Moreover, certain aspects of the analysis of symmetric hyperbolic systems, such as the definition of (maximal) hyperbolizations and the integrability of constraints, take on a particularly elegant form when treated at the present level of generality, for systems of this kind.

Our goal in the present work is to describe how naturality, symmetric hyperbolicity, and well-posedness interact with one another in the special case of ``naturalized'' systems of symmetric hyperbolic equations that depend only on a Lorentzian metric to at most first order---i.e., to systems that are \emph{minimally coupled}, in a sense that we make precise below.  One of the most striking and important features of symmetric hyperbolicity is that algebraic conditions that can be checked at each point of a manifold turn out to be (almost) sufficient conditions for well-posedness.  More generally, for systems with constraints, algebraic conditions on first derivatives are sufficient for well-posedness.  As we will show, in the context of natural equations, these conditions themselves can be ``naturalized'', by defining certain tensor fields that are invariant under transformations that preserve certain specified background structure.  This procedure makes precise the idea that some minimally-coupled symmetric hyperbolic systems are ``intrinisically'' so, in the sense that they are symmetric hyperbolic without introducing any further structure beyond what is needed to define the equations themselves.  This construction will then be used to clarify the status of other conditions widely assumed for relativistic field theories---in particular, the dominant energy condition and covariant conservation of the stress--energy tensor. We go on to discuss constraints, including a novel discussion of integrability that becomes apparent in the natural equations context, and then we show how the machinery we develop implies existence and uniqueness, relative to choices of background structure.

\section{\label{sec:prelim} Natural Constructions}

We begin with preliminaries.  We have already introduced natural bundles. A zeroth-order \emph{natural operator} is a natural transformation between natural bundles, $\alpha:X\rightarrow Y$.  The category of natural bundles $\NB_{n}$ has natural bundles as objects and (zeroth-order) natural operators as arrows.\footnote{Standard presentations of this subject matter do not consider the category of natural bundles, or other categorical constructions we consider in this section.\cite{Kolar+etal}  But for reasons that we hope will come clear, we find it convenient to do so.}  This category has a terminal object $1_n$, which is the natural bundle $1_n:\mathcal{M}_n\rightarrow \FB$ that takes every $n$-manifold to itself with the identity as projection.  It also has finite products, which are fibered products in the category of fiber bundles: that is, $X\times Y$ is the natural bundle that takes $M$ to $XM\times_M YM\rightarrow M$.  As in any category, a (natural) subobject $Z$ of an object $X$ is a monomorphism $f:Z\rightarrow X$, defined up to isomorphism (in the slice category over $X$). The dual notion to subobject is that of a (natural) quotient object $Y$, which is an epimorphism $g:X\rightarrow Y$, likewise defined up to isomorphism (in the coslice category under $X$).  We will refer to subobjects and quotient objects of an object $X$ as pairs $(Z,f)$, where the second element is a representative of the relevant equivalence class and the first element is the domain or codomain of $f$, respectively (also up to isomorphism).

We are interested in natural constructions involving natural bundles. First, we introduce the category $\widehat{\NB_n}$ of functors from $\NB_n$ to itself, with natural transformations as arrows.  This category has several salient objects.  First, it has the identity functor, $id_{\NB_n}$, taking every natural bundle to itself.  (We use ``$id_{\NB_n}$'' rather than ``$1_{\NB_n}$'' because this functor is not terminal in $\widehat{\NB_n}$, and we will be considering the slice category over $id_{\NB_n}$ below.)  Second, it has an object $\hat{\pi}$, which takes every natural bundle $X$ to $1_n$, and takes every arrow to the (unique) arrow from $1_n$ to itself.  There exists a natural transformation $\pi:id_{\NB_n}\rightarrow \hat{\pi}$ whose factor at each natural bundle $X$ is the unique arrow $\pi_X:X\rightarrow 1_n$, the terminal object in $\NB_n$.  This is the natural opeation assigning bundles to their projections. Note that $\hat{\pi}$ is terminal in $\widehat{\NB_n}$.

We can now define a \emph{natural construction} on natural bundles to be an object in the slice category $\mathcal{NC}_n=\widehat{\NB_n}/id_{\NB_n}$, i.e., a pair consisting of a functor $C:\NB_n\rightarrow\NB_n$ and a natural transformation $\pi_C:C\rightarrow id_{\NB_n}$.  (At times, we suppress the natural transformation to $id_{\NB_n}$ when it is obvious from context.)  Natural constructions are constructions that act naturally (functorially) on (all) natural bundles to build new bundles over natural bundles. Many familiar examples are of this form: for instance, $T$, the functor that takes every bundle to its tangent bundle, is a natural construction; as are $T^*$, the functor that takes every bundle to its cotangent bundle; $F$, the frame bundle functor; and $J^k$, the jet functors. Natural constructions can be thought of as generalizations of natural bundles, in the following sense: there is a faithful functor from natural constructions back to $\NB_n$, defined by precomposition with $\hat{\pi}$.  That is, given any natural construction $(C,\pi_C)$, the functor $C\circ\hat{\pi}$ is a natural bundle, i.e., a functor from $n$-manifolds to fiber bundles.  (Not all natural bundles arise in this way, so the functor is not essentially surjective. It is also not full, since some natural constructions, e.g., the jet functors, and the vertical tangent functor, make non-trivial use of the full bundle structure of a natural bundle.)

Epic natural transformations between natural constructions encode the projections associated with natural constructions: that is, generically, natural constructions will produce new natural bundles over $n$-manifolds, but these will also be bundles over the total spaces of the bundle at each manifold (or other bundles).  For instance, there are natural transformation $\pi^{(k,\ell)}:J^k\rightarrow J^{\ell}$, $k\geq \ell$ whose components are the canonical projections $J^kXM\xrightarrow{\pi^{(k,\ell)}_{XM}} J^{\ell}XM$ for every $X\in\mathrm{ob}(\NB_b)$ and $M\in\mathrm{ob}(\mathcal{M}_n)$.  Similarly, there are natural transformations taking $J^k$ into $J^1\circ\cdots\circ J^1$ ($k$ copies), embedding $k$th derivatives into the first derivatives of first derivatives, $k$ times.  Similar constructions can be used to define tangent and cotangent bundles to total spaces of natural bundles, vertical bundles, and so on.  These are also related by natural transformations. For instance, given the natural constructions $(T,\pi_T)$ and $(V,\pi_T)$, there is a natural transformation $\imath_V:V\rightarrow T$ in $\widehat{\NB_n}$ embedding the vertical space into the tangent space.

For future convenience, we now note a particularly useful class of natural constructions on natural bundles that arise via a pullback construction. Consider any $X\in\mathrm{ob}(\NB_n)$, let $\hat{X}$ be the object in $\widehat{\NB_n}$ which is the constant functor onto $X$, and consider the unique maps from $id_{\NB_n}$ and $\hat{X}$ to $\hat{\pi}$, respectively.  This diagram has a pullback, i.e., an object and pair of maps defined by the standard universal property for pullbacks.  We denote the object as $X\times$, with associated maps as in the diagram below.
\begin{center}
\begin{tikzcd}
X\times \arrow[r] \arrow[d, "\pi^*"]
& \hat{X} \arrow[d] \\
id_{\NB_n} \arrow[r,"\pi"] & \hat{\pi} \\
\end{tikzcd}
\end{center}
In this case, note that since $\hat{\pi}$ is terminal, $X\times$ is just the ordinary product $\hat{X}\times id_{\NB_n}$ in $\widehat{\NB_n}$. Thus we find that $(X\times,\pi^*)$ is a natural construction, which we call the pullback of $\hat{X}$ along $\pi$, or the product with $X$.  This presentation is abstract, and so it is helpful to illustrate how it works more concretely.  Observe that the objects in this diagram are functors from $\NB_n$ to itself, and the arrows are natural transformations between those functors.  Thus, for any $Y\in\mathrm{ob}(\NB_n)$, this diagram gives rise to a different diagram, involving the factors of these natural transformations at $Y$:
\begin{center}
\begin{tikzcd}
X\times Y \arrow[r] \arrow[d, "\pi^*_Y\pi_X"]
& X \arrow[d] \\
Y \arrow[r,"\pi_Y"] & 1_n \\
\end{tikzcd}
\end{center}
But now observe that, since $1_n$ is terminal in $\NB_n$ and $\NB_n$ has finite products, $X\times Y$ is indeed just the ordinary product, or pullback of $X$ along $\pi_Y$, in $\NB_n$. So, summarizing, we see that $(X\times,\pi^*)$ will be the natural construction that takes any natural bundle $Y$ to the (fibered) product bundle $X\times Y$, regarded as a bundle over $Y$ (at each $M\in\mathrm{ob}(\mathcal{M}_n)$). This allows us to pull natural bundles back along other natural bundles, and thus to let us treat bundles over the base manifold as living over the total space of other natural bundles.

Now consider the subcategory $\NVB_n$ of $\NB_n$ consisting of natural vector bundles over $n$-manifolds, whose arrows are natural transformations that are fiberwise linear at each component.  In this case, we have additional natural constructions acting just on this subcategory, which reflect standard methods for relating vector spaces.  For instance, there is a contravariant functor $*:\NVB_n\rightarrow \NVB_n$ that takes every natural vector bundle to a ``dual'' bundle, whose typical fibers are the dual spaces to the fibers of the domain natural bundle; so, for instance, the dual bundle to the tangent bundle $T$ is the cotangent bundle, $T^*$.  Note that the dual bundle to the bundle $1_n$, $0_n$, is an initial object for $\NVB_n$; note also that $0_n$ and $1_n$ are (uniquely) isomorphic, i.e., $\NVB_n$ has a zero object.  As with the category of vector spaces, we find $*\circ *$ is naturally isomorphic to the identity functor on $\NVB_n$.  Likewise, we can define a (covariant) bifunctor $\otimes:\NVB_n\times\NVB_n\rightarrow \NVB_n$ that takes pairs of natural bundles to a natural bundles whose fibers are tensor products of the fibers of the domain.  So, $T\otimes T$ would be the natural bundle of rank $(2,0)$ tensors over $n$-manifolds.  We can also consider symmetrized and antisymmetrized tensor products of tensor bundles with themselves, which we represent with the notation $S^k$ and $\wedge^k$, respectively, for $k$-fold products.

Given the analogies to vectors spaces, it is tempting to think of the dual to a natural vector bundle as consisting of some analog of ``linear functionals'' acting on ``elements'' of the natural vector bundle.  But while that is true at each point of each object of $\mathcal{M}_n$, there is in general no natural way to identify those elements for the entire natural bundle.  Likewise, we cannot think of natural tensors bundles as acting on the natural vector bundles out of which they are built.  Even so, some facts about vector spaces do carry over, because the zero section is always natural.  In particular, given a natural vector bundle morphism $f:V\rightarrow W$, which of course is a zeroth order natural operator between $V$ and $W$ whose factors are all linear bundle morphisms, we can define a new natural vector bundle $\ker{f}$, whose fibers at each point of each $M\in\mathrm{ob}{\mathcal{M}_n}$ is the kernel of $f_M$ at that point.  This kernel object can be defined by a universal property: it will be the unique subobject $(\ker{f},i)$ such that $f\circ i$ factors through $0_n$, and given any other subobject $(U,j)$ such that $f\circ j$ factors through $0_n$, $U$ will be a subobject of $\ker(f)$ such that all relevant diagrams commute. If $f$ happens to be an epimorphism, then $(W,f)$ will be a quotient object that we may also denote $V/\ker(f)$.  Thus, natural linear operators give rise to new natural bundles.  Now, suppose we are given a subobject $(W,f)$ of a natural vector bundle $V$. Then, applying the $*$ functor, we find a quotient object $(W^*,f^*)$ of $V^*$.  At each point of each $n$-manifold $M$, we can think of this quotient object as having fibers consisting of all covectors acting on the fibers of $VM$ at that point, restricted to $W$.  Thus, $\ker{f^*}$ will be the natural bundle whose fibers are those covectors that annihilate the fibers of $W$.  It follows that every natural vector bundle subobject defines a subobject of the dual bundle, consisting of those covectors that annihilate that subobject. We have the following short exact sequence: \[ 0_n\rightarrow \ker{f^*}\xrightarrow{i}V^*\xrightarrow{f^*} W^*\rightarrow 0_n\]  We will call $\ker{f^*}\rightarrow V^*$ the \emph{cokernel} of $f:W\rightarrow V$.

The constructions described thus far all concern vector bundles over a base $n$-manifold.  But many of the bundles we will be interested in will be vector bundles defined, naturally, over the total space of a natural bundle.  These bundles will not generally be vector bundles over $M$, and so they are not objects of $\NVB_n$.  How can we think of them?

The basic idea will be to combine the constructions given so far, and consider the subcategory $\mathcal{NVC}_n$ of $\mathcal{NC}_n$ consisting of \emph{natural vectorial constructions} on natural bundles over $n$-manifolds. (Note that $\mathcal{NVC}_n$ is \emph{not} the same as the slice category $\widehat{\NVB_n}/id_{\NVB_n}$.) Objects of this category will be pairs consisting of functors $V:\NB_n\rightarrow\NB_n$ and epic natural transformations $\pi:V\rightarrow id_{\NB_n}$, which together have the property that at every natural bundle $X$ and every $n$-manifold $M$, the factor $\pi_{VXM}:VXM\rightarrow XM$ defines a vector bundle over $XM$; and whose arrows are natural transformations which, at each component $X$ and each $M\in\mathrm{ob}(\mathcal{M}_n)$, are fiberwise linear on the bundle $VXM\overset{\pi_{XM}}{\rightarrow} XM$.  This category behaves very similarly to the category $\NVB_n$ we have already discussed.  Once again, there is a zero object---the functor $1_n\times$, or equivalently $id_{\NB_n}$----and the identity natural transformation.  There is also a functor $*$ taking vectorial constructions to their duals, and a bifunctor $\otimes$ taking pairs of vectorial constructions to their tensor product, which is another vectorial construction.  Similarly, our constructions for symmetric tensors, antisymmetric tensors, the kernel, and cokernel carry over.  In addition, products with natural vector bundles are natural vectorial constructions.

We can also consider \emph{natural affine constructions}, which are defined analogously to natural vectorial constructions. The key example here will be first jet bundles. Every natural affine construction determines a natural vectorial construction, whose fibers at each point of each manifold consist of the vector space on which the corresponding affine space at that point of that manifold is modeled. For the natural affine construction $J^1$, the corresponding natural vectorial construction will be denoted $K^1$. More generally, the $k$th jet bundle $J^k$ is a natural affine construction on the subcategory of $\NB_n$ which is the image of the $k-1$st jet functor $J^{k-1}:\FB_n\rightarrow\FB_n$; likewise, $K^k$ is the corresponding natural vectorial construction, defined on the same subcategory.

Finally, we consider special constructions.  Fix a natural bundle $X$, and consider the slice category $\NB_n/X$.  This will consists of all morphisms in $\NB_n$ with codomain $X$ (or better, pairs $(Z,f)$, where $f:Z\rightarrow X$).  This slice category will include all of the natural constructions $(B,\pi_B)$ on natural bundles, at the bundle $X$, as pairs $(BX,\pi_{BX})$.  It will also include all subobjects of $X$, and subobjects of natural constructions at $X$, since the composition of monomorphisms into those with their projections to $X$ will be arrows to $X$.  Objects of this slice category will be called natural \emph{over $X$}.  We can consider vectorial constructions over $X$ and affine constructions over $X$, with all of the general claims from above carrying over.  But we will also find natural bundles that are natural over other natural bundles, but which do not result from specializing natural bundle constructions.

\section{\label{Mat+MC} Naturalization and Minimal Coupling}

We now turn back to systems of equations, with the general remarks from the previous section in mind. We begin by introducing ``naturalization''. (Here, we closely follow \citet{MarchWeatherallnaturaltheories}.)  Consider the vacuum Maxwell's equations. Note that the bundle of two-forms $F:\mathcal{M}_n\rightarrow \FB$ is a natural bundle. Given a relativistic spacetime $(M, g_{ab})$, the source-free Maxwell equations determined by $g_{ab}$ and its associated Levi-Civita connection $\nabla$ is a (vector) subbundle $\check{E}_{Max}$ of $J^1FM\rightarrow FM\rightarrow M$.\footnote{We will use $\check{\cdot}$ to indicate bundles are \emph{not} natural, and generally reserve unadorned capital Latin letters for natural bundles and natural constructions.} But $\check{E}_{Max}$ is not preserved by arbitrary diffeomorphisms (or rather, their lifts to $J^1FM$); it is straightforward to see that if $\varphi:M\rightarrow M$ is a diffeomorphism, then $J^1F\varphi[\check{E}_{Max}]=\check{E}_{Max}$ iff $\varphi_*(g_{ab})=g_{ab}$. Hence, Maxwell's equations cannot be natural.

The example reveals that the obstacle to naturality is a familiar one, related to observations due to \citet{Kretschmann}. At issue is that Maxwell's equations are typically defined relative to a some \emph{fixed} Lorentzian metric, whereas they are ``generally covariant'' only if one considers arbitrary metrics and allows diffeomorphisms to act on the metric as well as the dynamical fields.  (This distinction will be at the core of our discussion of symmetric hyperbolicity.)  We address this issue, and thereby restore naturality, by building background structure, like the Lorentzian metric relative to which Maxwell's equations are defined, into the bundle of fields.

We now proceed to present this idea more formally.

\subsection{Naturalization}

Naturalization makes use of quotient objects in $\NB_n$.  To begin, we remark on a special case of quotient objects.  Suppose that $X=Y\times Z$ is a product in $\NB_n$, i.e., there exist natural bundles $Y$, $Z$ and natural transformations $y:X\rightarrow Y$, $z:X\rightarrow Z$ such that the following diagram is a pullback in $\NB_n$:
\begin{figure}[h]
\label{fig:fiberedproductsofnaturalbundlesareproducts}
    \begin{center}
    \begin{tikzcd} X \arrow[dd, "z"] \arrow[rr, "y"] & & Y \arrow[dd, "\pi_Y"] \\ & & \\
            Z \arrow[rr, "\pi_Z"] & & 1_n
    \end{tikzcd}
\end{center}
\end{figure}

\noindent Then $(Y,y)$ and $(Z,z)$ are both quotient objects of $X$ in $\NB_n$.  We will call quotient objects in $\NB_n$ that arise from a pullback in this way \emph{natural quotient bundles}.

Let $(Y,y)$ and $(Z,z)$ be natural quotient bundles as above.  Then for any $M\in\mathrm{ob}(\mathcal{M}_n)$ and any section $\psi:M\rightarrow YM$, we can define a new bundle $XM_{|\psi}$ over $M$, with total space $XM_{|\psi}=y_M^{-1}[\psi[M]]$, and projection map $\pi_{XM|\psi}$ which is the restriction of $\pi_{XM}$ to $XM_{|\psi}$.  Then the bundle $XM_{|\psi}$ is (canonically) isomorphic (as bundles) to $ZM$.  To see this, note that the following diagram is a pullback in $\FB$, where $i:XM_{|\psi}\rightarrow XM$ denotes the subspace embedding:
\begin{figure}[h]\label{fig:pullbackaquotientbyasection}
    \begin{center}
    \begin{tikzcd} XM_{|\psi} \arrow[dd, "(\pi_{|\psi}{,}\,\mathrm{id}_M)"] \arrow[rr, "(i{,}\,\mathrm{id}_M)"] & & XM \arrow[dd, "y_M"] \\ & & \\
            1_nM \arrow[rr, "(\psi{,}\,\mathrm{id}_M)"] & & YM
    \end{tikzcd} %\caption{A pullback diagram.}
    \end{center}
\end{figure}

\noindent We can then define a map $z_{\psi}:ZM\rightarrow XM$ by $u\mapsto (u,\psi(\pi_Z(u)))$ for all $u\in ZM$, where $\pi_Z:ZM\rightarrow M$ is the projection for $ZM\rightarrow M$, and we invoke the universal property of the pullback map to get the required isomorphism. Of course, this construction can be applied symmetrically for sections of $Z$.

Now let $\check{B}\overset{\pi}{\rightarrow} M$ be a fiber bundle and $\check{E}\subset J^k\check{B}$ a closed submanifold thereof, i.e., a $k$th-order (non-natural) equation on $\check{B}$. An $\ell$th-order, $\ell\geq k$, \emph{naturalization} of $\check{E}$ consists of the following data: a pair of natural bundles $B$ and $Y$ and an $\ell$th-order natural equation $(E,e)$ on $X=B\times Y$ such that there exists a section $\psi:M\rightarrow YM$ and a bundle isomorphism $(\Phi,\varphi):\check{B}\rightarrow XM_{|\psi}$ satisfying $j^\ell\Phi[(\pi^{(\ell,k)}_{\check{B}}){}^{-1}[\check{E}]]=EM_{|\psi}$. Here $EM_{|\psi}$ is the restriction of $EM$ to $J^\ell XM_{|\psi}$, where $J^\ell XM_{|\psi}$ is defined, analogously to $XM_{|\psi}$, as a pullback in $\FB$:
\begin{figure}[h]\label{fig:pullbackaquotientbyasection}
    \begin{center}
    \begin{tikzcd}
    J^\ell XM_{|\psi} \arrow[dd, "J^\ell(\pi_{|\psi}{,}\,\mathrm{id}_M)"] \arrow[rr, "J^\ell( i{,}\,\mathrm{id}_M)"] & & J^\ell XM \arrow[dd, "J^\ell y_M"]\\
    & & \\
            1_nM \arrow[rr, "J^\ell(\psi{,}\,\mathrm{id}_M)"] & & J^\ell YM
    \end{tikzcd} %\caption{A pullback diagram.}
    \end{center}
\end{figure}
Note that in general, $EM_{|\psi}$ will not be a pullback in $\FB_n$, since the restriction of $EM$ to $J^\ell XM_{|\psi}$ may not be a fibre bundle. However, in some special cases, it will be. We return to this below.

In other words, a naturalization of an equation $\check{E}$ on a bundle $\check{B}\rightarrow{M}$ consists of a natural equation $E$ on a natural bundle $X$, along with a natural quotient bundle $(Y,y)$ of $X$, such that $\check{E}$ arises as $EM$ relative to some section of $YM$. (If $\ell>k$, it gives rise to the $\ell$th-order equation obtained from $\check{E}$ via the pullback with respect to the projection map $\pi^{(\ell,k)}_{\check{B}}$.) It follows that for any $M\in\mathrm{ob}(\mathcal{M}_n)$, different sections $\psi$ of $YM$ give rise to different (but related) equations on $BM$ (including $E$). We can think of $Y$ as encoding (natural) background structure on which the equation $\check{E}$ depends,\footnote{This idea generalizes a proposal about background structure for equations due to Geroch to the natural bundle context \citep[\S6]{Geroch1996PDEs}.  Note that Geroch imposes additional conditions on his quotient bundles, because he is contemplating the case where equations on what we call $X$ induce non-trivial equations on $Y$, as well as on $X_{|\psi}$, and he wants those equations to be well-behaved.  For present purposes, those restrictions are unnecessary, but could be imposed if needed for a particular application.} and a naturalization as a choice of how to vary $\check{E}$ as we vary choices of the background structure, including across different $n$-manifolds, under the condition that the construction must be natural. We said above that naturalization was similar to Kretschmanization, which we can now make precise: essentially, naturalization involves identifying structure necessary to define $\check{E}$ as ``generally covariant", and then reconstruing the equation accordingly.  There are generally choices involved in constructing naturalizations, and so one should not expect equations to have unique naturalizations (though, as we will show below, these may exist in the presence of additional assumptions).

Let $\check{B}$ and $\check{E}$ be as above. We will say $\check{E}$ depends only on the background structure $Y$, to at most $\ell$th order, $\ell\geq k$, if there exists an $\ell$th-order naturalization $(E,e)$ of $\check{E}$ over $X=B\times Y$ such that, for any $M\in\mathrm{ob}(\mathcal{M}_n)$, any sections $\psi, \chi:M\rightarrow YM$, and any $p\in M$, the condition $j^{\ell}_p\psi=j^{\ell}_p\chi$ implies $EM_{|\psi}\cap \pi_{J^{\ell}XM}^{-1}[p]=EM_{|\chi}\cap \pi_{J^{\ell}XM}^{-1}[p]$. (Note that we could have equivalently considered $EM_{|\psi}\cap \pi_{J^{\ell}BM}^{-1}[p]=EM_{|\chi}\cap \pi_{J^{\ell}BM}^{-1}[p]$ in this definition.)

The condition that there exists an $\ell$th-order naturalization captures the requirement that the equation $E$ explicitly depends on only $Y$ to $\ell$th order.\footnote{Of course, it might depend on less than this---in general, one can always construct further naturalizations by building in additional fields to the bundle $Y$ and its quotients or by pulling back $(E,e)$ to a higher-order jet bundle.}  The second condition, meanwhile, says that this explicit $\ell$th-order dependence on $Y$ exhausts the order of dependence on $Y$ in the equation $\check{E}$, in the sense that for any $M\in\mathrm{ob}(\mathcal{M}_n)$, sections of $YM$ which are $\ell$-equivalent at some $p\in M$ give rise to equations which agree at $p$. To get a handle on how these conditions can come apart, consider, e.g., the natural bundle $X:\bar{\mathcal{M}}_n\rightarrow\FB$ whose fibers are manifolds of pairs $(\nabla, g_{ab})$ of torsion-free affine connections and Lorentzian metrics.  (Here $\bar{\mathcal{M}}_n$ is the category of $n$-manifolds admitting a Lorentzian metric.) The bundle $G$ of Lorentzian metrics is a natural quotient of this bundle. But in general, given a first-order natural equation $(E,e)$ on $X$ and some $M\in\mathrm{ob}(\bar{\mathcal{M}}_n)$, sections $\psi, \chi:M\rightarrow GM$ satisfying the condition $j^1_p\psi=j^1_p\chi$ at some $p\in M$ will not give rise to equations which satisfy $EM_{|\psi}\cap \pi_{J^1XM}^{-1}(p)=EM_{|\chi}\cap \pi_{J^1XM}^{-1}(p)$, where $Q$ is the natural bundle whose fibers over each $M$ are torsion-free affine connections on $M$.   (To see this, consider the equation imposing metric compatibility and Einstein's equation, or the equation imposing metric compatibility and that $R^a{}_{bcd}=\mathbf{0}$.)

Now we return to Maxwell's equations.  (We will work in four spacetime dimensions for convenience.) Let $F$ be the natural bundle of $2$-forms (over $4$ manifolds) and let $G$ be the natural bundle of Lorentzian metrics on $4$ manifolds, described above. Then $X=F\times G$ is the natural bundle whose fibers at each 4-manifold are manifolds of pairs $(F_{ab},g_{ab})$, where $F_{ab}$ is a two-form and $g_{ab}$ is a Lorentzian metric.  Then $(G,g)$ is a natural quotient of $X$ that takes pairs $(F_{ab},g_{ab})$ to metrics $g_{ab}$, and our original spacetime corresponds to a section $\psi:M\rightarrow GM$. Then we can define a first-order, natural equation $(E_{Max},e)$ on $X$ by taking the source-free Maxwell's equation at every point relative to the metric and its Levi-Civita derivative operator at that point. (This makes sense, since the Levi-Civita derivative operator is the unique first-order natural operator that takes $G$ to $Q$.) The result is a naturalization of Maxwell's equations, $\check{E}_{Max}$.  Finally, observe that for any $M\in\mathrm{ob}(\bar{\mathcal{M}}_n)$, any sections $\psi, \chi:M\rightarrow GM$, and any $p\in M$, the condition $j^1_p\psi=j^1_p\chi$ implies $E_{Max}M_{|\psi}\cap \pi_{J^1FM}^{-1}[p]=E_{Max}M_{|\chi}\cap \pi_{J^1FM}^{-1}[p]$.  Thus we conclude that Maxwell's equations depend only on the metric, to at most first-order. This observation reflects the familiar fact that if two metrics agree at a point, to first order, then any two-form satisfying Maxwell's equation at that point relative to the first metric also satisfies it relative to the second, and vice-versa.

\subsection{Minimal Coupling}

We now introduce minimal coupling.\footnote{This definition follows \citet{MarchWeatherallnaturaltheories}, though the uniqueness theorem below does not appear there.  See that paper for a defense of this definition as capturing the ``informal'' notion of minimal coupling discussed elsewhere in the literature, e.g. as a `comma goes to semicolon' rule. \citep{Misneretal,Trautman1965,Readetal2018,LinnemannReadcurvaturecoupling2021}} Once again, let $G:\bar{\mathcal{M}}_n\rightarrow\FB$ be the natural bundle of Lorentzian metrics over $n$-manifolds (that admit them). Let $(E,e)$ be a first-order natural equation on a natural bundle $X$, and suppose that $(G,g)$ is a natural quotient bundle of $X=B\times G$, i.e., a quotient that arises from a pullback as in Fig. \ref{fig:fiberedproductsofnaturalbundlesareproducts}. Then we say $(E, e)$ is \emph{minimally coupled} if, for any $M\in\mathrm{ob}(\bar{\mathcal{M}}_n)$, any sections $\psi, \chi:M\rightarrow GM$, and any $p\in M$, the condition $j^1_p\psi=j^1_p\chi$ implies $EM_{|\psi}\cap \pi_{J^1XM}^{-1}[p]=EM_{|\chi}\cap \pi_{J^1XM}^{-1}[p]$. In other words, a natural equation is minimally coupled if it depends only on a Lorentzian metric, to at most first order.\footnote{Note that this definition of minimal coupling focuses on the ``matter sector'', and does not require (or even consider) any gravitational dynamics.  Note, too, that Einstein's equation is not minimally coupled, since it depends on the metric to second order.  This leads to a slight oddity relative to some uses of the expression ``minimal coupling'', which is that the system consisting of Einstein's equation and some minimally coupled matter equations is not, as a whole, minimally coupled on our definition, since the Einstein part has greater than first-order dependence.}

Similarly, let $\check{E}\subset J^1\check{B}\rightarrow \check{B}\overset{\pi}{\rightarrow}M$ be a first-order equation on some bundle $\check{B}$. A \emph{minimal coupling} of $\check{E}$ consists of a first-order naturalization $(E, e)$ of $E$ over $X = B\times G$, such that for any $M\in\mathrm{ob}(\bar{\mathcal{M}}_n)$, any sections $\psi, \chi:M\rightarrow GM$, and any $p\in M$ the condition $j^1_p\psi=j^1_p\chi$ implies $EM_{|\psi}\cap \pi_{J^1XM}^{-1}[p]=EM_{|\chi}\cap \pi_{J^1XM}^{-1}[p]$.  That is, a minimal coupling is a naturalization that is minimally coupled.

Minimal coupling is often thought of as a `prescription' for constructing general-relativistic systems of equations from special-relativistic ones.  How does this work? Suppose that we have some natural bundle $Y:\mathcal{M}_4\rightarrow \FB$ and let $(M,\eta_{ab})$ be Minkowski spacetime.  Now consider a first-order ``special-relativistic'' equation $\check{E}$ on $YM$. (What do we mean by this?  More precisely: let $\check{E}$ be a first-order equation on $YM$ that admits a minimal coupling $(E,e)$ over $X=Y\times G$, such that $\check{E}=E_{|\eta_{ab}}$.)  Now observe that any minimal coupling of $\check{E}$ will induce a minimally-coupled natural equation, which, by construction is defined over any $M\in\mathrm{ob}(\bar{\mathcal{M}}_n)$ and any section $\psi:M\rightarrow GM$. That is, a minimal coupling will generate a general-relativistic system from a special-relativistic one.

One often finds arguments in the literature that minimal coupling is ambiguous, in the sense that a given special relativistic theory may admit many, inequivalent minimal couplings.  We will now show that that is not true for minimal couplings in the present sense: that is, if a system of equations admits a minimal coupling, then it is unique, up to a natural notion of isomorphism. % It will follow that the only `ambiguity' involved in thinking of minimal coupling as a `prescription' for constructing general-relativistic systems of equations from special-relativistic ones is in the choice of the the bundle $X$---essentially, in the global topology of $X$, which must be locally a fibered product---and so is unique insofar as one has in mind a `canonical' such choice (e.g.~when $B=YM$ for some natural bundle $Y$, the bundle $X=Y\times G$, given the bundles $Y$ and $G$).
The proof relies on the following result due to \citet{FletcherWeatherallpart1} (which we have translated into the notation used in this paper):\footnote{Actually, and as \citet{FletcherWeatherallpart1} show, something stronger is true: given curve (segments) $\gamma:I\rightarrow M$, $\gamma':I'\rightarrow M'$ whose images contain $p$, $p'$ respectively, one can choose $\varphi$, $O$, and an open neighbourhood $O'$ of $p'$ such that, in addition, $\varphi^*(\gamma'\cap O')=\gamma\cap O$, and $G\varphi(\psi')(x)=\psi(x)$, and $j^1_{x}G\varphi(\psi')=j^1_{x}\psi$ for all $x\in\gamma[I]\cap O$. We will not need this in what follows, however.}
\begin{thm}[Fletcher and Weatherall, 2023, Theorem 5]\label{thm:fletcherweatherall} Let $M, M'\in\mathrm{ob}(\bar{\mathcal{M}}_n)$. Then for any sections $\psi:M\rightarrow GM$, $\psi': M'\rightarrow GM'$, and any $p\in M$, $p'\in M'$, there exist open neighbourhoods $O,O'$ of $p,p'$ and a diffeomorphism $\varphi:O\rightarrow O'$ such that $\varphi(p)=p'$ and $j^1_{p'}(G\varphi(\psi))=j^1_{p'}\psi'$.
\end{thm}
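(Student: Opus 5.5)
The plan is to reduce both metrics, at their respective points, to a single normal form for the $1$-jet: components $\eta_{\mu\nu}=\mathrm{diag}(-1,1,\dots,1)$ with vanishing first partial derivatives. Then $\varphi$ will be the composite of the two normalizing charts. Throughout, write $g=\psi$ and $g'=\psi'$ for the two Lorentzian metrics.

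\emph{Step 1 (zeroth order).} Pick any chart $x^\mu$ on a neighbourhood of $p$ with $x(p)=0$. By Sylvester's law of inertia, since $g_{\mu\nu}(p)$ has Lorentzian signature, there is a constant matrix $A\in GL(n,\mathbb{R})$ such that after the linear change $x\mapsto Ax$ we have $g_{\mu\nu}(p)=\eta_{\mu\nu}$.

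\emph{Step 2 (first order).} In these coordinates, let $\Gamma^\mu{}_{\nu\rho}$ denote the Christoffel symbols of the Levi-Civita connection of $g$ at $p$; they are symmetric in $\nu\rho$. Define new coordinates
\[ y^\mu = x^\mu + \tfrac12\,\Gamma^\mu{}_{\nu\rho}\,x^\nu x^\rho . \]
The Jacobian of $x\mapsto y$ at $0$ is the identity, so by the inverse function theorem this is a chart on a smaller neighbourhood of $p$, and $g_{\mu\nu}(p)=\eta_{\mu\nu}$ still holds. Under the change of coordinates, the Christoffel symbols at $p$ transform by the inhomogeneous term $-\partial^2 y^\mu/\partial x^\nu\partial x^\rho=-\Gamma^\mu{}_{\nu\rho}$, so they vanish in the $y$ chart. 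Torsion-freeness and metric compatibility give
\[ \partial_\sigma g_{\mu\nu} = g_{\lambda\nu}\Gamma^\lambda{}_{\mu\sigma} + g_{\mu\lambda}\Gamma^\lambda{}_{\nu\sigma}, \]
hence $\partial_\sigma g_{\mu\nu}(p)=0$ in the $y$ chart. (Equivalently, one could take Riemann normal coordinates via the exponential map.)

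\emph{Step 3 (the other point).} Do the same at $p'$ for $\psi'$, producing a chart $y'$ with $y'(p')=0$, $g'_{\mu\nu}(p')=\eta_{\mu\nu}$ and $\partial_\sigma g'_{\mu\nu}(p')=0$. Both manifolds have dimension $n$, so both charts take values in $\mathbb{R}^n$.

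\emph{Step 4 (assembly).} Choose an open ball $B$ about $0$ contained in both chart images, and set $O=y^{-1}[B]$, $O'=y'^{-1}[B]$ and $\varphi=y'^{-1}\circ y:O\to O'$. This is a diffeomorphism with $\varphi(p)=p'$. In the chart $y'$, the components of the pushforward $G\varphi(\psi)$ at $\varphi(q)$ equal the components of $g$ in the chart $y$ at $q$. Therefore both $G\varphi(\psi)$ and $\psi'$ have components $\eta_{\mu\nu}$ at $p'$, with vanishing first derivatives there. Since the $1$-jet of a section at a point is determined by its components and their first partial derivatives in a chart, $j^1_{p'}(G\varphi(\psi))=j^1_{p'}\psi'$.

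The only step requiring genuine care is Step 2. There one must check that the quadratic coordinate change exactly cancels the first derivatives of the metric. This rests on the transformation law for Christoffel symbols together with their symmetry, which is where torsion-freeness of the Levi-Civita connection is used. The remaining steps are linear algebra and the bookkeeping of shrinking neighbourhoods so that the charts compose.
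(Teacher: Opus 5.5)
Your argument is correct. You normalize each metric's $1$-jet to $(\eta_{\mu\nu},0)$ using a Sylvester change of basis followed by the quadratic correction $y^\mu=x^\mu+\tfrac12\Gamma^\mu{}_{\nu\rho}x^\nu x^\rho$, and then set $\varphi=y'^{-1}\circ y$; this gives exactly the required diffeomorphism. The paper does not prove this statement itself, since it imports it from Fletcher and Weatherall. However, the construction it uses in the proof of Theorem~\ref{thm:itsabundle}, namely $\exp_p\circ\chi_p\circ\chi_q^{-1}\circ\exp_q^{-1}$, is the same idea: it composes two Riemann normal charts obtained from the exponential map and an orthonormal frame, where you instead build the normal chart by hand (as you note yourself).
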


We begin with several lemmas. The first result will be relevant to much of what comes later:
\begin{lem}\label{lem:(i)}
    Let $(E, e)$ be a first-order minimally coupled natural equation on $X=B\times G$. Then for any sections $\psi:M\rightarrow GM$, $\psi':M'\rightarrow GM'$ and any $p\in M$, $p'\in M'$, there exist open neighborhoods $O$, $O'$ of $p$, $p'$ and a diffeomorphism $\varphi:O\rightarrow O'$ such that $j^1_{p'}(G\varphi(\psi))=j^1_{p'}(\psi')$ and $J^1X\varphi[ EM_{|\psi}\cap \pi_{J^1XM}^{-1}[p]]=EM'_{|\psi'}\cap \pi_{J^1XM'}^{-1}[p']$.
\end{lem}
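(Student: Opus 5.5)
We use Theorem \ref{thm:fletcherweatherall} to obtain the local diffeomorphism, and then naturality and minimal coupling to transport the equation. Fix $\psi,\psi',p,p'$ as in the statement. Theorem \ref{thm:fletcherweatherall} gives open neighborhoods $O\ni p$, $O'\ni p'$ and a diffeomorphism $\varphi:O\rightarrow O'$ with $\varphi(p)=p'$ and $j^1_{p'}(G\varphi(\psi))=j^1_{p'}\psi'$. Regard $O$ and $O'$ as objects of $\bar{\mathcal{M}}_n$, with inclusions $\iota:O\hookrightarrow M$ and $\iota':O'\hookrightarrow M'$ as arrows. Since $(E,e)$ is a natural equation, $e$ is natural: for every arrow $f$ of $\bar{\mathcal{M}}_n$,
\[ J^1Xf\circ e_{\mathrm{dom} f}=e_{\mathrm{cod} f}\circ Ef . \]
Because each $Ef$ covers $f$ and restricts to diffeomorphisms on fibers, this gives the pointwise identity
\[ J^1Xf[EN\cap\pi^{-1}_{J^1XN}[x]]=EN'\cap \pi^{-1}_{J^1XN'}[f(x)] \]
for any $f:N\rightarrow N'$ and $x\in N$. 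Apply this to $\iota$, $\iota'$ and $\varphi$. We conclude that $J^1X\varphi$ carries the fiber of $EM$ over $p$ (identified with that of $EO$) bijectively onto the fiber of $EM'$ over $p'$.

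Next we track the restriction to $\psi$. The map $X\varphi=B\varphi\times G\varphi$ respects the product decomposition, since the projection $g$ is a natural transformation. So $J^1X\varphi$ sends $J^1XO_{|\psi}$ onto $J^1XO'_{|G\varphi(\psi)}$. Hence
\[ J^1X\varphi[EM_{|\psi}\cap\pi^{-1}_{J^1XM}[p]]=EM'_{|G\varphi(\psi)}\cap \pi^{-1}_{J^1XM'}[p'], \]
where $G\varphi(\psi)$ is a section of $GO'$. To compare it with $\psi'$, which lives on all of $M'$, we use that the minimal-coupling condition only involves the fiber over $p'$. We either extend $G\varphi(\psi)$ by a cutoff to a global Lorentzian metric agreeing with it near $p'$, or note that, by naturality of $e$ under $\iota'$, minimal coupling on $M'$ also yields the condition for sections defined on $O'$.

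Finally, since $j^1_{p'}(G\varphi(\psi))=j^1_{p'}\psi'$, minimal coupling gives
\[ EM'_{|G\varphi(\psi)}\cap\pi^{-1}_{J^1XM'}[p']=EM'_{|\psi'}\cap\pi^{-1}_{J^1XM'}[p'], \]
which is the required identity.

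The main obstacle is this last step, handling the local versus global domains. A Lorentzian metric on $O'$ cannot in general be extended globally to $M'$. One can, however, patch it with $\psi'$ using a convex combination $\chi=\rho\, G\varphi(\psi)+(1-\rho)\psi'$, where $\rho$ is a bump function equal to $1$ near $p'$. The two metrics have the same $1$-jet at $p'$, so they are close near $p'$. Shrinking the support of $\rho$ to a small neighborhood where the two metrics are $C^0$-close keeps $\chi$ Lorentzian. Then $\chi$ agrees with $G\varphi(\psi)$ near $p'$, so the fibers over $p'$ coincide by naturality under the inclusion. It also satisfies $j^1_{p'}\chi=j^1_{p'}\psi'$, so minimal coupling applies on $M'$.
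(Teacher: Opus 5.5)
Your proof is correct and follows the paper's own argument step for step. The Fletcher--Weatherall theorem supplies $\varphi$, naturality of $e$ transports the fiber over $p$ to the fiber over $p'$, and the product structure carries the restriction to $\psi$ onto the restriction to $G\varphi(\psi)$; minimal coupling then replaces $G\varphi(\psi)$ by $\psi'$. You also handle the fact that $G\varphi(\psi)$ is defined only on $O'$, which the paper passes over, and your handling is sound. The bump-function patch is more than you need, though: $O'$ is itself an object of $\bar{\mathcal{M}}_n$, so minimal coupling applies there directly, and naturality under $O'\hookrightarrow M'$ transfers the conclusion to $M'$.
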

\begin{proof}
    By Theorem \ref{thm:fletcherweatherall}, for any $M, M'\in\mathrm{ob}(\bar{\mathcal{M}}_n)$,any sections $\psi:M\rightarrow GM$, $\psi': M'\rightarrow GM'$, and any $p\in M$, $p'\in M'$, there exist open neighborhoods $O$, $O'$ of $p$, $p'$ and a diffeomorphism $\varphi:O\rightarrow O'$ such that $j^1_{p'}(G\varphi(\psi))=j^1_{p'}(\psi')$. Moreover, since $(E,e)$ is natural, we know that $J^1X\varphi[e_M\circ EM]=e_{M'}\circ EM'$. But then $J^1X\varphi[e_M\circ EM_{|\psi}\cap \pi_{J^1XM}^{-1}[p]]=J^1X\varphi[EM_{|\psi}]\cap \pi_{J^1XM'}^{-1}[p']=EM'_{|J^1X\varphi(\psi)}\cap \pi_{J^1XM'}^{-1}[p']=EM'_{|G\varphi(\psi)}\cap \pi_{J^1XM'}^{-1}[p']=EN_{|\psi'}\cap \pi_{J^1XM'}^{-1}[p']$, where we have used that smooth bundle morphisms are fiber preserving in the first equality, that $J^1X\varphi[EM]=EM'$ in the second, that $X = Y\times G$ is a natural quotient in the third, and the final equality follows from the fact that $G\varphi(\psi)$ and $\psi'$ coincide to first order at $p'$. So $\varphi$ is a diffeomorphism satisfying the conditions of the proposition.
\end{proof}

\begin{lem}\label{lem:(ii)implies(iii)}
    Let $(Y, y)$ be a natural quotient bundle of some bundle $X$. Then for any $M\in\mathrm{ob}(\mathcal{M}_n)$ and any submanifolds $S, S'\subset J^1XM$, if for all sections $\psi: M\rightarrow YM$, $S_{|\psi}=S'_{|\psi}$, then $S=S'$.
\end{lem}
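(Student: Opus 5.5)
The plan is to show that the family of restrictions $\{J^1XM_{|\psi}\}$, with $\psi$ ranging over sections of $YM$, covers $J^1XM$. Once that is known, the lemma follows by a double-inclusion argument. Recall that $S_{|\psi}$ denotes $S\cap J^1XM_{|\psi}$, where $J^1XM_{|\psi}$ is the pullback of $J^1y_M:J^1XM\rightarrow J^1YM$ along $J^1(\psi,\mathrm{id}_M)$, i.e., the set of first jets $j^1_p\sigma$ with $j^1_p(y_M\circ\sigma)=j^1_p\psi$. So it suffices to prove the following claim: for every point $\xi\in J^1XM$ there is a section $\psi:M\rightarrow YM$ with $\xi\in J^1XM_{|\psi}$.

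To prove the claim, fix $\xi\in J^1XM$ lying over $p\in M$. First I choose a local section $\sigma$ of $XM$ defined near $p$ with $j^1_p\sigma=\xi$; such a section always exists, by taking an affine-in-coordinates local section in a bundle chart. Then $\eta=j^1_p(y_M\circ\sigma)\in J^1YM$ is a first jet of a local section of $YM$ at $p$. The main step is to extend this germ to a global section $\psi:M\rightarrow YM$ with $j^1_p\psi=\eta$. Since $X=Y\times Z$ is a fibered product, $J^1XM\cong J^1YM\times_M J^1ZM$. This decomposition is actually not needed: the definition of $J^1XM_{|\psi}$ only involves $J^1y_M$, so the only requirement is $j^1_p\psi=\eta$.

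The extension is where I expect the only real obstacle. It requires that $YM$ admits a global section through prescribed first-order data at a point. The plan is as follows. Take a local section $\tau$ realizing $\eta$ on a coordinate ball $U$ around $p$, and a global section $\psi_0$ of $YM$; implicitly, the definition of naturalization already presupposes that global sections exist, otherwise the hypothesis is vacuous. Then glue $\tau$ to $\psi_0$ using a bump function. If the fibers are not convex, this gluing can be done inside a tubular/fiber-chart neighborhood. Alternatively, one can use the standard fact that a section defined on a closed ball extends globally whenever some global section exists, by homotoping within the (contractible) ball region. I will flag that if the typical fiber is disconnected, one should first choose $\psi_0$ in the same fiber component over $U$, which is possible after shrinking $U$.

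Given the claim, the lemma follows. Let $\xi\in S$, and choose $\psi$ with $\xi\in J^1XM_{|\psi}$. Then $\xi\in S_{|\psi}=S'_{|\psi}\subset S'$, so $S\subseteq S'$. The reverse inclusion follows by symmetry, hence $S=S'$.
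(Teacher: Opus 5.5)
Your argument has the same structure as the paper's: show that every point of $J^1XM$ lies in some $J^1XM_{|\psi}$, then conclude by double inclusion. The paper gets the covering claim in one line from $j^1y$ being epic. But surjectivity of $J^1y_M$ says nothing about extending the jet $J^1y_M(\xi)$ to a \emph{global} section of $YM$. You are right that this extension is the only real content of the lemma. When suitable global sections exist, your gluing argument supplies it: given a global $\psi_0$ with $\psi_0(p)=\tau(p)$, interpolate in a convex fiber chart; with connected typical fiber, first homotope $\psi_0$ on a ball so that it passes through $\tau(p)$.

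Two of your remarks about this step are mistaken, and they expose a defect in the statement that no proof can repair.

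First, if $YM$ has no global sections, the hypothesis is vacuously \emph{satisfied}. The lemma would then assert $S=S'$ for arbitrary $S,S'$, which is false. So existence of global sections is a genuine extra assumption, not something whose absence makes the lemma harmless. For a concrete case, take $X=Y$ the natural bundle of nonzero tangent vectors and $M=S^2$.

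Second, your patch for disconnected fibers does not work: shrinking $U$ cannot produce a global section with values in a prescribed component. For example, let $X=Y$ be the natural bundle of nondegenerate symmetric $(0,2)$-tensors on $2$-manifolds, and $M=S^2$.
\begin{itemize}
\item Global sections exist, namely Riemannian metrics.
\item A continuous global section has constant signature, and $S^2$ admits no Lorentzian metric.
\item Hence no $J^1XM_{|\psi}$ contains a jet lying over a Lorentzian value.
\item Two distinct such points give $S=\{\xi\}$, $S'=\{\xi'\}$ with $S_{|\psi}=S'_{|\psi}=\emptyset$ for every $\psi$, yet $S\neq S'$.
\end{itemize}

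So the lemma needs a hypothesis such as ``every point of $YM$ lies on some global section.'' This follows from connected typical fiber plus a single global section, via your homotopy argument. Under that hypothesis your proof is complete. It also covers the only case the paper actually uses, $Y=G$ with $M\in\mathrm{ob}(\bar{\mathcal{M}}_n)$: such an $M$ admits a Lorentzian metric, and the space of Lorentzian inner products on $\mathbb{R}^n$ is connected.
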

\begin{proof}
    By hypothesis, we have that $y:X\rightarrow Y$ is an epic natural transformation and that there exists a natural bundle $Z$ such that $X=Z\times Y$.  The first implies that $j^1y:J^1X\rightarrow J^1Y$ is also an epic natural transformation. It then follows that, given any $x\in J^1XM$, there exists some section $\psi:M\rightarrow YM$ such that $x\in J^1ZM_{|\psi}$. Now suppose that $x\in S$. Then there exists some $\psi:M\rightarrow YM$ such that $x\in S_{|\psi}$ and hence $x\in S'_{|\psi}$ and $x\in S'$, and conversely for $S'$. But then for any $x\in J^1XM$ we have $x\in S$ iff $x\in S'$. So $S=S'$.
\end{proof}

We then have the following uniqueness result for minimal coupling:
\begin{thm}[Uniqueness of minimal coupling]\label{prop:uniqunessMC}
    Let $\check{E}$ be a first-order equation on $\check{B}\overset{\pi}{\rightarrow}M$, where $M\in\mathrm{ob}(\bar{\mathcal{M}}_n)$. Suppose that there exist minimal couplings $(E, e)$, $(E', e')$ on a natural bundle $X=B\times G$, such that there exist a section $\chi:M\rightarrow GM$ and a bundle isomorphism $(\Phi,\varphi): B\rightarrow XM_{|\chi}$ such that the restrictions $EM_{|\chi}$, $E'M_{|\chi}$ of $EM$, $E'M$ to $\chi$ coincide with $j^1\Phi[E]$. Then:
    \begin{enumerate}
        \item[(i)] For any $M\in \mathrm{ob}(\bar{\mathcal{M}}_n)$, any section $\psi:M\rightarrow GM$, and any $p\in M$, $EM_{|\psi}\cap\pi_{J^1XM}^{-1}[p]=E'M_{|\psi}\cap\pi_{J^1XM}^{-1}[p]$.
        \item[(ii)] For any $M\in \mathrm{ob}(\bar{\mathcal{M}}_n)$ and any section $\psi:M\rightarrow GM$, $EM_{|\psi}=E'M_{|\psi}$.
        \item[(iii)] For any $M\in \mathrm{ob}(\bar{\mathcal{M}}_n)$, $e_M[EM]=e'_M[E'M]$.
        \item[(iv)] $(E,e)$ and $(E',e')$ are isomorphic in the slice category $\NB_n/J^1X$.
    \end{enumerate}
\end{thm}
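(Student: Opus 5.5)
The plan is to prove (i)–(iv) in sequence, each following from the previous, with Lemma \ref{lem:(i)} doing the main work for (i).

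For (i), fix $M\in\mathrm{ob}(\bar{\mathcal{M}}_n)$, a section $\psi:M\rightarrow GM$ and a point $p\in M$. We compare with the reference data: the background manifold of the statement (call it $N$ to avoid clashing notation), the section $\chi$, and any point $q\in N$. Theorem \ref{thm:fletcherweatherall} gives open neighborhoods $O\ni q$, $O'\ni p$ and a diffeomorphism $\varphi:O\rightarrow O'$ with $\varphi(q)=p$ and $j^1_p(G\varphi(\chi))=j^1_p\psi$. The proof of Lemma \ref{lem:(i)} uses only naturality and minimal coupling, so it applies with this single $\varphi$ to both $(E,e)$ and $(E',e')$. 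This yields
\[
J^1X\varphi[EN_{|\chi}\cap\pi^{-1}_{J^1XN}[q]]=EM_{|\psi}\cap \pi^{-1}_{J^1XM}[p],
\]
together with the same identity for $E'$. By hypothesis $EN_{|\chi}=E'N_{|\chi}$, since both equal the image of $\check E$ under $j^1\Phi$, so the left-hand sides coincide and hence so do the right-hand sides.

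The main obstacle is making sure that the same $\varphi$ serves for both equations. Lemma \ref{lem:(i)} is stated as an existence claim for a single equation, so applying it twice could produce two different diffeomorphisms. I would avoid this by rerunning its proof with $\varphi$ fixed by Theorem \ref{thm:fletcherweatherall}. I would also note that naturality with respect to the open embeddings $O\hookrightarrow N$ and $O'\hookrightarrow M$ lets one restrict $EN$ and $EM$ to $O$ and $O'$, so that $J^1X\varphi$ makes sense on these local pieces. This restriction should be stated explicitly, because the arrows of $\mathcal{M}_n$ are embeddings.

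For (ii), take the union over $p\in M$ of the fiberwise equalities in (i). Since $EM_{|\psi}$ is the disjoint union of its intersections with the fibers $\pi^{-1}_{J^1XM}[p]$, this gives $EM_{|\psi}=E'M_{|\psi}$. For (iii), apply Lemma \ref{lem:(ii)implies(iii)} with $Y=G$, $S=e_M[EM]$ and $S'=e'_M[E'M]$. The restrictions agree for every section by (ii), so $S=S'$.

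For (iv), since $e$ and $e'$ are monic with the same image at each $M$, define $\theta_M=(e'_M)^{-1}\circ e_M:EM\rightarrow E'M$. Each $e_M$ is an embedding onto a closed submanifold, so $\theta_M$ is a bundle isomorphism. Naturality of $\theta$ follows from the naturality squares for $e$ and $e'$ together with injectivity of $e'$: for any arrow $f$, $e'\circ E'f\circ\theta=J^1Xf\circ e=e'\circ\theta\circ Ef$. Finally, $e'\circ\theta=e$ holds by construction, so $\theta$ is an isomorphism in $\NB_n/J^1X$.
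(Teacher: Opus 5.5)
Your proposal is correct and follows the paper's proof essentially step for step: (i) from Lemma \ref{lem:(i)} using the coincidence of $EM_{|\chi}$ and $E'M_{|\chi}$, (ii) by taking the union over fibers, (iii) via Lemma \ref{lem:(ii)implies(iii)}, and (iv) via the components $(e'_M)^{-1}\circ e_M$. Your extra care fills in details the paper's proof leaves implicit: you use a single diffeomorphism from Theorem \ref{thm:fletcherweatherall} for both equations, restrict along open embeddings, check naturality of $\theta$ via injectivity of $e'$, and take $e_M$ to be an embedding rather than merely injective.
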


%%%% FROM HERE

\begin{proof}
    Since $EM_{|\chi}=j^1\Phi[E]=E'M_{|\chi}$, (i) follows from Lemma \ref{lem:(i)}. But (i) implies (ii), and (ii) implies (iii) by Lemma \ref{lem:(ii)implies(iii)}. So we just need to show that (iii) implies (iv).   But since for every $M$, $e_M$ and $e'_M$ are injective and thus diffeomorphisms onto their images, (iii) implies that $e'_M{}^{-1}\circ e_M:EM\rightarrow E'M$ is a diffeomorphism.  So we can define a natural isomorphism $i:E\rightarrow E'$, with components $i_M = e'_M{}^{-1}\circ e_M:EM\rightarrow E'M$, such that $e=e'\circ i$, from which it follows that $i$ is an isomorphism in the slice category $\NB_n/J^1X$.
\end{proof}

\citet{MarchWeatherallnaturaltheories} argue that, among the various (in fact, continuum many) formulations of the vector potential Maxwell's equations that are usually taken to be minimally coupled according to the `comma goes to semicolon' rule , only one of these is minimally coupled in the sense considered here (and, moreover, is precisely the one which is standardly taken to be the `correct' version of the vector potential Maxwell's equations).\footnote{\citet{MarchWeatherallnaturaltheories} only discuss two such candidates explicitly, but the same arguments carry over straightforwardly to any of the others.}

Another useful feature of minimally coupled systems is that minimal coupling imposes certain regularity conditions on the denaturalizations of those systems. In particular, one can show that any denaturalization of a minimally coupled natural equation is a fiber bundle:
\begin{thm}\label{thm:itsabundle}
    Let $(E, e)$ be a minimally-coupled natural equation on $X=B\times G$. Then for any $M\in\mathrm{ob}(\mathcal{M}_n)$, and any section $\psi:M\rightarrow GM$, $EM_{|\psi}$ is a fiber bundle.
\end{thm}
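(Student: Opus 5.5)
The plan is to show that $EM_{|\psi}$ is locally trivial over $M$, using minimal coupling together with Lemma \ref{lem:(i)} to move fibers from one point to another. Fix $M$, a section $\psi:M\rightarrow GM$, and a point $p\in M$. We need a neighborhood $U$ of $p$ and a diffeomorphism $EM_{|\psi}\cap\pi^{-1}_{J^1XM}[U]\cong U\times F$ commuting with the projections, where $F$ is a fixed model fiber.

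\emph{Step 1 (model fiber).} Work in a chart around $p$ and let $\psi_q$ denote the flat metric whose constant components are those of $\psi$ at $q$. The lemma preceding Theorem \ref{thm:fletcherweatherall} (i.e., the Fletcher--Weatherall normal-coordinate construction) supplies, for each $q$ near $p$, a local diffeomorphism $\varphi_q$ with $\varphi_q(q)=p$ and $j^1_p(G\varphi_q(\psi))=j^1_p(\eta_p)$, where $\eta_p$ is a fixed flat reference metric at $p$. By Lemma \ref{lem:(i)}, $J^1X\varphi_q$ then carries the fiber $EM_{|\psi}\cap\pi^{-1}_{J^1XM}[q]$ diffeomorphically onto the single set $F:=EM_{|\eta_p}\cap\pi^{-1}_{J^1XM}[p]$. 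Here minimal coupling is what guarantees that only the $1$-jet of the metric at the point matters. This shows that all fibers are diffeomorphic to $F$, and $F$ is a closed submanifold because $e_M[EM]$ is closed.

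\emph{Step 2 (smooth dependence).} Choose the $\varphi_q$ to depend smoothly on $q$. The concrete choice I would try first is: take coordinates $x^\mu$ around $p$, and for each $q$ let $\varphi_q$ be the composition of three maps. The first is the translation sending $q$ to $p$. The second is the linear map taking $\psi_{\mu\nu}(q)$ to $\eta_{\mu\nu}$, chosen smoothly via a Gram--Schmidt procedure on a neighborhood. The third is the quadratic correction
\[
x^\mu\mapsto x^\mu+\tfrac12\Gamma^\mu{}_{\nu\lambda}(q)\,x^\nu x^\lambda,
\]
which kills the Christoffel symbols, so the first derivatives of the metric vanish at the point. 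All three pieces are polynomial in $x$, with coefficients smooth in $q$. Hence $(q,\xi)\mapsto J^1X\varphi_q(\xi)$ is smooth on $\pi^{-1}_{J^1XM}[U]$, and it restricts to a map $\Theta:EM_{|\psi}\cap\pi^{-1}[U]\rightarrow U\times F$ with $\Theta(\xi)=(\pi(\xi),J^1X\varphi_{\pi(\xi)}(\xi))$. Its inverse, $(q,f)\mapsto J^1X\varphi_q^{-1}(f)$, is smooth by the same reasoning. Step 1 shows that $\Theta$ is a bijection, and it commutes with the projections.

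\emph{Step 3 (submanifold structure).} We still need $\Theta$ to be a diffeomorphism relative to the smooth structure of $EM_{|\psi}$ as a subset of $J^1XM_{|\psi}\cong J^1BM$. The extended map $\tilde\Theta(q,\xi)=(q,J^1X\varphi_q(\xi))$ is a diffeomorphism of $\pi^{-1}_{J^1BM}[U]$ onto $U\times(\text{fiber of }J^1BM\text{ at }p)$, and it carries $EM_{|\psi}$ onto $U\times F$. Since $U\times F$ is a closed embedded submanifold, so is $EM_{|\psi}$, and $\Theta$ is a local trivialization.

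I expect the main obstacle to be Step 2: Theorem \ref{thm:fletcherweatherall} only gives existence of $\varphi$ pointwise, so a smoothly parametrized family has to be built by hand. One must also check that $J^1X\varphi_q$, which is defined only locally, depends smoothly on $q$; this follows because natural bundles are regular (Palais--Terng, Kol\'a\v{r} et al.), so the action depends smoothly on finite jets of $\varphi_q$.
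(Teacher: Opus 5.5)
Your argument has the same structure as the paper's proof. Lemma~\ref{lem:(i)} identifies all fibers with one model; a family of local diffeomorphisms $\varphi_q$, smooth in $q$, carries $j^1_q\psi$ to a fixed metric $1$-jet at $p$; and $\xi\mapsto J^1X\varphi_{\pi(\xi)}(\xi)$ is the trivialization. The only difference is how the family is built. The paper uses $\varphi_q=\exp_p\circ\chi_p\circ\chi_q^{-1}\circ\exp_q^{-1}$, which needs $\exp_q^{-1}$ to have a common domain and to depend smoothly on $q$. Your translation/Gram--Schmidt/quadratic map matches only the $1$-jet that is needed, and it makes smoothness in $q$ immediate, as well as (via regularity) smoothness of the trivialization and its inverse. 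One bookkeeping point: with your ordering, the coefficients of the quadratic term must be the Christoffel symbols in the already linearly transformed coordinates. Alternatively, apply the quadratic correction before the linear map, and the formula is right as written.

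There is one unjustified step. In Step~1, closedness of $e_M[EM]$ gives closedness of $F$, but not that $F$ is a submanifold. $F$ is the intersection of the embedded submanifold $e_M[EM]$ with the slice of $\pi^{-1}_{J^1XM}[p]$ on which the metric jet equals $j^1_p\eta_p$. Without transversality, such an intersection need not be a manifold. Step~3 inherits the problem, since it transfers embeddedness from $U\times F$. Your $q$-family cannot repair this, because it moves only along the single section $j^1\psi$; what is missing is homogeneity in the metric-jet direction.

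Your own construction supplies it. Its coefficients depend on $q$ only through $j^1_q\psi$, so let them depend instead on an arbitrary metric $1$-jet $\gamma$ in a small open set $W\subset J^1GM$ around $j^1_p\psi$. This yields local diffeomorphisms $\phi_\gamma$, smooth in $\gamma$, with $J^1G\phi_\gamma(\gamma)=j^1_p\eta_p$. Then $(\beta,\gamma)\mapsto(\gamma,J^1B\phi_\gamma(\beta))$ is a diffeomorphism from the open set $(J^1g_M)^{-1}[W]\subset J^1XM$ (with $g:X\rightarrow G$ the quotient) onto $W\times\pi^{-1}_{J^1BM}[p]$. By naturality, it carries $e_M[EM]\cap(J^1g_M)^{-1}[W]$ onto $W\times F$, regarding $F$ as a subset of $\pi^{-1}_{J^1BM}[p]$. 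So $W\times F$, and hence $F$, is a closed embedded submanifold, and restricting to $\gamma=j^1_q\psi$ gives your Step~3. The paper's proof is equally silent here; it just asserts that its map is ``clearly'' a diffeomorphism. So this is a gap you share with the paper rather than a departure from it.
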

\begin{proof}
    Consider the map $\pi_{J^1XM|EM_{|\psi}}:EM_{|\psi}\rightarrow M$. It is clear that $\pi_{J^1XM|EM_{|\psi}}$ is surjective. Moreover, we know by lemma \ref{lem:(i)} that for any $p,q\in M$, $EM_{|\psi}\cap \pi_{J^1XM}^{-1}[p]$ and $EM_{|\psi}\cap \pi_{J^1XM}^{-1}[q]$ are isomorphic. Therefore it suffices to show that for any $p\in M$, there exists a neighbourhood $O$ of $p$ and a diffeomorphism $\varphi: O\times [EM_{|\psi}\cap \pi_{J^1XM}^{-1}[p]]\rightarrow\pi_{J^1XM|EM_{|\psi}}^{-1}[O]$. For this, we claim that for any $p\in M$, there exists an open neighbourhood $O$ of $p$ and a family $(\varphi_q)_{q\in O}$ of diffeomorphisms $\varphi_q:O_p\rightarrow O_q$ smoothly parameterized by $q$, where $O_p,O_q$ are open neighbourhoods of $p,q$ respectively, such that $\varphi_q(p)=q$ and $j^1_q(G\varphi_q(\psi))=j^1_q\psi$. To see this, let $O$ be an open subset of a geodesically-normal neighbourhood of $p$, let $\chi$ be a smooth orthonormal frame field on $O$ (such always exist), and let $\mathrm{exp}_q:U_q\subset T_qM\rightarrow M$ denote the exponential map at $q$ for any $q\in O$. Then the map $\varphi_q=\mathrm{exp}_p\circ\chi_p\circ\chi_q^{-1}\circ\exp_q^{-1}$ is a diffeomorphism satisfying the required conditions, and by definition of the exponential map the family $(\varphi_q)_{q\in O}$ can be seen to be smoothly parameterized by $q$. Now let $\varphi: O\times [EM_{|\psi}\cap \pi_{J^1XM}^{-1}[p]]\rightarrow\pi_{J^1XM|EM_{|\psi}}^{-1}[O]$ be defined as follows: $\varphi(q,x)=J^1XM\varphi_q(q,x)$ for any $q\in O$ and any $x\in EM_{|\psi}\cap \pi_{J^1XM}^{-1}[p]$. Then using that $(E,e)$ is minimally coupled, it is clear that $\varphi$ is a diffeomorphism.
\end{proof}

\subsection{Higher order equations and order reduction}

Our definitions of natural equations and of naturalization did not depend on the order of an equation.  But minimal coupling only applied to first-order equations.  Below, we will see several other cases with a similar character, where some definitions are fully general and others only apply to first-order equations.  But in fact, as has been noted by other authors, the difference in generality is not as significant as it might appear.  This is because higher-order equations can often be reduced to first-order ones in a way that allows one to draw inferences from an analysis of the reduced equation to properties of the original equation. Usually his sort of reduction is done by explicitly introducing auxiliary fields and additional equations.\citep{Geroch1996PDEs}  Here we offer a somewhat different perspective that extends more readily to natural equations.

To motivate our approach, consider an example. The vacuum Maxwell's equations, formulated as a second-order system on the bundle of one-forms, $2\nabla_n\nabla^{[n}A^{a]}=\mathbf{0}$, are in some sense equivalent to the first-order system $\nabla_nF^{na}=\mathbf{0}$, $\nabla_{[a}F_{bc]}=\mathbf{0}$ on the bundle of two-forms. Namely, if one makes the identification $F_{ab}=2\mathrm{d}_aA_b$, then one can show that any solution of the first system induces a solution of the second system, and conversely, any solution of the second system gives rise to a (family of) solutions of the first system. In other words, an order reduction of a system of equations $\check{E}$ (of some order, say, $k$) on a bundle $\check{B}$ consists of a lower-order system of equations $\check{E}'$ (say, of order $\ell$) on a bundle $\check{B}S'$, plus a rule for transforming $k$th derivatives of $\check{B}$ fields into $\ell$th derivatives of $\check{B}'$ fields, such that $\check{E}'$ is obtained from $\check{E}$ upon application of this rule, and conversely the original system $\check{E}$ is recovered from $\check{E}'$ by reversing the procedure.

These considerations suggest the following definition:
\begin{defn}[Order reduction]\label{def:orderreduced}
    Let $\check{E}$ be a $k$th-order equation on a bundle $\check{B}\rightarrow M$. A reduction of $\check{E}$ to order $\ell$ consists of a bundle $\check{B}'$, a smooth bundle morphism $(\Phi, \varphi):J^k\check{B}\rightarrow J^\ell \check{B}'$, and an $\ell$th-order equation $\check{E}'$ on $\check{B}'$ such that $\Phi[\check{E}]=\check{E}'$.
\end{defn}
In this definition, the smooth bundle morphism $(\Phi, \varphi):J^k\check{B}\rightarrow J^\ell \check{B}'$ tells us how to think about $k$-jets of $\check{B}$ fields as $\ell$-jets of $\check{B}'$ fields, whilst the condition $\Phi[E]=E'$ captures the requirement that, under this identification, $E'$ is obtained from $E$ and conversely $E$ is recovered from $E'$ in the same way. Note that the case where two equations are isomorphic (as submanifolds of jet bundles of some order) is clearly a special case of order reduction, where $\ell=k$ and $(\Phi, \varphi)$ is the $k$th jet prolongation of a smooth bundle isomorphism between $\check{B}$ and $\check{B}'$.

One adantage of definition \ref{def:orderreduced} as a treatment of order reduction is that it can be extended to natural equations.
\begin{defn}[Natural order reduction]\label{def:naturalorderreduced}
    Let $(E, e)$ be a natural, $k$th-order equation on a natural bundle $X$. A natural reduction of $(E, e)$ to order $\ell\leq k$ is an $\ell$th-order natural equation $(E',e')$ on a natural bundle  $X'$ and a natural transformations $f:J^kX\rightarrow J^\ell X'$ such that the following diagram is a pullback (in $\NB_n$):
    \begin{figure}[h]\label{fig:itturnsoutorderreductionisalsoapullbackdiagram}
        \centering
        \begin{tikzcd} E \arrow[dd] \arrow[rr, "e"] & & J^kX \arrow[dd, "f"] \\ & & \\
                E' \arrow[rr, "e'"] & & J^\ell X'
        \end{tikzcd}
    \end{figure}
\end{defn}
In other words, a natural order reduction of a natural $k$th-order equation to order $\ell$ consists of the specification of a natural, $\ell$th-order equation and a natural transformation associating the derivatives of the original equation's fields with those of the reduced equations field, which together are such that our original equation is recovered, by a universal property, from the the reduced equation and the natural transformation. Again, note that (natural) isomorphism of two natural equations is clearly a special case of natural order reduction.

These definitions are abstract, and very general. How they work is perhaps clearest when one considers examples. First, observe that lower-order systems of equations always imply equations at higher order.  This always yields order-reductions. In our current framework, we can understand this as follows. Let $\check{E}'\subset J^\ell \check{B}\rightarrow ... \rightarrow \check{B}\rightarrow M$ be an $\ell$th-order equation on $\check{B}$, let $k> \ell$, and consider the $k$th-order equation $\check{E}$ on $\check{B}$ defined via $\check{E}=(\pi^{(k,\ell)}_{\check{B}})^{-1}[\check{E}']$. Then, clearly, $\check{E}'$ is clearly an $\ell$-th order reduction of $\check{E}$, on $\check{B}$, relative to the morphism $(\pi^{(k,\ell)}_{\check{B}},\mathrm{id}_M))$. By contrast, note that generically, this sort of morphism will not produce an order reduction. That is, if $\check{E}$ is a $k$th-order equation on $\check{B}$, and $\ell<k$ then in general $\pi^{(k,\ell)}_B[\check{E}]$ is not an order reduction of $\check{E}$, because $(\pi^{(k,\ell)}_{\check{B}})^{-1}[\pi^{(k,\ell)}_{\check{B}}[\check{E}]]$ will not coincide with $\check{E}$---or in other words, one generally loses information about an equation by simply discarding its higher derivatives.

Second, consider the idea that equations at higher order can be transformed into lower-order ones by treating lower-order derivatives as independent auxiliary variables. Let $\check{E}$ and $\check{B}$ be as above, and let $\ell<k$. In general, there is a monic natural transformation $\iota^{(\ell, k-\ell)}:J^k\rightarrow J^{\ell}J^{k-\ell}$, which induces a smooth bundle morphism $\iota^{(\ell, k-\ell)}_{\check{B}}:J^k\check{B}\rightarrow J^{\ell}J^{k-\ell}\check{B}$. Thus, given a bundle $\check{B}'\rightarrow M$ and a smooth bundle morphism $(\Phi,\mathrm{id}_M):J^{k-\ell}\check{B}\rightarrow \check{B}'$, then defining $\check{E}'=j^{\ell}\Phi\circ\iota^{(\ell, k-\ell)}_{\check{B}}$, we have that $\check{E}'$ is a reduction of $\check{E}$, relative to $j^{\ell}\Phi\circ\iota^{(\ell, k-\ell)}_{\check{B}}$. Thus, we have obtained an order reduction of $\check{E}$ by treating $(k-\ell)$th derivatives of $\check{B}$ fields as independent auxiliary variables, i.e.~as $\check{B}'$ fields. Note that this framework already captures the idea that, in treating such derivatives as independent auxiliary variables, one must generally impose `additional' equations to guarantee consistency (e.g.~that auxiliary variables which correspond to exterior derivatives must be closed differential forms). That comes about at the level of the map $\iota^{(\ell, k-\ell)}_{\check{B}}$---in particular, the fact that it is a monomorphism but not an isomorphism. One can think of the points in $J^{\ell}J^{k-\ell}\check{B}$ which lie outside of the image of $\iota^{\ell, k-\ell)}_{\check{B}}$ as precisely those points which are ruled out by these additional `consistency' equations.

To make this concrete, let us return to the Maxwell example with which we began. Let $A:\mathcal{M}_n\rightarrow \FB$ be the natural bundle of one-forms, and $F:\mathcal{M}_n\rightarrow \FB$ the natural bundle of two-forms. There is a family of natural transformations $\alpha\mathrm{d}:J^1A\rightarrow F$ given by constant multiples of the exterior derivative.\footnote{In fact, these are the only such natural transformations (\citet[\S25.4]{Kolar+etal}).}  We take $\alpha=2$. Now let $(M, g_{ab})$ be a relativistic spacetime, let $E$ be the second-order equation on $AM$ defined by $2\nabla_n\nabla^{[n}A^{a]}=\mathbf{0}$, relative to the metric and its associated Levi-Civita connection at that point, and likewise let $E'$ be the first-order equation on $FM$ which consists of all points satisfying $\nabla_nF^{na}=\mathbf{0}$, $\nabla_{[a}F_{bc]}=\mathbf{0}$, relative to the metric and its associated Levi-Civita connection at that point. Then $(E', FM, j^1(2\mathrm{d})\circ\iota^{(1,1)}_{AM})$ is a reduction of $E$ to order $1$.

\section{First-Order Quasilinear Symmetric Hyperbolic Systems and Naturality}

We now turn to definining first-order quasilinear symmetric hyperbolic systems, and discussing how to extend those definitions to the natural equation context.  As in our discussion of order reduction, we will begin by presenting variants of standard definitions, on (non-natural) bundles, and then showing how those can be extended to natural equations.  First we will consider quasilinearity, since that can be defined for higher order systems; them we will consider symmetric hyperbolicity and some affiliated concepts for first-order systems.

\subsection{Quasilinearity}

We begin by presenting a convenient definition of quasilinearity and recalling its consequences.  Let $\check{E}\subset J^k\check{B}\rightarrow J^{k-1}\check{B}\rightarrow ...\rightarrow J^1\check{B}\rightarrow \check{B}\rightarrow M$ be a $k$th-order equation on some fiber bundle $\pi_{\check{B}}:\check{B}\rightarrow M$. We say $\check{E}$ is \emph{quasilinear} if $(\pi^{(k,k-1)}_{\check{B}})_{|\check{E}}:\check{E}\rightarrow J^{k-1}\check{B}$ is an affine subbundle of $\pi^{(k,k-1)}_{\check{B}}:J^k\check{B}\rightarrow J^{k-1}\check{B}$. In other words, to say that an equation is quasilinear is just to say that if $\psi, \psi':M\rightarrow \check{B}$ are both solutions of the equation, i.e.~$j^k\psi, j^k\psi'\in \check{E}$, then we also have $j^k\psi+\alpha(j^k\psi-j^k\psi')\in \check{E}$ for any number $\alpha$.

Now let $\check{E}$ be a $k$th-order quasilinear equation on $\check{B}\overset{\pi}{\rightarrow}M$. Since $\check{E}$ is an affine subbundle of $\pi^{(k,k-1)}_{\check{B}}:J^k\check{B}\rightarrow J^{k-1}\check{B}$, it gives rise to an affine bundle morphism $\kappa$---in effect, a quotient mapping---to a vector bundle over $J^{k-1}\check{B}$ (naturally considered as an affine bundle), whose fiber, at each point in $J^{k-1}\check{B}$, is the quotient space of the fiber of $J^k\check{B}$ at that point by the fiber of $\check{E}$ at that point, i.e. the bundle %$(V\check{B}\otimes S^kT^*M)/\check{U}$
$K^k\check{B}/\check{U}$,\footnote{Note that our notation here is similar to in section \ref{sec:prelim}, but since we are not considering natural bundles yet, it is being used in a more standard way.} where $\check{U}$ is the direction bundle for $\check{E}\rightarrow J^{k-1}\check{B}$, i.e., the bundle whose fiber at each point is the kernel of the affine morphism $\kappa$ (or equivalently, for which each fiber is the vector bundle on which $\check{E}\rightarrow J^{k-1}\check{B}$ is modeled). %Recall that $V\check{B}\otimes S^kT^*M$ is the direction bundle for $\pi^{(k,k-1)}_{\check{B}}:J^k\check{B}\rightarrow J^{k-1}\check{B}$.
(We are avoiding the otherwise appropriate expression ``quotient bundle'' here, since we have employed that terminology differently above.) To summarize, letting $i$ denote the inclusion of $\check{E}$ into $J^k\check{B}$, the following diagram is fiberwise exact:

\begin{figure}[h]\label{fig:fiberwiseexact}
    \centering
    \begin{tikzcd} \check{E} \arrow[dd, "\pi^{(k,k-1)}_{\check{B}|\check{E}}"] \arrow[rr, "i"] & & J^k\check{B} \arrow[dd, "\pi^{(k,k-1)}_{\check{B}}"] \arrow[rr, "\kappa"] & & K^k\check{B}/\check{U} \arrow[dd] \\ & & {} & & \\
        J^{k-1}\check{B} \arrow[rr, "\mathrm{id}_{J^{k-1}\check{B}}"] & & J^{k-1} \check{B} \arrow[rr, "\mathrm{id}_{J^{k-1}\check{B}}"] & & J^{k-1} \check{B}
    \end{tikzcd}
\end{figure}
\noindent In fact, the existence of an affine bundle morphism $\kappa$ to some vector bundle completing such a diagram is an alternative way of characterising a quasilinear equation.

In what follows, we will call the bundle $(V\check{B}\otimes S^kT^*M)/\check{U}$ the equation space, $\kappa$ the equation map, and denote vectors in the equation space with capital Latin indices. Then given a section $\sigma:M\rightarrow \check{B}$, $\check{E}$ can be expressed as an `equation' as follows: $\kappa^A(j^k\sigma)=\mathbf{0}$. (One then recovers the more usual definition of quasilinearity as `linearity of an equation in its highest derivatives' by considering the expression of this `equation' in local fibered coordinates, see e.g.~\citet[pp. 11-12]{Khavkine2012}.)  In the above construction, we also introduced a bundle $\check{U}$, which we defined as the direction bundle for $\check{E}\rightarrow J^{k-1}\check{B}$. We will call this bundle the \emph{presystem}.

The presystem of a system of quasilinear equations gives us a useful way of thinking about sources. As just noted, if $\check{E}$ is a $k$th-order quasilinear equation on $\check{B}\overset{\pi}{\rightarrow}M$, then given a section $\sigma:M\rightarrow \check{B}$, $\check{E}$ can be expressed as $\kappa^A(j^k\sigma)=\mathbf{0}$. But now suppose that we move to a different ($k$th-order, quasilinear) equation, with the same presystem. Then we would have $\kappa^A(j^k\sigma)=j^A$, where $j^A$ is some section of $K^k\check{B}/\check{U}\rightarrow J^{k-1}\check{B}$. Thus we can think of sections of $(V\check{B}\otimes S^kT^*M)/U\rightarrow J^{k-1}\check{B}$ as giving rise to a family of `sourced' equations, relative to our original `unsourced' equation $\check{E}$. Note that this notion of `sourcedness' \textit{only} makes sense relative to a choice for the original system $\check{E}$.

%Amongst other things, this means that given a vertical vector field $v^{\alpha'}$ on $B$, and in the presence of a connection $\nabla$ on $B$, we can say what it is for $v^{\alpha'}$ to be (or not to be) a source for $E$ (relative to that connection): $v^{\alpha'}$ does not source $E$ whenever $\nabla_{a_1}...\nabla_{a_k}v^{\alpha'}\in U$ (or in other words, whenever $k^A(\nabla_{a_1}...\nabla_{a_k}v^{\alpha'})=\mathbf{0}$), and does otherwise. (If $B$ is a natural, first-order bundle, then we can also consider the case where $\nabla$ is induced by a connection on $M$.)

\subsection{Natural Quasilinearity and Minimal Coupling}

The definition of quasilinearity given above readily extends to natural equations.  Let $(E,e)$ be a $k$th-order natural equation on $X:\mathcal{M}_n\rightarrow\FB$. Then $(E, e)$ is \emph{(naturally) quasilinear} if at each $n$-manifold $M$, $e_M:EM\rightarrow J^kXM$ is such that (i) $\pi^{(k,k-1)}_{XM}\circ e_M$ is epic, and (ii), for each $p\in J^{k-1}XM$,  $(e_M){}_{|(\pi^{(k,k-1)}_{XM}){}^{-1}(p)}$ is an affine map. Here, condition (i) asserts that, for each $M\in\mathrm{ob}(\mathcal{M}_n)$, $\pi^{(k,k-)}_{XM}:EM\rightarrow J^{k-1}XM$ is a smooth fiber bundle, whilst condition (ii) asserts that this bundle is an affine subbundle of $(\pi^{(k,k-1)}_{XM})_{|EM}:J^kXM\rightarrow J^{k-1}XM$.  Note that if $(E,e)$ is natually quasilinear, then there exists a subobject $(U,u)$ of $K^kX$ such that for every $M\in\mathrm{ob}(M_n)$, $UM$ is the presystem for $EM$.  We will call $(U,u)$ the presystem for $(E,e)$.\footnote{See \citet[\S 14.10]{Kolar+etal}.}

We briefly remark on how quasilinearity interacts with minimal coupling and (de)naturalization. We have the following:
\begin{thm}
\label{thm:denaturalisedquasilinear}
    Let $(E, e)$ be a minimally-coupled equation on $X=B\times G$. Then for any $M\in\mathrm{ob}(\mathcal{M}_n)$ and any section $\psi:M\rightarrow GM$, $EM_{|\psi}$ is quasilinear.
\end{thm}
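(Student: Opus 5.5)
The plan is to read the statement with the standing assumption that $(E,e)$ is itself naturally quasilinear, in the sense of the preceding subsection. Without that assumption the conclusion can fail, since nothing in minimal coupling alone forces affineness in the highest derivatives. Under this reading the task is to show that naturally quasilinear plus minimally coupled implies that each denaturalization $EM_{|\psi}$ is an affine subbundle of $\pi^{(1,0)}:J^1XM_{|\psi}\rightarrow XM_{|\psi}$. Via $\Phi$ this is the same as an affine subbundle of $J^1BM\rightarrow BM$.

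\textbf{Step 1 (fiberwise affineness).} Fix $p\in M$ and a point $x=(b,\psi(p))\in XM_{|\psi}$. The fiber of $J^1XM\rightarrow XM$ over $x$ is an affine space, and natural quasilinearity says $EM$ meets it in an affine subspace $A_x$. The fiber of $J^1XM_{|\psi}$ over $x$ consists of those $1$-jets whose $G$-component equals $j^1_p\psi$. Because $X=B\times G$, this is again an affine subspace $L_x$ of the same fiber. The fiber of $EM_{|\psi}$ over $x$ is then $A_x\cap L_x$, which is affine provided it is nonempty.

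\textbf{Step 2 (nonemptiness).} By condition (i) of natural quasilinearity there is some point of $EM$ over any $x'=(b',\psi(p))$. Its $G$-component is some jet $j^1_p\chi$, with $j^0_p\chi=\psi(p)$, for a section $\chi$ of $GM$. Theorem \ref{thm:fletcherweatherall} gives a local diffeomorphism $\varphi$ with $\varphi(p)=p$ and $j^1_p(G\varphi(\chi))=j^1_p\psi$. Since $J^1X\varphi$ preserves $EM$ by naturality, it carries this point to a point of $EM$ lying in $J^1XM_{|\psi}$ over $(B\varphi(b'),\psi(p))$. Because $B\varphi$ restricts to a diffeomorphism of the fiber over $p$, we may take $b'=(B\varphi)^{-1}(b)$, which yields a point over $x$. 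This is essentially the argument of Lemma \ref{lem:(i)} applied fiberwise over $XM$ rather than only over $M$.

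\textbf{Step 3 (subbundle structure).} It remains to see that $EM_{|\psi}\rightarrow XM_{|\psi}$ is a smooth affine subbundle, which is where I expect the main difficulty. One needs the direction spaces $\vec{A}_x\cap\vec{L}_x$ to have locally constant dimension and to vary smoothly in $x$. Here $\vec{A}_x$, the fiber of the presystem $UM$, already varies smoothly. The direction space $\vec{L}_x$ is the fixed vertical part $V_bBM\otimes T_p^*M$ of the product splitting, since the $G$-jet is pinned to $j^1_p\psi$. I would first try to show that the projection of $\vec{A}_x$ onto the $G$-directions is surjective. Then $\vec{A}_x+\vec{L}_x$ is the whole direction space of the fiber, the intersection is transverse, and its dimension is constant. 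Surjectivity should follow from Step 2 applied infinitesimally: the construction there shows that $EM$ over $x$ meets every $G$-jet with the right $0$-jet, and an affine subspace meeting all such jets must have directions projecting onto the full $G$-jet direction space.

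Given transversality, smoothness and local triviality follow from the implicit function theorem. Alternatively, one can use the smoothly parametrized family $(\varphi_q)$ constructed in the proof of Theorem \ref{thm:itsabundle}, whose lifts act affinely on fibers, to trivialize $EM_{|\psi}$ affinely over a neighborhood of $p$. Either way this shows $EM_{|\psi}$ is quasilinear.
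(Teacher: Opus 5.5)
You are right that natural quasilinearity of $(E,e)$ has to be assumed; the paper's proof invokes it too. Your Step 1 also isolates the real issue: $EM_{|\psi}$ pins the whole $1$-jet of the metric, so its fiber over $x$ is $A_x\cap L_x$, not simply the restriction of $EM$ over $XM_{|\psi}$. The paper argues by base change instead (Theorem \ref{thm:itsabundle}, pasting of pullbacks, then pulling back the affine subbundle $EM\rightarrow J^{k-1}XM$), and that argument never confronts this intersection.

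The gap is in Step 2. The diffeomorphism $\varphi$ is built from $\chi$, and $\chi$ is the $G$-component of a point of $EM$ over $x'=(b',\psi(p))$. So $\varphi$ depends on $b'$, and you cannot afterwards set $b'=(B\varphi)^{-1}(b)$: changing $b'$ changes the available $\chi$, and hence $\varphi$. Step 3 obtains surjectivity of $\vec{A}_x$ onto the $G$-jet directions from Step 2, so it inherits the circularity. The family $(\varphi_q)$ of Theorem \ref{thm:itsabundle} does not rescue this either. It only transports fibers between points of $M$, and says nothing about nonemptiness or constant rank as $b$ varies in the fiber of $BM$ over $p$.

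This is not just a presentational gap; Step 2 is false in general. Take $B=Q$, the second-order bundle of torsion-free connections, and let $E$ be metric compatibility, $\Gamma(p)=\Gamma^{LC}(j^1_pg)$.
\begin{itemize}
\item For fixed $g_p$, the map $\partial g\mapsto\Gamma^{LC}$ is a linear isomorphism. So every fiber of $EM\rightarrow XM$ is a nonempty affine space ($\partial\Gamma$ arbitrary, $\partial g$ uniquely determined), and $(E,e)$ is naturally quasilinear.
\item It is minimally coupled, since its part over $p$ depends on $\psi$ only through $j^1_p\psi$.
\item Yet the fiber of $EM_{|\psi}$ over $(\Gamma_p,\psi(p))$ is empty whenever $\Gamma_p\neq\Gamma^{LC}_\psi(p)$, and $\vec{A}_x$ has zero projection onto the $G$-directions.
\end{itemize}
So an extra hypothesis is needed, for example that $B$ is first order.

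Under that hypothesis your strategy works once Step 2 is replaced as follows.
\begin{itemize}
\item Because $\chi(p)=\psi(p)$, you may choose $\varphi$ with $\varphi(p)=p$ and $j^1_p\varphi=\mathrm{id}$. Only its second-order part is used to carry $j^1_p\chi$ to $j^1_p\psi$; that part shifts the first derivatives of the metric at $p$ by an arbitrary element of $T^*_pM\otimes S^2T^*_pM$, as in the construction of normal coordinates.
\item When $B$ is first order, such $\varphi$ fix the fiber of $XM$ over $p$ pointwise, hence fix $x$. Then $J^1X\varphi$ acts on the fiber of $J^1XM$ over $x$ by translations whose $G$-components exhaust $T^*_pM\otimes S^2T^*_pM$.
\item Invariance of $A_x$ under these translations shows directly that $\vec{A}_x$ projects onto the $G$-jet directions. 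Hence $\vec{A}_x+\vec{L}_x$ is the whole direction space.
\item That alone forces $A_x\cap L_x\neq\emptyset$ and gives constant rank $\mathrm{rk}\,U-\dim(T^*_pM\otimes S^2T^*_pM)$. Your transversality and implicit-function argument then finishes the proof.
\end{itemize}
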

\begin{proof}
    By the definition of $EM_{|\psi}$ and theorem \ref{thm:itsabundle}, we have the following commuting diagram in $\FB_n$:

    \begin{figure}[h]\label{fig:pullbacksagain}
        \centering
        \begin{tikzcd}
        EM_{|\psi} \arrow[dd, "\pi^{(k,k-1)}_{XM|psi}\circ e_M"] \arrow[rr, "(i'{,}\, \mathrm{id}_M)"] & & EM \arrow[dd, "\pi^{(k,k-1)}_{XM}\circ e_M"]
        \\ & & \\
        J^{k-1}XM_{|\psi} \arrow[dd, "\pi_{J^{k-1}XM|\psi}"] \arrow[rr, "(i{,}\,\mathrm{id}_M)"] & & J^{k-1}XM \arrow[dd, "J^{k-1}y_M"]
        \\ & & \\
        1_nM \arrow[rr, "(\psi{,}\,\mathrm{id}_M)"] & & YM
        \end{tikzcd}
    \end{figure}

    Here, $i'$ is the inclusion of $EM_{|\psi}$ into $EM$, $i$ is the inclusion of $J^{k-1}XM_{|\psi}$ into $J^{k-1}XM$, and all other maps are defined as usual. Since the lower and outer squares are pullbacks, so is the upper one. But since $(E,e)$ is quasilinear, this immediately implies that $\pi^{(k,k-1)}_{XM}\circ e_:EM_{|\psi}\rightarrow XM_{|\psi}$ is an affine subbundle of $\pi^{(k,k-1)}_{XM|\psi}:J^1XM_{|\psi}\rightarrow XM_{|\psi}$.
\end{proof}

Using theorem \ref{thm:denaturalisedquasilinear}, one has the following obvious consequence of lemma \ref{lem:(i)} (the basic idea of which is just that if two quasilinear equations coincide over some point, then so do their presystems):
\begin{thm}\label{lem:wecanshufflethepresystemaround}
    $(E,e)$ be first-order, quasilinear, minimally coupled equation with presystem $(U,u)$. Then for any $M$, $M'\in\mathrm{ob}(\bar{\mathcal{M}}_n)$, sections $\psi:M\rightarrow GM$, $\psi':M'\rightarrow GM'$, and any $p\in M$, $p'\in M'$, there exist open neighborhoods $O$, $O'$ of $p$, $p'$ and a diffeomorphism $\varphi:O\rightarrow O'$ such that $j^1_{p'}(G(\varphi)\circ\psi)=j^1_{p'}(\psi')$ and $J^1X(\varphi)[UM_{|\psi}\cap \pi_{J^1XM}^{-1}[p]]=UM'_{|\psi'}\cap \pi_{J^1XM'}^{-1}[p']$.
\end{thm}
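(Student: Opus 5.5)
The plan is to reduce everything to Lemma \ref{lem:(i)}, using the fact that the presystem is determined fibrewise by the equation. First, fix $M,M',\psi,\psi',p,p'$ and apply Lemma \ref{lem:(i)}. This gives open neighbourhoods $O,O'$ and a diffeomorphism $\varphi:O\rightarrow O'$ such that $j^1_{p'}(G\varphi(\psi))=j^1_{p'}\psi'$ and
\[
J^1X\varphi\bigl[EM_{|\psi}\cap\pi_{J^1XM}^{-1}[p]\bigr]=EM'_{|\psi'}\cap\pi_{J^1XM'}^{-1}[p'].
\]
We keep exactly this $\varphi$. The first conclusion of the theorem is then immediate, and only the statement about $U$ remains.

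Second, we use Theorem \ref{thm:denaturalisedquasilinear}. It says that $EM_{|\psi}$ and $EM'_{|\psi'}$ are affine subbundles of $J^1XM_{|\psi}\rightarrow XM_{|\psi}$ and of $J^1XM'_{|\psi'}\rightarrow XM'_{|\psi'}$, respectively. By definition of the presystem $(U,u)$, the fibre of $UM_{|\psi}$ over a point $x\in XM_{|\psi}$ above $p$ is the direction space of the affine fibre $EM_{|\psi}\cap(\pi^{(1,0)}_{XM})^{-1}(x)$. Concretely, it is the set of differences $v-w$ of elements of that fibre, regarded as a linear subspace of $K^1XM$ at $x$. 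The same description holds for $UM'_{|\psi'}$ over points above $p'$.

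Third, we transport this description along $\varphi$. The lift $J^1X\varphi$ is a morphism of the affine bundles $J^1XM\rightarrow XM$ over $X\varphi$, so on each fibre it is affine, and its linear part is the map $K^1X\varphi$. (This is exactly what it means for $J^1$ to be a natural affine construction with associated vectorial construction $K^1$.) It follows that $J^1X\varphi$ sends the affine fibre of $EM_{|\psi}$ over $x$ onto the affine fibre of $EM'_{|\psi'}$ over $X\varphi(x)$. Here we use the displayed equality together with the fact that $J^1X\varphi$ commutes with $\pi^{(1,0)}$ and that $X\varphi$ maps $XM_{|\psi}$ over $p$ onto $XM'_{|\psi'}$ over $p'$, because $G\varphi(\psi)(p')=\psi'(p')$. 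Since an affine map carries the difference set of an affine subspace onto the difference set of the image, the linear part of $J^1X\varphi$ carries the direction space of the first fibre onto that of the second. Taking the union over $x$ yields the required identity for $U$, understood via the induced map on $K^1X$. Equivalently, one can note that naturality of $(U,u)$ already gives $K^1X\varphi[UM]=UM'$, and then argue as in Lemma \ref{lem:(i)}.

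The main obstacle is not conceptual but a matter of bookkeeping. The statement writes $J^1X(\varphi)$ acting on $U$, so we have to make precise that this means the induced action on the vectorial construction $K^1X$ given by the linear parts. We also have to check that restriction to $\psi$ is compatible with the fibre-over-$p$ intersections, which is the same fibre-preservation step used in the proof of Lemma \ref{lem:(i)}.
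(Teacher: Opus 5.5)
Your proposal is correct and follows the paper's proof. The paper likewise takes the diffeomorphism supplied by Lemma \ref{lem:(i)} and observes that the fibres of $UM_{|\psi}$ over points above $p$ are the direction spaces of the affine fibres of $EM_{|\psi}$, so the equality for $E$ forces the equality for $U$. Your extra bookkeeping makes explicit what the paper leaves implicit, including its abuse of notation in letting $J^1X(\varphi)$ act on $U$: namely, that $J^1X\varphi$ is fibrewise affine with linear part $K^1X\varphi$, and that $X\varphi$ carries the $\psi$-slice over $p$ onto the $\psi'$-slice over $p'$.
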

\begin{proof}
    By lemma \ref{lem:(i)}, for any $M$, $M'\in\mathrm{ob}(\bar{\mathcal{M}}_n)$, sections $\psi:M\rightarrow GM$, $\psi':M'\rightarrow GM'$, and any $p\in M$, $p'\in M'$, there exist open neighborhoods $O$, $O'$ of $p$, $p'$ and a diffeomorphism $\chi:O\rightarrow O'$ such that $J^1X(\varphi)[EM_{|\psi}\cap \pi_{J^1XM}^{-1}[p]]=EM'_{|\psi'}\cap \pi_{J^1XM'}^{-1}[p']$. But since the fiber of $UM_{|\psi}\cap \pi_{J^1XM}^{-1}[p]$ over each point in $\pi^{-1}_{XM_{|\psi}}[p]$ consists of the direction for $EM_{|\psi}$ at that point, the condition $J^1X(\varphi)[EM_{|\psi}\cap \pi_{J^1XM}^{-1}[p]]=EM'_{|\psi'}\cap \pi_{J^1XM'}^{-1}[p']$ implies $J^1X(\varphi)[UM_{|\psi}\cap \pi_{J^1XM}^{-1}[p]]=UM'_{|\psi'}\cap \pi_{J^1XM'}^{-1}[p']$.
\end{proof}

Note that, for any minimally-coupled quasilinear natural equation $(E,e)$ on $X=G\times B$, the presystem $U$ can be viewed as a natural vectorial construction over $J^1G$, whose fiber at each $M\in\mathrm{ob}(\mathcal{M}_n)$, and each $j^1_p\psi$ in $J^1GM$, is $UM_{|\psi}\cap \pi_{J^1XM}^{-1}[p]$.

\subsection{First-order quasilinear symmetric hyperbolic equations}\label{sec:symmetrichyerbolic}

We now consider first-order quasilinear symmetric hyperbolic systems \citep{Friedrichs,John,Geroch1996PDEs,Taylor2011PDE3,Khavkine2012} We begin with a series of definitions for non-natural equations.  For convenience in what follows, we will use lower-case Greek indices for fields tangent to the total space of a bundle and lower-case Latin indices for fields tangent to the base space.  An apostrophe next to a Greek index indicates that that index is vertical.  Apostrophes can always be added to lowered (covariant) indices and removed from raised ones, but not vice-versa.  Uppercase Latin indices are used for fields valued in other spaces, specified by context.  (These conventions follow \citet{Geroch1996PDEs}.)

\begin{defn}[Symmetrizer]
    Let $\check{E}\subset J^1\check{B}\rightarrow \check{B}\rightarrow M$ be a first-order quasilinear equation, with presystem $\check{U}\rightarrow \check{B}$. A \emph{symmetrizer} is a tensor field $h\indices{^a_{\alpha'}_{\beta'}}$ on $\check{B}$, symmetric in $\alpha'$ and $\beta'$, that annihilates every tensor $\theta\indices{^{\alpha'}_a}$ in $\check{U}$ at each $p\in B$.
\end{defn}

\begin{defn}[Causal covector and vectors]
    Let $\check{E}\subset J^1\check{B}\rightarrow \check{B}\rightarrow M$ be a first-order quasilinear equation, with presystem $\check{U}\rightarrow \check{B}$, and let $h\indices{^a_{\alpha'}_{\beta'}}$ be a symmetrizer. A covector $n_a$ at some $p\in M$ is \emph{causal} for $h\indices{^a_{\alpha'}_{\beta'}}$ at some $x\in \pi^{-1}[p]$ if $n_ah\indices{^a_{\alpha'}_{\beta'}}$ is positive definite at $x$. A vector $\xi^a$ at $p$ is \emph{causal} for $h\indices{^a_{\alpha'}_{\beta'}}$ at some $x\in \pi^{-1}[p]$ if $\xi^an_a> 0$ for every causal covector for $h\indices{^a_{\alpha'}_{\beta'}}$ at $p$.
\end{defn}

Note that the causal covectors associated with a symmetriser at some point $x\in \check{B}$ form an open convex cone at $p=\pi(x)$. The causal vectors associated with that symmetrizer form a closed convex cone at $p$. One can think of the causal covectors (for a given symmetriser) at a point as representing possible initial value surfaces, and the causal vectors (for a given symmetriser) at a point as representing possible `signal propagation' velocities off of those surfaces. (The justification for this way of thinking about causal (co)vectors at a point will become clear in what follows.)

\begin{defn}[Characteristic covector]
    Let $\check{E}\subset J^1\check{B}\rightarrow \check{B}\rightarrow M$ be a first-order quasilinear equation, with presystem $\check{U}\rightarrow \check{B}$. A covector $n_a$ at some $p\in M$ is \emph{characteristic} at some $x\in \pi^{-1}(p)$ if there exists a non-zero vertical vector $v^{\alpha'}$ at $x$ such that $n_av^{\alpha'}\in \check{U}_{p}$
\end{defn}

\begin{defn}[Hyperbolization]
    Let $\check{E}\subset J^1\check{B}\rightarrow \check{B}\rightarrow M$ be a first-order quasilinear equation, with presystem $\check{U}\rightarrow \check{B}$, and let $h\indices{^a_{\alpha'}_{\beta'}}$ be a symmetrizer. Then $h\indices{^a_{\alpha'}_{\beta'}}$ is a \emph{hyperbolization} if it admits causal covectors at each $x\in \check{B}$.
\end{defn}

\begin{defn}[Maximal hyperbolization]
    Let $\check{E}\subset J^1\check{B}\rightarrow \check{B}\rightarrow M$ be a first-order quasilinear equation, with presystem $\check{U}\rightarrow \check{B}$, and let $h\indices{^a_{\alpha'}_{\beta'}}$ be a hyperbolization. Then $h\indices{^a_{\alpha'}_{\beta'}}$ is \emph{maximal} if at each $x\in B$, the boundary of the collection of causal covectors for $h\indices{^a_{\alpha'}_{\beta'}}$ contains only characteristics.
\end{defn}

\begin{defn}[Symmetric hyperbolic system]
    Let $\check{E}\subset J^1\check{B}\rightarrow \check{B}\rightarrow M$ be a first-order quasilinear equation. Then $\check{E}$ is \emph{symmetric hyperbolic} if it admits a maximal hyperbolization.
\end{defn}

Let $\check{E}$ be a quasilinear, symmetric hyperbolic equation on a bundle $\check{B}\rightarrow M$, and fix a hyperbolization $h\indices{^a_{\alpha'}_{\beta'}}$ of $\check{E}$. By initial data for $\check{E}$ (relative to the fixed choice of hyperbolization $h\indices{^a_{\alpha'}_{\beta'}}$), we mean a codimension $1$ submanifold $S$ of $M$ together with a section $\phi:S\rightarrow \check{B}$ such that at each $p\in S$ the normal $n_a$ to $S$ at $p$ is causal for $h\indices{^a_{\alpha'}_{\beta'}}$ at $\phi(p)$. Given initial data $(S, \phi)$ for $\check{E}$, by a solution of $\check{E}$ for that initial data we mean a solution $\psi:O\rightarrow \check{B}$ of $\check{E}$ (on some $O$ containing $S$) such that $\psi$ agrees with $\phi$ on $S$. Given a solution $\psi:O\rightarrow \check{B}$, a smooth curve $\gamma:I\rightarrow M$  is causal relative to $\psi$ if, whenever $\gamma(s)\in O$, its tangent vector at $\gamma(s)$ is causal for $h\indices{^a_{\alpha'}_{\beta'}}$ at $\psi(p)$. A point $p$ is an endpoint of a smooth curve $\gamma:I\rightarrow M$ iff for any open set $O$ containing $p$, there is a parameter $s_0\in I$ such that either for all $s\geq s_0$ or all $s\leq s_0$, $\gamma(s)\in O$. Then, given initial data $(S, \phi)$, the domain of dependence $D(S)$ of $S$ is the collection of points $p\in M$ such that (i)  there exists a solution $\psi:O\rightarrow \check{B}$ of $\check{E}$ for $(S, \phi)$ with $p\in O$, and (ii) for any smooth curve $\gamma:I\rightarrow M$ whose image contains $p$, and which is causal relative to solution $\psi:O\rightarrow \check{B}$ of $\check{E}$ for $(S, \phi)$ with $p\in O$, and without endpoint in $O$, then
$\gamma(I)$ intersects $S$.

We can now state the following (classic, fundamental) uniqueness result for symmetric hyperbolic systems:
\begin{thm}\label{thm:uniqueness}
    $\check{E}$ be a quasilinear, symmetric hyperbolic equation on a bundle $\check{B}\rightarrow M$, and fix a hyperbolization $h\indices{^a_{\alpha'}_{\beta'}}$ of $\check{E}$. Let $(S, \phi)$, $(S,\phi')$ be initial data for this system. Suppose that there is some open (in the submanifold topology on $S$) $S'\subset S$ on which $\phi$ and $\phi'$ agree. Then given any solutions $\psi:O\rightarrow \check{B}$, $\psi':O'\rightarrow \check{B}$ of $\check{E}$ for $(S, \phi)$, $(S,\phi')$, $\psi$ and $\psi'$ agree on $D(S')$.
\end{thm}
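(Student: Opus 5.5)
The approach is the classical energy-estimate argument of Friedrichs, in the form given by Geroch. The plan is to form the difference of the two solutions, build from the hyperbolization a current whose divergence is controlled by its own flux, and then run a Gronwall argument on a foliation of $D(S')$. Since $D(S')$ is a union of lens-shaped regions, it suffices to show that $\psi=\psi'$ on each sufficiently small compact region $K\subset D(S')$ bounded by a piece $\Sigma_0\subset S'$ and a "future" hypersurface whose conormals are causal for $h$. A connectedness or continuation argument then extends agreement to all of $D(S')$. The reason is that the set where $\psi$ and $\psi'$ agree, intersected with $D(S')$, is closed, and the local argument shows it is open relative to $D(S')$. The endpoint condition in the definition of $D(S')$ guarantees that every point is reached by such regions.

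Concretely, I will work locally in a trivialization $\check{B}|_O\cong O\times F$ and use the vertical coordinates $u^{\alpha'}$. Since $h$ annihilates $\check{U}$, every section $\sigma$ with $j^1\sigma\in\check{E}$ satisfies $h\indices{^a_{\alpha'}_{\beta'}}(\sigma)\nabla_a\sigma^{\beta'}=J_{\alpha'}(x,\sigma)$. Here $\nabla$ is the flat derivative of the trivialization and $J$ is a smooth zeroth-order term. This is where quasilinearity is used: $\check{E}$ is an affine subbundle with direction $\check{U}$. Set $\delta^{\alpha'}=\psi^{\alpha'}-\psi'^{\alpha'}$ on $O\cap O'$. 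Subtracting the two equations and applying the mean value theorem to $h$ and $J$ gives
\begin{equation*}
h\indices{^a_{\alpha'}_{\beta'}}(\psi)\nabla_a\delta^{\beta'}=C\indices{_{\alpha'}_{\beta'}}\delta^{\beta'},
\end{equation*}
where $C$ is bounded on compact sets and depends on the $1$-jets of both solutions. Now define the current $j^a=h\indices{^a_{\alpha'}_{\beta'}}(\psi)\delta^{\alpha'}\delta^{\beta'}$. Symmetry of $h$ in $\alpha',\beta'$ gives $\nabla_aj^a=(\nabla_a h)\delta\delta+2\delta^{\alpha'}C_{\alpha'\beta'}\delta^{\beta'}$, so $|\nabla_aj^a|\le c\,|\delta|^2$ on $K$.

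Next I foliate $K$ by hypersurfaces $\Sigma_t$, $t\in[0,T]$, whose conormals $n_a$ are causal for $h$ at $\psi$. For these, $n_ah^a\delta\delta\ge\epsilon|\delta|^2$ uniformly on $K$, by positive definiteness and compactness. Apply Stokes' theorem to the region between $\Sigma_0$ and $\Sigma_t$. The side boundary is generated by causal vectors. Since $\xi^an_a>0$ for every causal covector, the flux through the side boundary has the favorable sign and can be dropped. The flux through $\Sigma_0$ vanishes because $\delta=0$ on $S'$. This yields $E(t)\le c'\int_0^tE(s)\,ds$ for $E(t)=\int_{\Sigma_t}n_aj^a$, so Gronwall gives $E\equiv0$, and hence $\delta=0$ on $K$.

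I expect the main obstacle to be the geometric step: constructing the lens regions and the foliation with causal conormals, and showing that the side boundary can be chosen with the correct sign. The causal cones here are defined by $h$ evaluated along the solution $\psi$, not by a metric. My first attempt will be to take $K$ small inside a coordinate chart. There the cones of $h(\psi)$ vary continuously, so I can choose slightly narrowed constant cones and build $K$ as the intersection of a "past cone" of a point with a slab $\{0\le t\le T\}$, where $t$ is a function whose differential lies in the causal-covector cone. I then need to check that the boundary generated by vectors outside the narrowed covector cone's dual has the right orientation. The remaining work is to patch these local regions together using the endpoint-free curve clause of $D(S')$.
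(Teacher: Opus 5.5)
The paper gives no proof of this theorem. It quotes it as the classical Friedrichs--Geroch result, and your local argument is that classical energy method. The local part is sound. Quasilinearity lets you write every solution as $h^a{}_{\alpha'\beta'}(\sigma)\partial_a\sigma^{\beta'}=J_{\alpha'}(x,\sigma)$. The mean-value step gives a linear equation for $\delta$, the current $j^a=h^a{}_{\alpha'\beta'}(\psi)\delta^{\alpha'}\delta^{\beta'}$ has divergence $O(|\delta|^2)$, and Gronwall closes on a small lens over $S'$.

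One correction there: your stated reason for the sign of the lateral flux is backwards. What you need is that the outward conormal $\nu_a$ of the lateral boundary lie in the closed causal-covector cone of $h(\psi)$, so that $\nu_aj^a\geq 0$. A surface whose tangent spaces contain $h$-causal vectors $\xi^a$ has $\nu_a\xi^a=0$, so by the very definition you cite, $\nu_a$ is not causal. Unless the surface is tangent to the boundary of the cone, the flux has no sign. Your narrowed-cone idea is the right repair once it is stated correctly:
\begin{itemize}
\item choose a constant closed cone $\bar{\Gamma}^*_0$ with $\bar{\Gamma}^*_0\setminus\{0\}$ inside the causal-covector cones of $h(\psi(x))$ throughout the chart;
\item take past cones of its dual $\Gamma_0$, which is \emph{wider} than every $h$-causal vector cone.
\end{itemize}
The lateral conormals then lie on $\partial\bar{\Gamma}^*_0\setminus\{0\}$ and are strictly causal. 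So the lateral boundary is spacelike for $h(\psi)$, and its flux can be dropped.

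The genuine gap is the passage from this local statement to all of $D(S')$. Your local estimate is one-sided: it propagates $\delta=0$ from a hypersurface piece to a lens above it. It does not show that the agreement set is open in $D(S')$, because knowing $\psi(q)=\psi'(q)$ at a point says nothing about a neighbourhood of $q$. So the closed-and-open argument does not run, and ``$D(S')$ is a union of lens-shaped regions'' is exactly what has to be proved. What is missing is causal theory for the cone field of $h(\psi)$:
\begin{itemize}
\item for $p\in D(S')$, compactness of $K_p=J^-(p)\cap J^+(S')$, which is where the endpoint clause actually enters, via a limit-curve argument;
\item a time function $\tau$ near $K_p$ with $d\tau$ causal.
\end{itemize}
With these, run a first-time-of-disagreement argument. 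Let $\tau_*$ be the supremum of levels below which $\delta=0$ on $K_p$. Apply your estimate, uniformly by compactness, on the regions $J^-(y)\cap\{\tau\geq\tau_*-\eta\}$ for $y\in K_p$ slightly above $\tau_*$; their bottoms lie in $K_p\cap\{\tau\leq\tau_*\}$, where $\delta$ already vanishes. This needs either Stokes on true past cones, which are only Lipschitz with conormals a.e.\ in the closed cone, or a stability argument giving a slightly widened cone field that still has $p$ in its domain of dependence.

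Finally, the cones depend on the solution. Your current is built from $h(\psi)$, so $D(S')$ and $K_p$ must be computed from $\psi$. You must also ensure that $\psi'$ is defined on all of $K_p$. This does not follow from $p\in O\cap O'$, and without it neither the mean-value step nor the energy identity makes sense there.
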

(There is, of course, also a corresponding existence result.  We return to that below, because it will require a treatment of constraints.) Thus, initial data on $S$ suffice to fix solutions throughout $D(S)$. It is for this reason that one can think of causal vectors as representing possible signal propagation velocities.

In setting up this theorem, we did not mention either characteristics or maximality. So what role do they play? To answer this, note that the uniqueness result above applies to equations \textit{with a particular hyperbolization}. In general, equations admit multiple hyperbolizations, with different initial value surfaces associated with them. But we are really interested in the initial value formulation for the full equation, not the hypberbolized version. In other words, one would like to know what the possible initial value surfaces are for the full equation---i.e.~what the largest collection of causal covectors is.

This is where characteristics become important. To see this, note that characteristic covectors cannot be causal. This is because if $n_a$ is characteristic at a point, then
$n_ah\indices{^a_{\alpha'}_{\beta'}}$ cannot be positive definite there, because there exists a non-zero vector $v^{\alpha'}$ such that $n_ah\indices{^a_{\alpha'}_{\beta'}}v^{\alpha'}=\mathbf{0}$---namely, the vector $v^{\alpha'}$ such that $n_av^{\alpha'}\in U$ at that point. Thus, if the boundary of the collection of causal covectors for some hyperbolization contains only characteristics, we know there cannot be another hyperbolization with wider causal cones. Thus, the maximal hyperbolizations reflect the full causal structure of the equation with which we began.

To see how hyperbolizations work with an example, consider, again, the Maxwell system on some 4-manifold $M$, with metric $g_{ab}$.  (Since we are not working in the natural equation context, we mean here a particular system of equations $\check{E}$ on a bundle $\check{F}\rightarrow M$ of $2-$forms on $M$.) The presystem $\check{U}$ for this system has, as fibers, vector spaces of tensors $u_{abc}$, antisymmetric in $b$ and $c$, that satisfy $g^{ab}u_{abc}=\mathbf{0}$ and $u_{[abc]}=\mathbf{0}$. Fix $x\in \check{F}$. The symmetrizers will be symmetric tensors $h\indices{^a_{\alpha'}_{\beta'}}$---i.e.~tensors $h\indices{^{amnxy}}=h\indices{^{a[mn][xy]}}=h\indices{^{axymn}}$---with the property that $h\indices{^{amnxy}}u_{amn}=\mathbf{0}$ for all $u_{amn}\in \check{U}$. This can happen in two ways: either $h\indices{^a_{mnxy}}$ contracts the first two indices of $\check{U}$ with $g^{ab}$, or else it antisymmetrizes on all three indices. So we can get hyperbolizations for the Maxwell system by considering tensors of the form $T^{abmnxy}=\alpha(g^{b[m}g^{n][x}g^{y]a}+g^{b[x}g^{y][n}g^{m]a})+\beta g^{ab}g^{m[x}g^{y]n}$. (Here, the first term annihilates $U$ by antisymmetrizing on all three indices and the second term annihilates $U$ by taking its trace; $\alpha$ and $\beta$ are numbers chosen to ensure positive definiteness---which we get if $\alpha=2\beta$. We take $\alpha=1/2$ by convention.) We get a hyperbolization from $T^{abmnxy}$ by contracting with any timelike covector $t_b$, so that the possible hyperbolizations are tensors $h\indices{^{amnxy}}=T^{abmnxy}t_b=1/2(t^{[m}g^{n][x}g^{y]a}+t^{[x}g^{y][n}g^{m]a})+1/4t^ag^{m[x} g^{y]n}$. This is a symmetrizer by construction, and is a hyperbolization since $n_ah\indices{^{amnxy}}$ is positive definite for all $n_a$ timelike and co-oriented with $t_a$. (This follows from the fact that the Maxwell stress-energy tensor satisfies the DEC.) Finally, note that the characteristic covectors for this system are precisely the null covectors: for any null $n_a$, we have $n_an_{[b}x_{c]}\in \check{U}$ for any $x_a$ orthogonal to $n_a$. So these hyperbolizations are also maximal.

\subsection{Symmetric hyperbolicity and naturality}

We now reach the main question motivating the present work, which is how to think about symmetric hyperbolicity in the context of natural equations.  The first observation, which has been pointed out by others, is that natural equations cannot be symmetric hyperbolic.\citep{Geroch1996PDEs,MarchWeatherallnaturaltheories} There is an easy way to see this.\footnote{A somewhat different argument is given by \citet{Geroch1996PDEs}, though he does not explicitly discuss this in terms of naturality. Geroch's `combined system' would be a natural theory, in the present sense.} Say that a natural equation $(E, X)$ is `sufficiently rich' if it admits solutions $\psi:M\rightarrow XM$ such that there exists a diffeomorphism $\varphi:M\rightarrow M$ for which $X\varphi\circ \psi \neq \psi\circ \varphi$. Meanwhile, a theory `admits a well-posed initial value problem' if, given suitable `initial data' on an $(n-1)$-submanifold of $M$, there exists an open set $O$ containing that submanifold and a smooth solution on $O$, agreeing with the initial data on that submanifold, which is unique in the sense that any other such solution agreeing with the initial data coincides with that solution throughout $O$.  We have the following result (compare with Theorems \ref{thm:uniqueness} and \ref{thm:existence}):
\begin{thm}[Weatherall and March, 2025]
    No sufficiently rich natural theory admits a well-posed initial value problem.
\end{thm}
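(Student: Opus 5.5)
The plan is the hole argument, run inside the natural-equation framework. Fix a sufficiently rich natural equation $(E,e)$ on $X$. Take a solution $\psi:M\rightarrow XM$ together with a diffeomorphism $\varphi:M\rightarrow M$ for which $X\varphi\circ\psi\neq\psi\circ\varphi$. Define the pushed-forward section $\psi'=X\varphi\circ\psi\circ\varphi^{-1}$.

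The first step is to check that $\psi'$ is again a solution. Naturality of $e$ gives $J^kX\varphi[e_M[EM]]=e_M[EM]$. Since jet prolongation commutes with the induced bundle morphisms, $j^k\psi'=J^kX\varphi\circ j^k\psi\circ\varphi^{-1}$. So $j^k\psi'$ lies in $e_M[EM]$ wherever $j^k\psi$ does.

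The second step is to localize the difference away from the initial data. For this I first modify $\varphi$ so that it is the identity outside a compact set $K$. The richness hypothesis only gives some diffeomorphism with $X\varphi\circ\psi\neq\psi\circ\varphi$, and that inequality holds at some point $q$. I expect that one can replace $\varphi$ by the time-$1$ flow of a compactly supported vector field that still moves $\psi$ near $q$. Concretely: $X\varphi\circ\psi\neq\psi\circ\varphi$ should already be detectable infinitesimally, as a nonvanishing Lie derivative of $\psi$ along some vector field. Cutting that field off with a bump function supported near $q$ leaves the Lie derivative unchanged at $q$. A short flow of the cut-off field then moves $\psi$ at $q$ while fixing everything outside a small compact set. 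I expect this reduction from the global ``sufficiently rich'' hypothesis to a compactly supported diffeomorphism to be the main obstacle. The difficulty is that it requires passing from a non-invariance under some diffeomorphism to non-invariance under one near the identity, for instance by connecting within the identity component, or by arguing that a section invariant under all flows of compactly supported fields is invariant under all diffeomorphisms.

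The third step is to choose the initial data. Take an $(n-1)$-submanifold $S$ disjoint from $K$, with $q\in O$ for every open set $O$ on which the well-posedness statement is meant to apply. If $q$ is not in such a neighborhood, I would shrink $K$ toward $S$, or choose $S$ passing close to $q$ but missing the support. Then $\psi$ and $\psi'$ agree on a neighborhood of $S$, so they carry the same initial data: the same restriction to $S$ and all normal derivatives. However, they differ at $q\in O$. This contradicts the uniqueness clause of well-posedness, which is the same mechanism that Theorems \ref{thm:uniqueness} and \ref{thm:existence} would otherwise enforce. The remaining care is just to ensure that the open set $O$ in the definition must contain points arbitrarily near $S$, which holds because $O$ is an open set containing $S$. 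So it suffices to place the support of $\varphi$ inside $O$ but off $S$. Since $O$ can be arbitrary, this placement has to be done after $O$ is given, by conjugating the compactly supported diffeomorphism into $O\setminus S$.
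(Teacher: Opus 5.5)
Your overall strategy is the hole argument that the paper points to; the paper itself only cites the result without proof. You push a solution forward by a diffeomorphism that is trivial near the data surface, and use naturality to see that the result is again a solution. Step 1 is correct. There is, however, a genuine gap exactly where you suspected, in Step 2.

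\textbf{Step 2.} Your reduction works when $\varphi$ is isotopic to the identity: differentiating $t\mapsto(\varphi_t)_*\psi$ along an isotopy from $\mathrm{id}_M$ to $\varphi$ produces a vector field $\eta$ with $\mathcal{L}_\eta\psi\neq 0$. For $\varphi$ not isotopic to the identity it can fail. Your fallback claim is also false: a section invariant under all compactly supported flows need not be invariant under all diffeomorphisms. With ``sufficiently rich'' read literally as in the paper, the gap cannot be closed at all. Take $X$ to be the natural line bundle of pseudoscalars (scalars twisted by orientation), and $E$ the first-order natural equation $\mathrm{d}f=0$. On $\mathbb{R}^n$ a reflection sends the constant pseudoscalar $1$ to $-1$, so the theory is sufficiently rich in the stated sense. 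Yet every Lie derivative of every solution vanishes. Moreover, on a tubular neighbourhood of $S$ a solution is uniquely determined by locally constant data on $S$. What the hole argument actually needs is a localized hypothesis: a solution with $\mathcal{L}_\xi\psi\neq 0$ somewhere, equivalently one moved by some diffeomorphism isotopic to the identity. You should take that as the meaning of richness rather than try to derive it from an arbitrary global $\varphi$.

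\textbf{Step 3.} This step has a separate quantifier problem. Well-posedness supplies $O$ only after $S$ and the data are fixed. So you need $\psi$ to be movable inside \emph{every} neighbourhood of $S$. You cannot demand $q\in O$ for all $O\supseteq S$ unless $q\in\overline{S}$.

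Conjugation does not rescue this. The map $\chi\circ\varphi_c\circ\chi^{-1}$ moves $\psi$ iff $\varphi_c$ moves $(\chi^{-1})_*\psi$. Near $q$, that section is just $\psi$ near $\chi(q)\in O\setminus S$, where $\psi$ may be completely invariant (e.g.\ identically zero).

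The fix uses openness. If $\mathcal{L}_\xi\psi\neq 0$ at $q$, it is nonzero on an open $W\ni q$. Choose $S$ through $q$ with data $\psi|_S$, assuming such restrictions count as suitable data. Given any open $O\supseteq S$:
\begin{enumerate}
\item Pick a small ball $B$ with closure in $O\cap W$ and disjoint from $S$.
\item Cut $\xi$ off inside $B$ so that it equals $\xi$ near the centre of $B$, and let $\chi_t$ be its flow.
\item For small $t\neq 0$, $(\chi_t)_*\psi$ differs from $\psi$ at the centre of $B$, but agrees with $\psi$ outside $B$, hence on a neighbourhood of $S$.
\end{enumerate}
Then $\psi|_O$ and $((\chi_t)_*\psi)|_O$ are two solutions on $O$ (by your Step 1) with the same data on $S$ that differ in $O$. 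This contradicts uniqueness.
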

Now observe that quasilinear symmetric hyperbolic systems in the sense discussed thus far \emph{do} admit a well-posed initial value problem.\footnote{Again, we are glossing over subtleties here, primarily having to do with completeness and integrability of constraints for such systems, which e.g.~enter into the characterisation of `suitable' initial data---see \citet{Geroch1996PDEs}---but these details will not matter for our purposes in this section and we address them below.} Thus no sufficiently rich natural theory can be symmetric hyperbolic. (There is also a local version of the argument, which begins by noting that if any covector is characteristic for a natural theory, then all covectors must be characteristic.  This can be proved using techniques in section \ref{sec:causalityetc}.

It follows from the foregoing that naturalized equations cannot be symmetric hyperbolic---even in cases where the equation with which one started \emph{is} symmetric hyperbolic.  The reason has to do with how diffeomorphisms act on background structure.  One would like to understand the conditions under which one can take a naturalized equation, or any equation with a suitable quotient structure as arises from naturalization, and recover a symmetric hyperbolic system, relative to some (or any) choice of background structure. The Maxwell example we have been considering is a motivating case.  Maxwell's equations can be expressed relative to any Lorentz-signature metric; and relative to any Lorentz-signature metric, they are symmetric hyperbolic.  Naturalizing Maxwell's equations captures the sense in which they depend on the choice of background structure.  But can we capture the way in which that dependence interacts with symmetric hyperbolicity, so that the fact that the symmetric hyperbolicity also depends on the metric is apparent?

%We can sharpen the question as follows.  Let $(E, e)$ be a natural, quasilinear equation on a natural bundle $X$, and $(Y, y)$ be a natural quotient of $X$.  Fix $M\in\mathrm{ob}(\mathcal{M}_n)$ and some section $\psi:M\rightarrow \mathfrak{Y}M$ such that $EM_{|\psi}$ is symmetric hyperbolic. When does this also occur for every other choice of background structure (on every $M\in\mathrm{ob}(\mathcal{M}_n)$)?\footnote{An alternative approach would be to develop a theory of `almost symmetric hyperbolicity' which is appropriate for natural equations, and show that whenever a natural equation is `almost symmetric hyperbolic' in this sense, it gives rise to symmetric hyperbolic equations relative to any (appropriate) choice of background structure. This approach would take us too far afield, here, but we suggest that it would be a worthwhile one to pursue. For suggestions in this direction, see \citet{Geroch1996PDEs}.}

In what follows, we will focus on the case of minimally coupled equations.  Before we discuss a sufficient condition for an affirmative answer to the question just posed, we will first highlight a strategy that does not work, because it is informative about the structure of the problem. Suppose that we have a minimally coupled quasilinear equation $(E,e)$ on $X=G\times B$. Suppose further that we could find a first-order natural operator $H:G\rightarrow T\otimes S^2V^*B$ with the property that $H(M, \psi)$ is a (maximal) hyperbolization of $EM_{|\psi}$. (One could think of such an operator as an assignment of a hyperbolization to every choice of background structure.) Then we would have that $H(M, \psi)$ is a (maximal) hyperbolization of $EM_{|\psi}$ for every $M\in\mathrm{ob}(\bar{\mathcal{M}}_n)$ and every section $\psi:M\rightarrow GM$. This follows immediately from the definition of a maximal hyperbolization, the definition of a natural operator, and theorems \ref{thm:fletcherweatherall}, \ref{thm:denaturalisedquasilinear}, and \ref{lem:wecanshufflethepresystemaround}.

However, this strategy will not be successful, because the only first-order natural operator $H:G\rightarrow T\otimes S^2V^*B$ is the zero operator (\citet[\S 33.2]{Kolar+etal}). However, this approach does suggest an alternative strategy, which is to find a first-order natural operator acting on $G$ which returns not maximal hyperbolizations, but mixed index tensor fields that encode a space of maximal hyperbolizations, relative to some further choices. Indeed, as we saw above, this was exactly how we defined maximal hyperbolizations for the Maxwell example---by introducing a tensor $T\indices{^a_{\alpha'}^b_{\beta'}}$, depending only on the metric to first order, with the property that for any timelike $t_a$, $t_nT\indices{^n_{\alpha'}^b_{\beta'}}$ is a hyperbolization.  (It follows that non-zero operators of that character at least sometimes exist.) This motivates the following definitions:

\begin{defn}[Natural symmetrizer]
    Let $(E,e)$ be a minimally-coupled quasilinear natural equation on $X=G\times B$ with presystem $(U,u)$. A natural symmetrizer is a natural transformation $H:G\rightarrow S^2T\otimes S^2V^*B$, such that $U$ is a subobject of the kernel of its induced morphism $H:(T^*\otimes VB)\times G\rightarrow (T\otimes V^*B)\times G$ of (special) natural vectorial constructions over $G$.
\end{defn}

Note that we have defined natural symmetrizers as natural operators on $G$.  This means they do not depend on field values, only on metrics.  One could also consider a more general definition that permits field dependence, though that would complicate arguments in the following.

We say that a natural symmetrizer is a \emph{natural maximal hyperbolization} iff there exist some $M\in\mathrm{ob}(M_n)$, some section $\psi:M\rightarrow GM$, and some covector $\xi_a$ such that $H(M,\psi)=T\indices{^a_{\alpha'}^b_{\beta'}}$ is such that $\xi_nT\indices{^n_{\alpha'}^a_{\beta'}}$ is a maximal hyperbolization of $EM_{|\psi}$.

At a first pass, this definition of a natural maximal hyperbolization may appear to be insufficiently general and/or natural. But in fact, we could have proceeded differently. The following theorem provides an equivalent characterization of these natural maximal hyperbolizations:
\begin{thm}\label{thm:equivalence}
    Let $(E,e)$ be a minimally-coupled quasilinear natural equation on $X=G\times B$ and let $H:G\rightarrow S^2T\otimes S^2V^*B$ be a natural symmetrizer. The following are equivalent:
    \begin{itemize}
        \item[i.] $H$ is a natural maximal hyperbolization.
        \item[ii.] For any $M\in\mathrm{ob}(M_n)$ and any section $\psi:M\rightarrow GM$, every null covector is characteristic for $EM_{|\psi}$, and the induced morphism $H: (S^2T^*)\times G\rightarrow (S^2_+V^*B)\times G$ of special natural vectorial constructions over $G$ restricts to a morphism $H:(I^*_\pm\times I^*_\pm)\times G\rightarrow (S^2_+V^*B)\times G$ or $H:(I^*_\pm\times I^*_\mp)\times G\rightarrow (S^2_+V^*B)\times G$ of natural constructions over $G$.
    \end{itemize}
    Here, $I^*_\pm\times I^*_\pm$, $I^*_\pm\times I^*_\mp$ are the special natural constructions over $G$ whose fibers are manifolds of pairs of co-directed (respectively oppositely-directed) timelike covectors, and $S^2_+V^*B$ is the natural bundle of positive definite vertical tensors on $B$.
\end{thm}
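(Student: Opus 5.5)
We prove the two implications separately. The key ingredient throughout is the transport result: Theorem~\ref{thm:fletcherweatherall}, together with Lemma~\ref{lem:(i)} and Theorem~\ref{lem:wecanshufflethepresystemaround}. Given any $(M,\psi,p)$ and $(M',\psi',p')$, there is a local diffeomorphism $\varphi$ with $\varphi(p)=p'$, matching the metrics to first order at $p'$, and carrying the fiber of $EM_{|\psi}$ over $p$ (and its presystem) onto the fiber of $EM'_{|\psi'}$ over $p'$. Because $H$ is a first-order natural operator on $G$, $\varphi$ also carries $H(M,\psi)$ at $p$ to $H(M',\psi')$ at $p'$. It carries characteristic covectors to characteristic covectors, and causal cones of a contracted symmetrizer $\xi_nT^{n}{}_{\alpha'}{}^{a}{}_{\beta'}$ to those of the pushed-forward contraction. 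So every pointwise algebraic property of the triple (presystem, $H$, covectors) is the same at every point of every spacetime, up to a pointwise Lorentz transformation of the cotangent space.

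\emph{(i)$\Rightarrow$(ii).} Suppose $\xi_nT^{n}{}_{\alpha'}{}^{a}{}_{\beta'}$ is a maximal hyperbolization of $EM_{|\psi}$ for some $M,\psi,\xi$.
\begin{enumerate}
\item \emph{Characteristics on the null cone.} Fix a point $p$. The set of characteristic covectors at $p$ is invariant under the stabilizer of $j^1_p\psi$. Apply the transport argument with $M'=M$, $\psi'=\psi$, $p'=p$; the freedom in Theorem~\ref{thm:fletcherweatherall} then gives all local diffeomorphisms fixing $p$ and $j^1_p\psi$, whose linearizations realize the full Lorentz group $O(g_p)$. The causal cone $C$ of the maximal hyperbolization is a nonempty open convex cone whose boundary consists of characteristics, and characteristics are never causal.
\item \emph{$C$ is a timelike cone.} First show that $C$ contains a timelike covector. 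If it did not, its boundary would contain spacelike characteristics, and the Lorentz orbit of one of these would be characteristic. That orbit fills the spacelike hyperboloid, which meets $C$; this contradicts the fact that characteristics are never causal. Since the $O(g_p)$-orbit of a timelike covector fills a sheet of the timelike cone, the boundary contains null covectors, and their $O(g_p)$-orbits give all null covectors. Hence every null covector is characteristic, and by transport this holds on every $(M,\psi)$.
\item \emph{$\xi$ is timelike.} This follows because positive-definiteness of $\xi_n n_b T^{nb}$ must hold on the causal cone. Using the symmetry of $T$ in its two tangent indices, together with the Lorentz invariance just established, show that for any pair of co-directed timelike covectors $(\xi',n')$, some Lorentz transformation together with rescaling moves $(\xi',n')$ to a pair $(\xi,n)$ with $n\in C$. (If $C$ lies in the sheet opposite to $\xi$, use the oppositely-directed pairs instead.) The result is that $H$ maps $I^*_\pm\times I^*_\pm$, or $I^*_\pm\times I^*_\mp$, into $S^2_+V^*B$, as stated in (ii).
\end{enumerate}

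\emph{(ii)$\Rightarrow$(i).} Pick any $M$, $\psi$ and a timelike $\xi$. By (ii), $n_a\xi_bT^{ab}$ is positive definite for every timelike $n$ in the appropriate sheet, so that sheet lies in the causal cone. The causal cone cannot extend beyond the null cone, since null covectors are characteristic and characteristics are never causal. Hence the causal cone is exactly the sheet, and its boundary consists of null covectors, all of which are characteristic. Thus $\xi_nT^{n}{}_{\alpha'}{}^{a}{}_{\beta'}$ is a maximal hyperbolization, and $H$ is a natural maximal hyperbolization.

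\emph{Main obstacle.} The hardest step is step~1 of (i)$\Rightarrow$(ii): showing that the isotropy of $j^1_p\psi$ acts on $T^*_pM$ through the whole Lorentz group, so that the cone of characteristics is Lorentz-invariant. To handle it, I would first use normal coordinates to compose any linear isometry with $\exp$, in the manner of the proof of Theorem~\ref{thm:itsabundle}. That composition fixes $p$ and preserves the metric to first order at $p$. The transport lemmas then give invariance of the characteristic set, of the presystem and of $H_p$. A secondary subtlety is the $\pm$ bookkeeping, which arises because $C$ could lie in either sheet relative to $\xi$.
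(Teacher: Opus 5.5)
Your route is the paper's. You use the first-order isotropy of $j^1_p\psi$ to make the characteristic set Lorentz invariant; the paper does this with Minkowski isometries plus Theorem~\ref{thm:fletcherweatherall}, while your $\exp_p\circ\Lambda\circ\exp_p^{-1}$ works equally well. You then force the causal cone $C$ of $\xi_nT^{n}{}_{\alpha'}{}^{a}{}_{\beta'}$ to be exactly one timelike sheet with null, characteristic boundary, and spread positive-definiteness to all co- or oppositely-directed pairs. Your (ii)$\Rightarrow$(i) matches the paper's.

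\textbf{The gap: step 3 of (i)$\Rightarrow$(ii).} You never show that $\xi$ is timelike. The reason you give (``positive-definiteness of $\xi_nn_bT^{nb}$ must hold on the causal cone'') is just the definition of $C$. Steps 1--2 tell you where $C$ lies, not where $\xi$ lies. The rest of step 3 depends on this fact. To carry an arbitrary timelike pair $(\xi',n')$ onto a pair $(\xi,n)$ by a Lorentz transformation and rescaling, $\xi$ must lie in the orbit of $\xi'$, i.e.\ be timelike. Your parenthetical about $C$ lying ``in the sheet opposite to $\xi$'' already presupposes it. This is exactly the content of the paper's Lemma~\ref{lem:thefirstargument}.

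\textbf{How to fill it with your own tools.} Either of the following works.
\begin{enumerate}
\item If $\xi$ were spacelike, its isotropy group would contain time-reversing Lorentz transformations (reflect a timelike direction orthogonal to $\xi$). These fix $\xi$ and, by naturality, fix $T$ up to a congruence in the vertical indices, so they would map $C$ to itself. But they swap the two timelike sheets, so this is a contradiction. If $\xi$ were null, take any $t\in C$. By the symmetry of $T$ in its tangent indices, $\xi$ is causal for $t_nT^{n}{}_{\alpha'}{}^{a}{}_{\beta'}$. Yet step 2 makes $\xi$ characteristic, and characteristics are never causal.
\item Alternatively, fix $t\in C$ and write $\xi=\alpha t+\sigma$ with $\sigma\perp t$. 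By symmetry $\xi$ lies in the causal cone $C_t$ of $t_nT^{n}{}_{\alpha'}{}^{a}{}_{\beta'}$, and $C_t$ is preserved by the reflection of $t^\perp$. Convexity then gives $\alpha t\in C_t$, so $\alpha\neq 0$. Along the segment $\alpha t+s\sigma$, $0\le s\le 1$, the norm cannot vanish, because $C_t$ contains no null covectors. Hence $\xi$ has the sign of $t$ and is timelike.
\end{enumerate}

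\textbf{A second point to make explicit.} The diffeomorphisms fixing $j^1_p\psi$ act on the fiber over $p$ through $X\varphi$ and move the point $x$. So the Lorentz image of a covector characteristic at $x$ is characteristic only at some other point $y$ of that fiber. Your contradictions of the form ``characteristics are never causal'' therefore go through only because the causal cone of $\xi_nT$ is the same at every point of the fiber, which holds since $H$ depends only on the metric. The paper invokes exactly this in Lemma~\ref{lem:characteristicandcausal}.
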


We begin with several lemmas.

\begin{lem}\label{lem:isn'titsymmetric}
    Let $(E,e)$ be a minimally-coupled quasilinear natural equation on $X=G\times B$. Then for any $M\in\mathrm{ob}(\mathcal{M}_n)$, any section $\psi:M\rightarrow GM$, and any $x\in XM_{|\psi}$, if any timelike (spacelike, non-zero null) covector at $p=\pi_{XM_{|\psi}}[x]$ is characteristic for $EM_{|\psi}$ at $x$, then every timelike (spacelike, null) covector at $p$ is characteristic for $EM_{|\psi}$ at some $y\in \pi^{-1}_{XM_{|\psi}}[p]$.
\end{lem}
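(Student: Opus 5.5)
The plan is to use naturality together with minimal coupling to move characteristic data between covectors of the same causal type. The key tool is the local diffeomorphism of Lemma \ref{lem:(i)}, applied with $M'=M$, $\psi'=\psi$ and $p'=p$. Such a diffeomorphism fixes $p$ and fixes the $1$-jet of the metric at $p$, so it preserves the fiber $EM_{|\psi}\cap\pi^{-1}_{J^1XM}[p]$. Its differential at $p$ can then be any Lorentz transformation of $(T_pM,g_p)$.

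The first step is to produce, for each Lorentz transformation $\Lambda$ of $T_pM$ (with respect to $g_{ab}$ at $p$), a local diffeomorphism $\varphi$ with the following three properties: $\varphi(p)=p$, $d\varphi_p=\Lambda$, and $j^1_p(G\varphi(\psi))=j^1_p\psi$. I would build it in normal coordinates, taking $\varphi=\exp_p\circ\Lambda\circ\exp_p^{-1}$ on a normal neighbourhood. This is the same device used in the proof of Theorem \ref{thm:itsabundle}. It preserves the metric at $p$ because $\Lambda$ is an isometry of $g_p$. It preserves the first derivatives of the metric at $p$ because, in normal coordinates, the first derivatives of $g$ vanish at $p$ and $\varphi$ is linear in those coordinates. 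By minimal coupling (the argument of Lemma \ref{lem:(i)}), it follows that $J^1X\varphi$ maps $EM_{|\psi}\cap\pi^{-1}_{J^1XM}[p]$ onto itself. By Theorem \ref{lem:wecanshufflethepresystemaround} and the naturality of the presystem, it likewise maps the presystem fiber $UM_{|\psi}$ over $p$ onto itself, taking the part over $x$ to the part over $y:=X\varphi(x)$, where $y$ lies in $\pi^{-1}_{XM_{|\psi}}[p]$.

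The second step is to check that the characteristic condition transforms covariantly. Suppose $n_a$ is characteristic at $x$, witnessed by a nonzero vertical $v^{\alpha'}$ with $n_av^{\alpha'}\in UM_{|\psi}$ at $x$. The map $J^1X\varphi$ acts on the $K^1$ fiber (that is, on $T^*_pM\otimes V_xXM$) by $(\varphi^{-1})^*\otimes dX\varphi$. Applying it to $n_av^{\alpha'}$ gives $((\Lambda^{-1})^*n)_a\,(dX\varphi\, v)^{\alpha'}$, which therefore lies in $U$ at $y$. The vector $dX\varphi\,v$ is nonzero because $X\varphi$ is fiberwise a diffeomorphism. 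Hence $(\Lambda^{-1})^*n$ is characteristic at $y$.

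The final step is the transitivity of the Lorentz group on each orbit type of covectors at $p$. The group $O(1,n-1)$, including time reversal, acts transitively on timelike covectors of a fixed norm, on spacelike covectors of a fixed norm, and on nonzero null covectors. To handle the remaining freedom in the norm, I would observe that the characteristic condition is invariant under scaling $n_a\mapsto\lambda n_a$ for $\lambda\neq0$, since $U$ is a vector space. Combining scaling with transitivity covers every covector of the same causal type. The main obstacle is the first step: one must ensure that $\varphi$ preserves the full $1$-jet of $\psi$ at $p$, not merely the metric value, and that minimal coupling then really gives invariance of the fiber. This is precisely where normal coordinates, with their vanishing Christoffel symbols, are needed.
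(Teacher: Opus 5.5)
Your argument is correct, but it takes a different route from the paper's.

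\textbf{The paper's route.} It first proves the claim in Minkowski spacetime. There every Lorentz transformation at $p$ extends to a global isometry $\chi$ with $G(\chi)(\psi)=\psi$ exactly, so preservation of the equation and presystem fibers over $p$ is immediate. It then transfers the conclusion to an arbitrary $(M',\psi',p')$ using Theorem \ref{thm:fletcherweatherall} together with Lemma \ref{lem:(i)} and Theorem \ref{lem:wecanshufflethepresystemaround}. The transferring diffeomorphism matches the $0$-jets of the metrics, so its differential carries causal types to causal types.

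\textbf{Your route.} You work directly at the given $(M,\psi,p)$. The map $\exp_p\circ\Lambda\circ\exp_p^{-1}$ realizes any $\Lambda\in O(g_p)$ as the differential of a local diffeomorphism that fixes $j^1_p\psi$. This works because the first derivatives of the metric vanish at the center of normal coordinates and the map is linear there. In effect you reprove the special case of Theorem \ref{thm:fletcherweatherall} that you need, with a prescribed differential at a fixed point.

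\textbf{Comparison.} Your version is self-contained and local, and it avoids both the Minkowski detour and the transfer step. The paper's version makes jet preservation trivial, because it uses exact isometries, and it reuses machinery needed elsewhere. Conjugating a Minkowski isometry by the Fletcher--Weatherall diffeomorphism yields essentially your map. Like the paper, you apply the argument of Lemma \ref{lem:(i)} and Theorem \ref{lem:wecanshufflethepresystemaround} to an arbitrary diffeomorphism preserving the $1$-jet, not just their existential statements. That is legitimate.

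\textbf{Two small points to tidy.}
\begin{enumerate}
\item Since the Lorentz group is noncompact, $\Lambda$ need not preserve a given normal domain $\tilde O\subset T_pM$. You should therefore define $\varphi$ on $\exp_p(\tilde O\cap\Lambda^{-1}\tilde O)$, which is still an open neighbourhood of $p$.
\item The conclusion for the null case includes the zero covector. You should say that it is trivially characteristic, since any nonzero vertical $v^{\alpha'}$ gives $0\cdot v^{\alpha'}=\mathbf{0}\in U$. The paper notes this explicitly.
\end{enumerate}
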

\begin{proof}
    Recall that a covector $n_a$ at a point $p\in M$ is characteristic for $EM_{|\psi}$ at a point $x\in\pi_{XM_{|\psi}}^{-1}[p]$ iff there exists a non-zero vertical vector $v^{\alpha'}$ at $x$ such that $n_av^{\alpha'}$ is in $UM_{|\psi}$ at $x$, where $(U,u)$ is the presystem for $(E,e)$. We know, by theorem \ref{lem:wecanshufflethepresystemaround}, that for any $M$, $M'\in\mathrm{ob}(\mathcal{M}_n)$, sections $\psi:M\rightarrow GM$, $\psi':M'\rightarrow GM'$ and any $p\in M$, $p'\in M'$, there exist open neighborhoods $V$, $V'$ of $p$, $p'$ and a diffeomorphism $\chi:V\rightarrow V'$ such that $J^1X(\chi)[UM_{|\psi}\cap \pi_{J^1XM}^{-1}[p]]=UM_{|\psi'}\cap \pi_{J^1XM'}^{-1}[p']$, and hence that for any such diffeomorphism $\chi$, $n_av^{\alpha'}\in UM_{|\psi}$ at $x$ iff $K^1X(\chi)(n_av^{\alpha'})\in UM'_{|\psi'}$ at $X(\chi)(x)$. Now consider the case where $M\simeq\mathbb{R}^n$ and $\psi$ is the Minkowski metric on $M$. Then given any two unit timelike (spacelike, non-zero null) covectors $n_a$, $n'_a$ at $p$, there exists a diffeomorphism $\chi:M\rightarrow M$ such that $G(\chi)(\psi)=\psi$ and $\chi_*n_a=n'_a$. In particular, and using that the elements of $U$ at each point form a vector space and that the zero covector is always characteristic, this implies that for any $x\in\pi^{-1}_{XM_{|\psi}}[p]$ if any timelike (spacelike, non-zero null) covector with respect to $\psi$ is characteristic for $EM_{|\psi}$ at $x$, then every timelike (spacelike, null) covector at $p$ with respect to $\psi$ is characteristic for $EM_{|\psi}$ at some $y\in\pi^{-1}_{XM_{|\psi}}[p]$. But then theorems \ref{thm:fletcherweatherall} and \ref{lem:wecanshufflethepresystemaround} and lemma \ref{lem:(i)} imply that for any other $M'$, $\psi'$, $p'$ the same is true.
\end{proof}

\begin{lem}\label{lem:characteristicandcausal}
    Let $(E,e)$ be a minimally-coupled quasilinear natural equation on $X=G\times B$, and let $H:G\rightarrow S^2T\otimes S^2V^*B$ be a natural maximal hyperbolization thereof. Let $M\in\mathrm{ob}(\mathcal{M}_n)$ and $\psi:M\rightarrow GM$ be such that there exists some $\xi_a$ such that $\xi_nT\indices{^n_{\alpha'}^b_{\beta'}}$ is a maximal hyperbolization of $EM_{|\psi}$, where $T\indices{^a_{\alpha'}^b_{\beta'}}=H(M,\psi)$. Then for any $x\in XM_{|\psi}$, if some timelike (null, spacelike) covector at $p=\pi_{XM_{|\psi}}[x]$ is causal for $\xi_nT\indices{^n_{\alpha'}^b_{\beta'}}$ at $x$, then no timelike (non-zero null, spacelike) covector is characteristic for $EM_{|\psi}$ at any $y\in\pi^{-1}_{XM_{|\psi}}[p]$.
\end{lem}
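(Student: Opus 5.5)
The plan is a contrapositive argument that pairs the observation made after Theorem \ref{thm:uniqueness} (a characteristic covector can never be causal) with the ``spreading'' result, Lemma \ref{lem:isn'titsymmetric}. Write $h\indices{^b_{\alpha'}_{\beta'}}=\xi_nT\indices{^n_{\alpha'}^b_{\beta'}}$ for the given maximal hyperbolization of $EM_{|\psi}$. Suppose, for contradiction, that some timelike (respectively non-zero null, spacelike) covector $m_a$ at $p$ is characteristic for $EM_{|\psi}$ at some $y\in\pi^{-1}_{XM_{|\psi}}[p]$. Lemma \ref{lem:isn'titsymmetric} then says that \emph{every} covector of the same causal type at $p$ is characteristic for $EM_{|\psi}$ at some point of the fiber over $p$. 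By hypothesis there is a covector $n_a$ of that type at $p$ which is causal for $h$ at $x$.

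The main obstacle is a mismatch of fiber points. Lemma \ref{lem:isn'titsymmetric} only guarantees that $n_a$ is characteristic at \emph{some} $y'$ over $p$, whereas causality of $n_a$ is asserted at $x$. To bridge this I will use that the natural symmetrizer $H$ depends only on $\psi$, so $T=H(M,\psi)$, and hence $h$, is determined by $j^1_p\psi$ and is the same tensor at every point of the fiber over $p$. I also need the symmetrizer condition to hold at every point of the fiber: $U$ lies in the kernel of $H$ at every point of $XM_{|\psi}$ over $p$, which is exactly what the definition of natural symmetrizer provides.

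Granting this, the key step goes as follows. Since $n_a$ is characteristic at $y'$, there is a non-zero vertical $v^{\alpha'}$ with $n_av^{\alpha'}\in UM_{|\psi}$ at $y'$. The symmetrizer property gives $h\indices{^b_{\alpha'}_{\beta'}}n_bv^{\alpha'}=\mathbf{0}$, so $n_bh\indices{^b_{\alpha'}_{\beta'}}$ has a non-trivial kernel at $y'$ and is not positive definite there.

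I then transport this failure of positive definiteness back to $x$. Because $n_bh\indices{^b}$ at $y'$ and at $x$ are both built from the same $T$, positive definiteness at $x$ and at $y'$ should coincide once the vertical spaces over the two points are identified, using the product structure $X=G\times B$ and the fact that $T$ is a section of a natural vectorial construction over $G$ alone. This identification is the step I expect to need care: vertical spaces at different fiber points of $B$ are not canonically isomorphic in general. If it fails, the fallback is to reread Lemma \ref{lem:isn'titsymmetric} more sharply. Its proof uses isometries of Minkowski space fixing $p$, and these act on the fiber of $XM_{|\psi}$ over $p$; combining them with Lemma \ref{lem:(i)} should show that $n_a$ is characteristic at a point of the orbit of $x$ under the stabilizer of $j^1_p\psi$. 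Such an isometry preserves $T$ and maps $m_a$ to $n_a$, so it transports the non-causality from that point back to $x$.

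Either way, $n_a$ is not causal for $h$ at $x$, which contradicts the hypothesis. Hence no covector of that causal type is characteristic at any point of the fiber over $p$.
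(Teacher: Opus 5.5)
Your proposal follows the paper's proof. Both argue by contradiction, use Lemma \ref{lem:isn'titsymmetric} to make the causal covector itself characteristic at some point of the fiber over $p$, and transfer causality of $\xi_nT\indices{^n_{\alpha'}^b_{\beta'}}$ between fiber points because $H$ depends only on $\psi$. The paper reads this last step straight off the definition of a natural symmetrizer, tacitly using the identification of vertical spaces along the fiber that you rightly flag (automatic when $B$ is a vector or affine bundle), and then concludes because characteristics are never causal.

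Your fallback is not needed, and it would not work in general. It requires an element of the stabilizer of $j^1_p\psi$ carrying the pair (characteristic point, $m_a$) onto $(x, n_a)$. But that stabilizer has many orbits on the fiber over $p$ (for two-forms, $F_{ab}F^{ab}$ is constant on each orbit), so no such element need exist.
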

\begin{proof}
    Suppose otherwise, i.e., there is some timelike (null, spacelike) $\eta_a$ at $p$ which is causal for $\xi_nT\indices{^n_{\alpha'}^b_{\beta'}}$ at $x\in\pi^{-1}_{XM_{|\psi}}[p]$, but there is some timelike (non-zero null, spacelike) $\eta'_a$ at $p$ which is characteristic for $EM_{|\psi}$ at some $y\in\pi^{-1}_{XM_{|\psi}}[p]$. Then by lemma \ref{lem:isn'titsymmetric}, $\eta_a$ is characteristic for $EM_{|\psi}$ at some $z\in\pi^{-1}_{XM_{|\psi}}[p]$. But by the definition of a natural maximal hyperbolization, we know that for any $p\in M$ and any $x,z\in\pi^{-1}_{XM_{|\psi}}[p]$, a covector at $p$ is causal for $\xi_nT\indices{^n_{\alpha'}^b_{\beta'}}$ at $x$ iff it is causal for $\xi_nT\indices{^n_{\alpha'}^b_{\beta'}}$ at $z$. So $\eta_a$ is causal for $\xi_nT\indices{^n_{\alpha'}^b_{\beta'}}$ at $z$. This is impossible, since characteristics cannot be causal.
\end{proof}

\begin{lem}\label{lem:nocausalspacelike}
    Let $(E,e)$ be a minimally-coupled quasilinear natural equation on $X=G\times B$, and let $H:G\rightarrow S^2T\otimes S^2V^*B$ be a natural maximal hyperbolization thereof. Let $M\in\mathrm{ob}(\mathcal{M}_n)$ and $\psi:M\rightarrow GM$ be such that there exists some $\xi_a$ such that $\xi_nT\indices{^n_{\alpha'}^b_{\beta'}}$ is a maximal hyperbolization of $EM_{|\psi}$, where $T\indices{^a_{\alpha'}^b_{\beta'}}=H(M,\psi)$. Then for any $x\in XM_{|\psi}$ no spacelike covectors at $p=\pi_{XM_{|\psi}}[x]$ are causal for $\xi_nT\indices{^n_{\alpha'}^b_{\beta'}}$.
\end{lem}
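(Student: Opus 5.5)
The plan is to argue by contradiction, playing the maximality of $\xi_nT\indices{^n_{\alpha'}^b_{\beta'}}$ against Lemma \ref{lem:characteristicandcausal} in its spacelike case. Suppose that at some $x\in XM_{|\psi}$, with $p=\pi_{XM_{|\psi}}[x]$, a spacelike covector $\eta_a$ is causal for $h\indices{^a_{\alpha'}_{\beta'}}:=\xi_nT\indices{^n_{\alpha'}^a_{\beta'}}$. Lemma \ref{lem:characteristicandcausal} then says that no spacelike covector at $p$ is characteristic for $EM_{|\psi}$ at any point of $\pi^{-1}_{XM_{|\psi}}[p]$, and in particular not at $x$. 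To reach a contradiction it therefore suffices to exhibit a spacelike covector at $p$ that is characteristic at $x$.

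Such a covector will come from the boundary of the causal cone. Let $C_x\subset T^*_pM$ be the set of covectors causal for $h$ at $x$. As noted after the definition of causal covectors, $C_x$ is an open convex cone. It is proper: if $n_a\in C_x$ then $-n_ah\indices{^a_{\alpha'}_{\beta'}}$ is negative definite, so $-n_a\notin C_x$. This uses that the vertical fiber is nonzero, which we assume, the statement being vacuous otherwise. Hence $-\eta_a\notin C_x$.

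The key step is to join $\eta_a$ to $-\eta_a$ by a continuous path lying entirely in the set of spacelike covectors at $p$. For $n\geq 3$ this set is connected, since it is the complement of the closed solid double cone of causal covectors and is diffeomorphic to $\mathbb{R}\times S^{n-2}\times\mathbb{R}_{>0}$. Concretely, in an orthonormal frame at $p$, rotate the spatial part of $\eta_a$ to its negative while keeping its time component fixed and then scaling down the time component as needed. Along this path there is a first parameter value at which the path leaves the open set $C_x$. The covector $\zeta_a$ at that parameter value lies in $\partial C_x$, is nonzero, and is spacelike because the whole path is spacelike. By maximality of the hyperbolization $h$, every point of $\partial C_x$ is characteristic at $x$. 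So $\zeta_a$ is a spacelike covector that is characteristic at $x$, contradicting the first paragraph. This proves the claim.

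I expect the main obstacle to be the topological step, and specifically the case $n=2$. There the spacelike covectors form two components $\pm$, and $\eta_a$ and $-\eta_a$ lie in different ones. The path argument then gives nothing, and an open convex cone lying between the two null lines is not obviously excluded by the lemmas available. In that case I would first try to use the extra symmetry of the two-dimensional Minkowski point: boosts fixing $\psi$ act transitively on each spacelike ray family, and Lemma \ref{lem:(i)} and Theorem \ref{lem:wecanshufflethepresystemaround} transport the cone structure. The aim would be to show that $C_x$ must contain, or be contained in, a boost orbit, which would force it to contain null directions and so yield a contradiction. If that fails, I would restrict the lemma to $n\geq 3$, which covers the physically relevant cases.
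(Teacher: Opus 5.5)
Your argument is correct for $n\geq 3$ and is the paper's argument. The paper also combines Lemma~\ref{lem:characteristicandcausal} with maximality to conclude that every spacelike covector at $p$ is causal, and then reaches a contradiction via $-\eta_a$. Its one-line step ``so every spacelike covector at $p$ must be causal'' is exactly your connectedness and first-exit argument, which you have made explicit.

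Your worry about $n=2$ is well founded, but the boost repair cannot succeed, because the statement is false in two dimensions. Take the first-order wave system $\nabla^a v_a=\mathbf{0}$, $\nabla_{[a}v_{b]}=\mathbf{0}$ on $X=G\times T^*$. Its presystem is the symmetric $g$-traceless $u_{ab}$, and its characteristics are exactly the null covectors. Use the natural symmetrizer $T^{abmc}=g^{ab}g^{mc}+g^{ac}g^{mb}-g^{am}g^{bc}$, with derivative indices $a,m$ and vertical indices $b,c$. It annihilates the presystem, since $T^{abmc}u_{ab}=u^{cm}-u^{mc}=0$. Then $n_a\xi_mT^{abmc}=2n^{(b}\xi^{c)}-(n\cdot\xi)g^{bc}$. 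In coordinates $(t,x)$ with $g=\mathrm{diag}(-1,1)$, $\xi=\mathrm{d}x$ and $n=n_0\,\mathrm{d}t+n_1\,\mathrm{d}x$, this form is positive definite iff $n_1>|n_0|$. That set is an open spacelike half-cone whose boundary consists of null, hence characteristic, covectors.

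So $\xi_mT^{abmc}$ is a maximal hyperbolization with spacelike causal covectors. Because this half-cone is itself a union of boost orbits, the symmetry argument you sketch cannot exclude it. The restriction $n\geq 3$ is therefore genuinely needed, both in your proof and, implicitly, in the paper's.
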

\begin{proof}
    Suppose otherwise, i.e., there is some $x\in XM_{|\psi}$ and some spacelike $\eta_a$ at $p=\pi_{XM_{|\psi}}[x]$ which is causal for $\xi_nT\indices{^n_{\alpha'}^b_{\beta'}}$. By lemma \ref{lem:characteristicandcausal}, no spacelike vector is characteristic for $EM_{|\psi}$ at $x$. Since $\xi_nT\indices{^n_{\alpha'}^b_{\beta'}}$ is maximal, the boundary of the set of causal covectors for $\xi_nT\indices{^n_{\alpha'}^b_{\beta'}}$ at $x$ must contain only characteristics. So every spacelike covector at $p$ must be causal for $\xi_nT\indices{^n_{\alpha'}^b_{\beta'}}$ at $x$. But this is impossible, since if $\eta_a$ is spacelike and $\eta_m\xi_nT\indices{^n_{\alpha'}^m_{\beta'}}$ is positive definite, then $-\eta_a$ is spacelike and $-\eta_m\xi_nT\indices{^n_{\alpha'}^m_{\beta'}}$ is negative definite.
\end{proof}

\begin{lem}\label{lem:timelikecausal}
    Let $(E,e)$ be a minimally-coupled quasilinear natural equation on $X=G\times B$, and let $H:G\rightarrow S^2T\otimes S^2V^*B$ be a natural maximal hyperbolization thereof. Let $M\in\mathrm{ob}(\mathcal{M}_n)$ and $\psi:M\rightarrow GM$ be such that there exists some $\xi_a$ such that $\xi_nT\indices{^n_{\alpha'}^b_{\beta'}}$ is a maximal hyperbolization of $EM_{|\psi}$, where $T\indices{^a_{\alpha'}^b_{\beta'}}=H(M,\psi)$. Then for any $x\in XM_{|\psi}$, a covector is causal for $\xi_nT\indices{^n_{\alpha'}^b_{\beta'}}$ at $x$ iff it is future-directed timelike or iff it is past-directed timelike.
\end{lem}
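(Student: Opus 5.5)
We argue from Lemmas \ref{lem:isn'titsymmetric}, \ref{lem:characteristicandcausal} and \ref{lem:nocausalspacelike}, working at a fixed $x\in XM_{|\psi}$ over $p$. Write $C_x$ for the open convex cone of covectors causal for $h=\xi_nT\indices{^n_{\alpha'}^b_{\beta'}}$ at $x$. Since $h$ is a hyperbolization, $C_x$ is nonempty. By Lemma \ref{lem:nocausalspacelike} it contains no spacelike covectors. It contains no zero covector, since $0\cdot h=0$ is not positive definite. Because $C_x$ is open, it cannot consist only of null covectors (a nonempty open set is not contained in the null cone), so $C_x$ contains some timelike covector $\eta_a$.

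Next we locate $C_x$ inside a single lobe. The set of timelike covectors has two connected components $I^*_+$ and $I^*_-$, and $C_x$ is convex. If $C_x$ met both lobes, the segment joining a future-directed and a past-directed element of $C_x$ would pass through a spacelike or zero covector. That covector would lie in $C_x$, contradicting the previous paragraph. Hence $C_x\subseteq I^*_+\cup N$ or $C_x\subseteq I^*_-\cup N$, where $N$ is the null cone. Without loss of generality take the future case. By openness of $C_x$, and since the interior of $I^*_+\cup N$ is $I^*_+$, we get $C_x\subseteq I^*_+$.

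It remains to show $C_x\supseteq I^*_+$, and this is where maximality enters. Because the timelike covector $\eta_a$ is causal at $x$, Lemma \ref{lem:characteristicandcausal} gives that no timelike covector at $p$ is characteristic at any point of the fiber. Suppose $C_x\subsetneq I^*_+$. Since $I^*_+$ is connected and $C_x$ is open and nonempty in it, $C_x$ cannot also be relatively closed in $I^*_+$. So there is a boundary point $\zeta_a$ of $C_x$ lying in $I^*_+$, which is timelike. By maximality, every boundary point of $C_x$ is characteristic, so $\zeta_a$ is characteristic, which is a contradiction. Therefore $C_x=I^*_+$, or in the other case $C_x=I^*_-$. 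Thus a covector is causal at $x$ iff it is future-directed timelike, or (in the other case) iff it is past-directed timelike.

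The main obstacle is the orientation bookkeeping. The statement's "iff future-directed or iff past-directed" is a disjunction over the choice of lobe, and may depend on $x$. We only need it pointwise, which the argument above gives. One should still check that maximality is applied to the boundary of $C_x$ within covectors at $p$, and that boundary null covectors cause no issue: null covectors lie on the boundary of $I^*_+$ and need not be checked, since we only use the timelike boundary points.
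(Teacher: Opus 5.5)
Your proof is correct and follows essentially the paper's argument: Lemma~\ref{lem:characteristicandcausal} plus maximality fills out the lobe containing a timelike causal covector, and Lemma~\ref{lem:nocausalspacelike} plus convexity excludes the opposite lobe. The differences are minor. You obtain a timelike causal covector, and rule out null causal covectors, directly from openness of the causal cone, whereas the paper treats a null causal covector as a separate case using Lemma~\ref{lem:characteristicandcausal} and maximality again. You also spell out the connectedness argument for filling the lobe, which the paper leaves implicit.
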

\begin{proof}
    By assumption, $\xi_nT\indices{^n_{\alpha'}^b_{\beta'}}$ admits some causal covector, $\eta_a$, at $x\in\pi^{-1}_{XM_{|\psi}}[p]$. Suppose $\eta_a$ is timelike. Then by lemma \ref{lem:characteristicandcausal}, no timelike covector is characteristic for $EM_{|\psi}$ at $x$. By maximality, the boundary of the set of causal covectors for $\xi_nT\indices{^n_{\alpha'}^b_{\beta'}}$ at $x$ must contains only characteristics. So every timelike covector at $p$ codirected with $\eta_a$ is causal. Moreover, no timelike covector oppositely-directed with $\eta_a$ can be causal by lemma \ref{lem:nocausalspacelike} and the fact that the set of causal covectors for $\xi_nT\indices{^n_{\alpha'}^b_{\beta'}}$ is convex.

    Conversely, suppose that some non-zero null covector $\eta_a$ is causal for $\xi_nT\indices{^n_{\alpha'}^b_{\beta'}}$ at $x$. By lemma \ref{lem:characteristicandcausal}, no non-zero null covector at $p$ is characteristic for $EM_{|\psi}$ at $x$. Since the boundary of the set of causal covectors for $\xi_nT\indices{^n_{\alpha'}^b_{\beta'}}$ at $x$ contains only characteristics, this means every null covector codirected with $\eta_a$ is causal. But neither the set of non-zero null nor the set of non-zero null and timelike covectors codirected with $\eta_a$ is open and convex.
\end{proof}
\begin{cor}\label{cor:nullcharacteristic}
    Let $(E,e)$ be a minimally-coupled quasilinear natural equation on $X=G\times B$, and let $H:G\rightarrow S^2T\otimes S^2V^*B$ be a natural maximal hyperbolization thereof. Then for any $x\in XM_{|\psi}$, every null covector at $p=\pi_{XM_{|\psi}}(x)$ is characteristic for $EM_{|\psi}$.
\end{cor}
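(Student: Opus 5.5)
The plan is to combine Lemma \ref{lem:timelikecausal} with maximality and then transport the conclusion across all backgrounds using naturality. Fix the $M$, $\psi$ and $\xi_a$ supplied by the definition of a natural maximal hyperbolization, so that $\xi_nT\indices{^n_{\alpha'}^b_{\beta'}}$ is a maximal hyperbolization of $EM_{|\psi}$, with $T\indices{^a_{\alpha'}^b_{\beta'}}=H(M,\psi)$. By Lemma \ref{lem:timelikecausal}, at each $x\in XM_{|\psi}$ over $p$ the set of causal covectors for $\xi_nT\indices{^n_{\alpha'}^b_{\beta'}}$ is exactly one of the two open timelike half-cones at $p$ (say the future one).

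The boundary of the future timelike cone consists of the zero covector and the future-directed non-zero null covectors. By maximality, every such boundary point is characteristic for $EM_{|\psi}$ at $x$. In particular, some non-zero null covector at $p$ is characteristic at $x$. Lemma \ref{lem:isn'titsymmetric} then shows that every null covector at $p$ is characteristic for $EM_{|\psi}$ at some point of the fiber over $p$. The zero covector is trivially characteristic, and a past-directed null $-n_a$ is characteristic wherever $n_a$ is, because $U$ is fiberwise a vector space. So this direct argument already gives both time orientations at the given point $x$.

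It remains to pass from the distinguished $(M,\psi)$ and $x$ to arbitrary $M'$, $\psi'$ and $x'\in XM'_{|\psi'}$. First transport to the base point: by Theorem \ref{thm:fletcherweatherall} and Theorem \ref{lem:wecanshufflethepresystemaround}, there is a local diffeomorphism $\varphi$ with $\varphi(p)=p'$ matching the metrics to first order and carrying the presystem fiber over $p$ onto that over $p'$. This map sends null covectors to null covectors, since the metrics agree at $p'$. So every null covector at $p'$ is characteristic at some point of the fiber over $p'$.

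The main obstacle is the last step: showing that characteristicness holds at the prescribed point $x'$, not merely somewhere in the fiber. The characteristic condition at $x'$ depends on the presystem fiber over $x'$, which varies with the field value. My first attempt is to rerun the direct argument at $(M',\psi')$ itself. Naturality of $H$ gives $H(M',\psi')$ as the pushforward of $H(M,\psi)$ to first order at $p'$. Its contraction with $\varphi_*\xi$ is then a symmetrizer at $x'$ whose causal set contains the image of the timelike half-cone. The contraction is only guaranteed to be a maximal hyperbolization near $p'$, which I hope suffices because the claim is pointwise. Applying Lemma \ref{lem:timelikecausal} at $(M',\psi')$ then makes the causal set at $x'$ exactly a timelike half-cone. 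Maximality at $x'$ forces its null boundary to be characteristic at $x'$, and the sign argument above supplies the other time orientation.

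If that transfer of maximality fails, the fallback is to read the corollary as asserting characteristicness somewhere in the fiber over $p$. That weaker statement is exactly what Lemma \ref{lem:isn'titsymmetric} together with Theorem \ref{lem:wecanshufflethepresystemaround} deliver, and it is the form used in condition (ii) of Theorem \ref{thm:equivalence}.
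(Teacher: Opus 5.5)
Your first two paragraphs are how the corollary follows in the paper, where it is stated without separate proof right after Lemma \ref{lem:timelikecausal}. The argument runs as follows. At each $x$ over $p$ the causal set is a single open timelike half-cone. Maximality makes its boundary, the corresponding closed null half-cone, consist of characteristics at $x$ itself. Linearity of the presystem fiber then handles the other time orientation. The appeal to Lemma \ref{lem:isn'titsymmetric} in your second paragraph is superfluous, since maximality already gives every null covector of that half-cone at $x$. The paper uses this only at the distinguished $(M,\psi)$, and transports it to arbitrary backgrounds later, in the proof of Theorem \ref{thm:equivalence}.

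Your treatment of that transport, however, rests on a misdiagnosed obstacle and an unjustified step. There is no ``somewhere in the fiber versus at $x'$'' problem. Take the diffeomorphism $\varphi$ of Theorem \ref{lem:wecanshufflethepresystemaround}. Then $X\varphi$ restricts to a diffeomorphism from the fiber over $p$ onto the fiber over $p'$. The induced map on $K^1X$ carries the presystem at each $x$ onto the presystem at $X\varphi(x)$. It sends $n_av^{\alpha'}$ to $(\varphi_*n)_a\,(X\varphi)_*v^{\alpha'}$, with $(X\varphi)_*v^{\alpha'}$ non-zero and vertical. Given $x'$, set $x=(X\varphi)^{-1}(x')$, so your direct argument applies at this $x$. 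Since $G\varphi(\psi)$ and $\psi'$ agree at $p'$, $\varphi_*$ maps the null cone at $p$ onto the null cone at $p'$. Hence every null covector at $p'$ is characteristic at $x'$.

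Your first attempt instead invokes Lemma \ref{lem:timelikecausal} at $(M',\psi')$. Its hypothesis, that some contraction of $H(M',\psi')$ is a maximal hyperbolization of $EM'_{|\psi'}$, is something you have not established. In the paper that fact is obtained via Lemma \ref{lem:thefirstargument} and Theorem \ref{thm:equivalence}, which themselves rely on this corollary, so using it here is circular. Also, the transported symmetrizer is controlled only at points over $p'$, not on a neighbourhood of $p'$, because the metrics agree only to first order at $p'$. Neither that detour nor the fallback weakening to ``characteristic somewhere in the fiber'' is needed.
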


\begin{lem}\label{lem:thefirstargument}
    Let $(E,e)$ be a minimally-coupled quasilinear natural equation on $X=G\times B$, and let $H:G\rightarrow S^2T\otimes S^2V^*B$ be a natural maximal hyperbolization thereof. Let $M\in\mathrm{ob}(\mathcal{M}_n)$ and $\psi:M\rightarrow GM$ be such that there exists some $\xi_a$ such that $\xi_nT\indices{^n_{\alpha'}^b_{\beta'}}$ is a maximal hyperbolization of $EM_{|\psi}$, where $T\indices{^a_{\alpha'}^b_{\beta'}}=H(M,\psi)$. Then for any covector $\eta_a$, $\eta_nT\indices{^n_{\alpha'}^b_{\beta'}}$ is a maximal hyperbolization of $EM_{|\psi}$ iff $\eta_a$ is timelike.
\end{lem}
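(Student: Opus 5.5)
We need to prove that $\eta_nT\indices{^n_{\alpha'}^b_{\beta'}}$ is a maximal hyperbolization of $EM_{|\psi}$ if and only if $\eta_a$ is timelike. The strategy is to combine Lemma \ref{lem:timelikecausal} and Corollary \ref{cor:nullcharacteristic} with linearity of the map $\eta_a\mapsto \eta_nT\indices{^n_{\alpha'}^b_{\beta'}}$. Note first that $\eta_nT\indices{^n_{\alpha'}^b_{\beta'}}$ is a symmetrizer for every $\eta_a$, since $U$ lies in the kernel of $H$. So the only issues are the existence of causal covectors at each $x$ and maximality.

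\emph{($\Leftarrow$).} By Lemma \ref{lem:timelikecausal}, the causal cone of the given hyperbolization $\xi_nT\indices{^n_{\alpha'}^b_{\beta'}}$ at each $x$ is exactly one timelike half-cone, say the one containing some $\zeta_a$. Hence for every timelike $\zeta_a$ in that half-cone, $\zeta_m\xi_nT\indices{^n_{\alpha'}^m_{\beta'}}$ is positive definite.

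The key step is to move $\xi$ into the second slot. I would use the symmetry $T^{a\,b}=T^{b\,a}$ in the two tangent indices, which is built in since $H$ takes values in $S^2T\otimes S^2V^*B$. This gives $\zeta_m\xi_nT^{nm}=\xi_n\zeta_mT^{mn}$, so $\xi_a$ is causal for $\zeta_nT\indices{^n_{\alpha'}^b_{\beta'}}$ for every timelike $\zeta$ codirected with the causal cone. Consequently each such $\zeta_nT$ is a hyperbolization.

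It also follows that $\xi$ itself is timelike. Otherwise, by Lemma \ref{lem:nocausalspacelike} (or its null analogue in the proof of Lemma \ref{lem:timelikecausal}), $\zeta_nT$ would have a non-timelike causal covector. Lemma \ref{lem:timelikecausal} applied to $\zeta_nT$ forbids this, once we know $\zeta_nT$ is maximal.

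To get maximality, I would first show that the causal cone of $\zeta_nT$ is open, convex, and contains $\xi$. By the argument of Lemma \ref{lem:timelikecausal}, which uses only the hypotheses now verified, Lemma \ref{lem:characteristicandcausal}, and convexity, this cone must be a full timelike half-cone. Its boundary then consists of null covectors, which are characteristic by Corollary \ref{cor:nullcharacteristic}. So $\zeta_nT$ is maximal.

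For timelike $\eta$ in the opposite half-cone, write $\eta=-\zeta$. Then $\eta_nT=-\zeta_nT$, whose causal cone is the negative of that of $\zeta_nT$, namely the other timelike half-cone. This is again maximal.

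\emph{($\Rightarrow$).} Suppose $\eta_nT$ is a maximal hyperbolization but $\eta$ is null or spacelike. By symmetry of $T$, any causal covector $\zeta$ for $\eta_nT$ makes $\eta$ causal for $\zeta_nT$. The ($\Leftarrow$) direction together with Lemma \ref{lem:timelikecausal} then forces $\zeta$ to be timelike, and in turn forces every causal covector of $\zeta_nT$ to be timelike. That contradicts $\eta$ being causal for $\zeta_nT$. The case $\eta=0$ is trivial, since $\eta_nT=0$ is not positive definite for any covector.

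\emph{Main obstacle.} The delicate step is the symmetry $T^{ab}=T^{ba}$. If $H$ is only assumed to land in $T\otimes T\otimes S^2V^*B$, the argument breaks. In that case I would instead use naturality: the Lorentz group acts transitively on each timelike half-cone and preserves $T$ at a Minkowski point, by Theorem \ref{thm:fletcherweatherall} and Lemma \ref{lem:(i)}. This transports $\xi$ to any other timelike $\eta$, giving the result up to reversal of sign.
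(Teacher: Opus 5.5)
\textbf{Genuine gap: the maximality of $\zeta_nT$ is assumed, not proved.} Your claim that $\xi$ is timelike also falls with it. You say the argument of Lemma~\ref{lem:timelikecausal} ``uses only the hypotheses now verified'', but it does not. Its key step is ``by maximality, the boundary of the set of causal covectors must contain only characteristics'', applied to the very hyperbolization whose cone is being identified. For $\zeta_nT$, that is exactly what you are trying to establish, so the step is circular.

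What you have actually verified is the following: the causal cone of $\zeta_nT$ is an open convex cone containing $\xi$; no timelike covector is characteristic; and null covectors are characteristic. None of this rules out the causal cone of $\zeta_nT$ being a proper subcone of a timelike half-cone, with boundary consisting of non-characteristic timelike covectors. Symmetry of $T$ ties $\zeta_nT$ to the known maximal hyperbolization $\xi_nT$ only through the single covector $\xi$, and that does not determine the whole cone. Your argument that $\xi$ is timelike explicitly waits on ``once we know $\zeta_nT$ is maximal'', so it collapses too. Your ($\Rightarrow$) direction is fine once ($\Leftarrow$) is secured, and it matches the paper's.

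\textbf{The missing ingredient is naturality of $H$ together with minimal coupling.} The paper uses this here, and your main line never invokes it. Every Lorentz transformation at $p$ is realized by a local diffeomorphism $\varphi$ with $\varphi(p)=p$ and $j^1_pG\varphi(\psi)=j^1_p\psi$. Such a $\varphi$ leaves $T$ at $p$ invariant (by naturality of $H$) and leaves the presystem over $p$ invariant (by the argument of Lemma~\ref{lem:isn'titsymmetric}).

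Symmetry gives that $\xi$ is causal for $t'_nT$ for \emph{every} $t'$ in the half-cone. Choosing $\varphi$ with $\varphi_*t'=t$ then shows that $\varphi_*\xi$ is causal for $t_nT$. Hence the causal cone of $t_nT$ contains the entire time-orientation-preserving Lorentz orbit of $\xi$. This closes both holes:
\begin{enumerate}
\item $\xi$ cannot be null, since null covectors are characteristic and hence never causal.
\item $\xi$ cannot be spacelike, since its orbit contains $-\xi$ and convexity would then make $0$ causal.
\item So $\xi$ is timelike and its orbit is a hyperboloid sheet. Convexity plus positive scaling then put a full timelike half-cone inside the causal cone of $t_nT$.
\item The cone cannot be any larger without containing a nonzero null, hence characteristic, covector. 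So its boundary is null, and Corollary~\ref{cor:nullcharacteristic} gives maximality.
\end{enumerate}

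Your ``main obstacle'' paragraph reaches for exactly this invariance, but only as a fallback for a non-issue: symmetry of $T$ is already built into $S^2T\otimes S^2V^*B$. That fallback also presupposes that $\xi$ is timelike, which your main line only obtained through the circular step.
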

\begin{proof}
    Fix some $x\in XM_{|\psi}$, and let $p=\pi_X(x)$. Since $\xi_nT\indices{^n_{\alpha'}^b_{\beta'}}$ is a maximal hyperbolization of $EM_{|\psi}$, we know that all future-directed or all past-directed timelike covectors are causal for $\xi_nT\indices{^n_{\alpha'}^b_{\beta'}}$ at $p$ by lemma \ref{lem:timelikecausal}. (Without loss of generality, we consider the future-directed case.) By the symmetry of $T\indices{^a_{\alpha'}^b_{\beta'}}$, this implies that for any future-directed unit timelike covector $t_a$, $t_nT\indices{^n_{\alpha'}^b_{\beta'}}$ is a hyperbolization of $EM_{|\psi}$ at $x$, with $\xi_a$ causal for $t_nT\indices{^n_{\alpha'}^b_{\beta'}}$ at $x$. We know that, for any other future-directed unit timelike $t'_a$ at $p$, there exists an open neighbourhood $V$ of $p$ and a diffeomorphism $\varphi:V\rightarrow V$ such that $\varphi(p)=p$, $j^1_pG\varphi(\psi)=j^1_p\psi$, and $\varphi_*t'_a=t_a$, and hence that under any such diffeomorphism, $H(V,\psi)(p)=H(V,G\varphi(\psi))(p)$ and $J^1X(\varphi)[UM_{|\psi}\cap \pi_{J^1XM}^{-1}[p]]=UM_{|\psi}\cap \pi_{J^1XM}^{-1}[p]$ (This follows from the arguments used in lemma \ref{lem:isn'titsymmetric}.) It follows that for any such $\varphi$, $\varphi_*(\xi_a)$ is causal for $t_nT\indices{^n_{\alpha'}^b_{\beta'}}$ at $x$.

    $(\Rightarrow)$ By lemmas \ref{lem:nocausalspacelike} and \ref{lem:timelikecausal}, $\xi_a$ cannot be spacelike or null.

    $(\Leftarrow)$ Since $t_nT\indices{^n_{\alpha'}^b_{\beta'}}$ is a hyperbolization, what remains to be shown is maximality. Again by the above, it follows that every timelike covector codirected with $\xi_a$ is causal for $t_nT\indices{^n_{\alpha'}^b_{\beta'}}$ at $x$, and hence that $t_nT\indices{^n_{\alpha'}^b_{\beta'}}$ is maximal by corollary \ref{cor:nullcharacteristic}. The same argument suffices to establish the result for $t_a$ past-directed, substituting $\xi_a$ for $-\xi_a$.
\end{proof}

\begin{proof}[Proof of theorem \ref{thm:equivalence}]
    $(ii\Rightarrow i)$ By definition, for any $M\in\mathrm{ob}(\mathcal{M}_n)$, any section $\psi:M\rightarrow GM$, and any timelike $t_a$, $t_nT\indices{^n_{\alpha'}^b_{\beta'}}$ is a maximal hyperbolization of $EM_{|\psi}$ whose causal cones consist of all future-directed or all past-directed timelike covectors at each $x\in M$.

    $(i\Rightarrow ii)$ This follows immediately from the definition of a natural symmetrizer and lemmas\ref{lem:timelikecausal} and \ref{lem:thefirstargument}, making use of the argument of lemma \ref{lem:isn'titsymmetric} to establish that every null covector is characteristic for $EM_{|\psi}$, for any $M\in\mathrm{ob}(M_n)$ and section $\psi:M\rightarrow GM$.
\end{proof}

It immediately follows from theorem \ref{thm:equivalence} that if $(E,e)$ is a minimally-coupled quasilinear natural equation which admits a natural maximal hyperbolization, then every denaturalization of $(E,e)$ (on any $M\in\mathrm{ob}(\mathcal{M}_n)$) is symmetric hyperbolic. We now turn to a (partial) converse result. Given a minimally-coupled quasilinear natural equation $(E,e)$ that is symmetric hyperbolic relative to some denaturalization, then one can isolate necessary and sufficient conditions for $(E,e)$ to admit a natural maximal hyperbolization, at least in the case where $B$ is a first-order natural vector or affine bundle:

\begin{thm}
    Let $(E,e)$ be a minimally-coupled quasilinear natural equation on $X=G\times B$, where $B$ is a first-order vector or affine bundle, and suppose that, for some $M\in\mathrm{ob}(\mathcal{M}_n)$ and some section $\psi:M\rightarrow GM$, $EM_{|\psi}$ is symmetric hyperbolic. Then $(E,e)$ admits a natural maximal hyperbolization iff there exists a maximal hyperbolization $h\indices{^a_{\alpha'\beta'}}$ of $EM_{|\psi}$ such that at any $x\in XM_{|\psi}$ there exists a timelike covector $t_a$ at $p=\pi_X(x)$ such that for any open neighbourhood $V$ of $p$ and any local diffeomorphism $\varphi:V\rightarrow V$ such that $\varphi(p)=p$ and $j^1_p\psi=j^1_pG\varphi(\psi)$, $(\varphi_*(t_a))h\indices{^a_{\alpha'\beta'}}=t_a(T\otimes S^2V^*B)(\varphi)(h\indices{^a_{\alpha'\beta'}})$.
\end{thm}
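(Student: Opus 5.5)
We prove the two directions separately. For the forward direction ($\Rightarrow$), suppose $H:G\rightarrow S^2T\otimes S^2V^*B$ is a natural maximal hyperbolization, and set $T\indices{^a_{\alpha'}^b_{\beta'}}=H(M,\psi)$. By Theorem \ref{thm:equivalence} and Lemma \ref{lem:thefirstargument}, for any timelike $t_a$ the contraction $h\indices{^a_{\alpha'\beta'}}=t_nT\indices{^n_{\alpha'}^a_{\beta'}}$ is a maximal hyperbolization of $EM_{|\psi}$. We take this $h$ and the same $t_a$ at every point. Now let $\varphi:V\rightarrow V$ fix $p$ and satisfy $j^1_pG\varphi(\psi)=j^1_p\psi$. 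Because $H$ is a first-order natural operator, naturality gives
\[
(S^2T\otimes S^2V^*B)(\varphi)\bigl(H(V,\psi)(p)\bigr)=H(V,G\varphi(\psi))(p)=H(V,\psi)(p),
\]
so $T$ is invariant at $p$ under every such $\varphi$. Contracting this invariance with $t_a$, and transporting one slot by $\varphi_*$, gives exactly the identity $(\varphi_*t_a)h\indices{^a_{\alpha'\beta'}}=t_a(T\otimes S^2V^*B)(\varphi)(h\indices{^a_{\alpha'\beta'}})$. Here we use that $T$ is symmetric in its two tangent indices, so the $\varphi_*$ acting on the contracted index may be moved onto $t_a$.

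For the reverse direction ($\Leftarrow$), we must build $H$. The key step is to recover a symmetric tensor $T$ at each point from the single hyperbolization $h$ and the timelike $t_a$. The hypothesis says that the assignment $s_a\mapsto$ (a hyperbolization attached to $s_a$) is equivariant under the isotropy group of $j^1_p\psi$ in the first-order jet group. Because $B$ is first-order, this group acts on the fibers through the Lorentz group $O(g_p)$, so the relevant group is essentially $O(1,n-1)$. The group acts transitively on unit timelike covectors in a sheet. We therefore define $h_s$ for every unit timelike $s_a$ by $h_s=(\varphi)(h)$, where $\varphi_*t=s$; the hypothesis guarantees this is well-defined, independent of the choice of $\varphi$. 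We then extend $h_s$ to all timelike $s$ by homogeneity.

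The main obstacle is the next step: showing that $s\mapsto h_s$ is the restriction of a \emph{linear} map $s_b\mapsto s_bT\indices{^{ab}_{\alpha'\beta'}}$ with $T$ symmetric in $a,b$. We would attack this representation-theoretically. Since $B$ is first-order, the fibers of $V^*B$ at $p$ are $O(1,n-1)$-representations. An equivariant map from the timelike hyperboloid $O(1,n-1)/O(n-1)$ into $T\otimes S^2V^*B$ is determined by an $O(n-1)$-invariant element, and that element is $h$ at $t$. We would try to show that the equivariance condition in the hypothesis, with $\varphi_*$ acting on $t_a$ inside the contraction, forces the extension to be linear in $s$. If this fails in general, we would instead define $T$ by polarization from an averaged, quadratic-in-$s$ expression, using the Minkowski model from Lemma \ref{lem:isn'titsymmetric}. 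Symmetry of $T$ in $a,b$ would then be arranged by symmetrizing, which preserves both the symmetrizer and positivity conditions, since $h$ annihilates $U$.

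Having $T$ at one point of the Minkowski model, we spread it to all $(M',\psi',p')$ using Theorem \ref{thm:fletcherweatherall}, and well-definedness follows from the isotropy invariance just established. This yields a natural operator $H$ of order one; smoothness follows as in the proof of Theorem \ref{thm:itsabundle}. Finally, we check the remaining properties. $U$ lies in the kernel of $H$ because it does so at the model point, and Theorem \ref{lem:wecanshufflethepresystemaround} carries this to every point. $H$ is a natural maximal hyperbolization by its definition, since $t_nT\indices{^n_{\alpha'}^b_{\beta'}}=h$ is maximal.
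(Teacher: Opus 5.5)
Your forward direction is essentially the paper's: isotropy-invariance of $H(M,\psi)$ at $p$, then symmetry in the tangent indices to move $\varphi_*$ onto $t_a$. The reverse direction, however, has a genuine gap exactly where you place the main obstacle. You never show that the transported tensors $h_\xi:=(T\otimes S^2V^*B)(\varphi_\xi)(h)$, for unit timelike $\xi_a$ codirected with $t_a$, are the contractions $\xi_nT\indices{^n_{\alpha'}^b_{\beta'}}$ of a single tensor symmetric in its tangent indices.

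Your representation-theoretic set-up does not settle this by itself. Orbit maps out of the hyperboloid are generally not linear; for instance, any component of $h$ in a trivial summand would give a non-zero constant map. Linearity therefore has to be extracted from the hypothesis, and you do not say how.

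The missing idea, which is the core of the paper's proof, is to apply the hypothesis to the composite $\varphi_\eta^{-1}\circ\varphi_\xi$. This map fixes $p$, preserves $j^1_p\psi$, and sends $t_a$ to $(\varphi_\eta^{-1})_*\xi_a$. Together with naturality of contraction, it gives, for all unit timelike $\xi_a,\eta_a$ codirected with $t_a$,
\[
\eta_b\,h_\xi{}^b{}_{\alpha'\beta'}=S^2V^*B(\varphi_\eta)\bigl(t_b\,(T\otimes S^2V^*B)(\varphi_\eta^{-1}\circ\varphi_\xi)(h^b{}_{\alpha'\beta'})\bigr)=S^2V^*B(\varphi_\eta)\bigl(((\varphi_\eta^{-1})_*\xi_b)\,h^b{}_{\alpha'\beta'}\bigr)=\xi_b\,h_\eta{}^b{}_{\alpha'\beta'}.
\]
The paper uses this identity to get symmetry of $T$. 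It also delivers the linearity you were after. Let $\eta^{(i)}_a$ be a basis of such covectors with dual basis $e_{(i)}^a$, and set $T\indices{^a_{\alpha'}^b_{\beta'}}:=\sum_i e_{(i)}^a\,h_{\eta^{(i)}}{}^b{}_{\alpha'\beta'}$. Then $\xi_nT\indices{^n_{\alpha'}^b_{\beta'}}=h_\xi{}^b{}_{\alpha'\beta'}$ for all such $\xi_a$; to see this, compare contractions with each $\eta^{(j)}_b$, using the identity twice. The identity used once more shows $T$ is symmetric in its tangent indices. No averaging, polarization, or decomposition into irreducibles is needed.

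Your fallback, by contrast, would fail. Symmetrizing in the tangent indices does not preserve the symmetrizer condition. Each $h_\xi$ annihilates $U$ only through the pair $(b,\beta')$. After symmetrization you would also need $T\indices{^b_{\alpha'}^a_{\beta'}}\theta\indices{^{\beta'}_b}=\mathbf{0}$ for $\theta\in U$, i.e.\ annihilation through the other pair, and the fact that $h$ annihilates $U$ says nothing about that.

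Symmetrizing, or replacing $T$ by a polarized quadratic expression, also destroys $t_nT\indices{^n_{\alpha'}^b_{\beta'}}=h$. That equality is precisely what your final step uses to conclude that $H$ is a natural maximal hyperbolization. (There is, moreover, no invariant averaging over the non-compact Lorentz group.)

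Once the identity above is in place, the rest of your outline goes through as in the paper: annihilation of $U$, symmetry in the vertical indices, and extension to a natural operator via Theorem \ref{thm:fletcherweatherall}. Your claim that $h_\xi$ is independent of the choice of $\varphi_\xi$ is also correct, but it deserves a one-line argument. Applying the hypothesis to $\varphi$ and to $\lambda\circ\varphi$, with $\lambda$ fixing $t_a$, shows that $h$ is invariant under the stabilizer of $t_a$.
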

\begin{proof}
    $(\Rightarrow)$ It suffices to observe that if $H$ is a natural maximal hyperbolization of $(E,e)$, and writing $H(M,\psi)=T\indices{^a_{\alpha'}^b_{\beta'}}$, then for any timelike $t_a$, $t_nT\indices{^n_{\alpha'}^b_{\beta'}}$ and $t_a$ are a maximal hyperbolization covector pair satisfying the conditions of the theorem.

    $(\Leftarrow)$ Suppose $h\indices{^a_{\alpha'\beta'}}$, $t_a$ are a maximal hyperbolization and timelike covector satisfying the conditions of the theorem. (Without loss of generality we may assume $t_a$ is unit.) Fix some open neighbourhood $V$ of $p$ diffeomorphic to $\mathbb{R}^n$. Consider the tensor $T\indices{^a_{\alpha'}^b_{\beta'}}$ at $x$ defined as follows: for any unit timelike $\xi_a$ at $p$ codirected with $t_a$, $\xi_nT\indices{^n_{\alpha'}^b_{\beta'}}=(T\otimes S^2V^*B)(\varphi_\xi) (h\indices{^b_{\alpha'\beta'}})$, where $\varphi_\xi$ is any local diffeomorphism $\varphi_\xi:V\rightarrow V$ such that $\varphi_\xi(p)=p$, $j^1_p\psi=j^1_pG\varphi_\xi(\psi)$, and $\varphi_{\xi*}t_a=\xi_a$. (This is well-defined, since any two such local diffeomorphisms agree on their lifts via $T\otimes S^2V^*B$ at $p$, since $B$ is a first-order vector or affine bundle.) By construction, and making use of lemma \ref{lem:wecanshufflethepresystemaround}, $T\indices{^a_{\alpha'}^b_{\beta'}}$ is symmetric in its vertical indices, annihilates $UM_{|\psi}$ in its second pair of indices. Moreover, $T\indices{^a_{\alpha'}^b_{\beta'}}$ is symmetric in its upper two indices, since for any pair of unit timelike $\xi_a$, $\eta_a$ at $p$, $\xi_n\eta_mT\indices{^n_{\alpha'}^m_{\beta'}}=\eta_m(T\otimes S^2V^*B)(\varphi_\xi) (h\indices{^m_{\alpha'\beta'}})=(\varphi_{\eta*}(t_m))(T\otimes S^2V^*B)(\varphi_\xi) (h\indices{^m_{\alpha'\beta'}})=S^2V^*B(\varphi_\eta)(t_m(T\otimes S^2V^*B)(\varphi_\eta^{-1}\circ\varphi_\xi) (h\indices{^m_{\alpha'\beta'}}))=S^2V^*B\varphi_\eta(((\varphi_\eta^{-1})_*\xi_m)h\indices{^m_{\alpha'\beta'}})=\xi_m(T\otimes S^2V^*B)(\varphi_\eta)(h\indices{^m_{\alpha'\beta'}})=\xi_m\eta_nT\indices{^n_{\alpha'}^m_{\beta'}}$. Moreover, it is straightforward to see that $T\indices{^n_{\alpha'}^b_{\beta'}}$ has a unique extension to a natural operator $H:G\rightarrow S^2T\otimes S^2V^*B$ (this follows from theorem \ref{thm:fletcherweatherall}).
\end{proof}

In summary: whilst a natural, minimally coupled, quasilinear system of equations cannot be symmetric hyperbolic, we can make sense of what it is for such a system of equations to be symmetric hyperbolic relative to any choice of background structure---i.e., relative to any Lorentzian metric---and we can, at least in some cases, do so in a natural way, by defining a natural operator that gives rise to a maximal hyperbolization at every manifold, for every choice of metric. Sufficient for this is the existence of a natural hyperbolization $H:G\rightarrow S^2(T\otimes V^*B)$.  We will call natural equations for which this construction works -- that is, first-order quasilinear natural equations that are minimally coupled and admit a natural hyperbolization -- (first-order, quasilinear, minimally-coupled) \emph{naturally symmetric hyperbolic}.  The equations themselves are not symmetric hyperbolic, but their symmetric hyperbolicity depends, naturally, on the same background structure that the equations do.

\section{Causality, energy conditions, and conservation}\label{sec:causalityetc}

Before proceeding to discuss constraints and the initial value problem, we pause to state and prove several results that show the value of this way of thinking about minimally-coupled naturally symmetric hyperbolic systems.  In brief, naturality requires hyperbolizations to be defined ``uniformly'' across manifolds, in a way that forces one to define them using resources that are available at each manifold---which in this case requires us to build them out of the metric and structures determined by it, such as volume forms.  This, in turn, allows us to prove general results regarding the causal structure and energy-momentum properties of minimally-coupled natural symmetric hyperbolic systems. In particular, we show that natural maximal hyperbolizations satisfy the dominant energy and conservation conditions.  This provides a link between natural hyperbolizations and nice properties of stress--energy tensors in general relativity.  This link is useful because in many cases of interest in physics, the Hilbert stress-energy tensor happens to be a maximal natural hyperbolization; we thus see that any matter whose stress-energy tensor suffices to secure symmetric hyperbolicity will also satisfy these other conditions.

%\begin{lem}\label{lem:nocharacteristictimelike}
%     Let $(E,e)$ be a first-order quasilinear minimally-coupled natural equation on a natural bundle $X$. Then for any $M\in\mathrm{ob}(\bar{\mathcal{M}}_n)$, any section $\psi:M\rightarrow GM$, and any $p\in M$, no timelike covector at $p$ is characteristic for $EM_{|\psi}$ at $p$.
%\end{lem}
%\begin{proof}
%    Suppose otherwise, i.e.~there exists some timelike $n_a$ at $p$ which is characteristic for $EM_{|\psi}$ at $p$. By lemma \ref{lem:isn'titsymmetric}, this implies that all timelike $n_a$ are characteristic for $e_M(EM)_\psi$. By lemma \ref{lem:nullcharacteristic}, we know that all null covectors are characteristic for $e_M(EM)_\psi$. In particular, this implies that no spacelike covectors can be characteristic, since if any spacelike covector is characteristic then by lemma \ref{lem:isn'titsymmetric} all spacelike vectors must be, but if all covectors are characteristic then $e_M(EM)_\psi$ cannot be symmetric hyperbolic. It follows that any maximal hyperbolization $h\indices{^a_{\alpha'}_{\beta'}}$ of $e_M(EM)_\psi$ must have, as its collection of causal covectors, some open convex cone of spacelike covectors bounded by null covectors at $p$. But no such set of covectors exists.
%\end{proof}

\begin{thm}\label{conj:DEC}
   Let $(E,e)$ be a first-order quasilinear minimally-coupled naturally symmetric hyperbolic natural equation on a natural bundle $X$. Let $H:G\rightarrow S^2(T\otimes V^*Y)$ be a maximal natural hyperbolization for $(E,e)$. Then for any $M\in\mathrm{ob}(\mathcal{M}_n)$, any section $\psi:M\rightarrow GM$, any $x\in XM$, and any $v^{\alpha'}$ at $x$, either $T\indices{^a_{\mu'}^b_{\nu'}}v^{\mu'}v^{\nu'}$ satisfies the dominant energy condition or $-T\indices{^a_{\mu'}^b_{\nu'}}v^{\mu'}v^{\nu'}$ does, where $T\indices{^a_{\mu'}^b_{\nu'}}=H(M,\psi)$.
\end{thm}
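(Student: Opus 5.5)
Fix $M$, $\psi$, $x\in XM_{|\psi}$ over $p$, and a vertical $v^{\alpha'}$ at $x$. Set $S^{ab}=T\indices{^a_{\mu'}^b_{\nu'}}v^{\mu'}v^{\nu'}$, which is symmetric in $a,b$ because $H$ lands in $S^2T\otimes S^2V^*B$. Recall that the dominant energy condition for $S^{ab}$ asks that $S^{ab}\xi_a\eta_b\geq 0$ for every pair of co-directed timelike (hence, by continuity, causal) covectors $\xi_a,\eta_b$ at $p$. The strategy is to read this off from Theorem \ref{thm:equivalence}(ii), which says that the induced morphism from $S^2T^*\times G$ to $S^2V^*B\times G$ sends all pairs of co-directed timelike covectors into $S^2_+V^*B$, or else sends all pairs of oppositely-directed timelike covectors there.

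\emph{Case one.} Suppose the co-directed alternative holds. For co-directed timelike $\xi_a,\eta_b$, the tensor $\xi_a\eta_bT\indices{^a_{\mu'}^b_{\nu'}}$ is positive definite, so $S^{ab}\xi_a\eta_b=\xi_a\eta_bT\indices{^a_{\mu'}^b_{\nu'}}v^{\mu'}v^{\nu'}\geq 0$, with strict inequality when $v\neq 0$. Taking limits extends this to co-directed causal pairs. Hence $S^{ab}$ satisfies the DEC.

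\emph{Case two.} Suppose the oppositely-directed alternative holds. For co-directed timelike $\xi_a,\eta_b$, the pair $(\xi_a,-\eta_b)$ is oppositely directed, so $-\xi_a\eta_bT\indices{^a_{\mu'}^b_{\nu'}}$ is positive definite. Then $-S^{ab}\xi_a\eta_b\geq 0$, and $-S^{ab}$ satisfies the DEC.

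The one step needing care is that the alternative in Theorem \ref{thm:equivalence}(ii) is fixed uniformly for the natural transformation $H$, rather than varying with $v$ or $x$. As far as the statement goes, however, it is enough that at each $x$ one of the two alternatives holds, and the sign is then chosen for that $x$. Beyond this, I expect no real obstacle: the proof only unpacks positivity of $n_a m_b T\indices{^a_{\mu'}^b_{\nu'}}$ on timelike pairs and converts the timelike statement into the causal-covector form of the DEC by continuity.
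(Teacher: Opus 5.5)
Your proposal is correct and follows essentially the paper's route: invoke Theorem~\ref{thm:equivalence} to get positive definiteness of $\xi_a\eta_b T^{a}{}_{\mu'}{}^{b}{}_{\nu'}$ on co-directed timelike pairs (or, after replacing $T$ by $-T$, on oppositely directed pairs), then contract with $v^{\mu'}v^{\nu'}$. The only difference is that the paper obtains the positivity from the causal-cone characterization of $t_aT^{a}{}_{\mu'}{}^{b}{}_{\nu'}$, whereas you read it directly off the restriction clause in (ii); these carry the same content.
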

\begin{proof}
    By theorem \ref{thm:equivalence}, for any timelike $t_a$ at $p=\pi_X(x)$, $n_a$ is causal for $t_aT\indices{^a_{\mu'}^b_{\nu'}}$ iff it is timelike and either co-oriented with $t_a$ or oppositely oriented.  Assume $n_a$ is causal iff it is co-oriented with $t_a$.  (Otherwise, consider $-T\indices{^a_{\mu'}^b_{\nu'}}$.) Then, by the definition of a maximal hyperbolization, if $t_a$, $\xi_a$ are any two cooriented timelike covectors, then if $T\indices{^a_{\alpha'}^b_{\beta'}}$ is a tensor on $XM$ satisfying the conditions of the proposition, $t_n\xi_mT\indices{^n_{\alpha'}^m_{\beta'}}$ is positive definite, and hence for all $v^{\alpha'}$, either $v^{\alpha'}\neq\mathbf{0}$ and $t_n\xi_mT\indices{^n_{\mu'}^m_{\nu'}}v^{\mu'}v^{\nu'}>0$, or $v^{\alpha'}=\mathbf{0}$ and $t_n\xi_mT\indices{^n_{\mu'}^m_{\nu'}}v^{\mu'}v^{\nu'}=0$. But the latter case, of course, obtains only if $T\indices{^a_{\mu'}^b_{\nu'}}v^{\mu'}v^{\nu'}=\mathbf{0}$.
\end{proof}

We now turn to the conservation condition, understood as a condition on natural hyperbolizations.  Before stating the results, some further preliminaries are needed.  First, observe that for any first-order natural bundle $Y$ and any $M\in\mathrm{ob}(\mathcal{M}_n)$, a derivative operator (linear connection) $\nabla$ on $M$ gives rise to a connection $\omega^{\alpha'}{}_{\beta}$ on $YM$.  This is because a linear connection on $M$ gives rise to a principal connection on the frame bundle to $M$, and $YM$ can be seen as associated to that frame bundle.
In particular, since every Lorentzian metric gives rise (naturally) to a Levi-Civita derivative operator, we find a first-order natural operator $\nabla:G\rightarrow VY\otimes T^*Y$ taking metrics to connections on any natural bundle $Y$.

We now state our main result, which characterizes the divergence of a natural hyperbolization relative to this Levi-Civita derivative operator.
    \begin{thm}\label{conj:CC}
         Let $(E,e)$ be a first-order quasilinear minimally-coupled naturally symmetric hyperbolic natural equation on a natural bundle $X = G\times Y$, where $Y$ is first order. Let $H:G\rightarrow S^2(T\otimes V^*Y)$ be a maximal natural hyperbolization for $(E,e)$. Then for any $M\in\mathrm{ob}(\mathcal{M}_n)$, any section $M\rightarrow GM$, and any smooth vertical vector field $v^{\alpha'}$ on $M$, $\nabla_n(T\indices{^n_{\mu'}^a_{\nu'}}v^{\mu'}v^{\nu'})=2T\indices{^n_{\mu'}^a_{\nu'}}v^{\nu'}j\indices{^{\mu'}_n}$, where $T\indices{^a_{\mu'}^b_{\nu'}}=H(M,\psi)$ and $j\indices{^{\alpha'}_a}$ is any representative of $[\nabla_av^{\alpha'}]_{/U}\in VXM/U$
    \end{thm}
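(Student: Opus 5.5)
The key point is that $\nabla_n T\indices{^n_{\mu'}^a_{\nu'}}=\mathbf{0}$, where $\nabla$ is the Levi-Civita operator acting on $T\indices{^a_{\mu'}^b_{\nu'}}$ through the connection it induces on the first-order bundle $Y$. Once that is established, the displayed identity is just the Leibniz rule, followed by the symmetry of $T$ in its vertical indices and the fact that $T$ annihilates $U$.

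So I would first show that $\nabla_m T\indices{^a_{\mu'}^b_{\nu'}}=\mathbf{0}$ identically. Since $H:G\rightarrow S^2(T\otimes V^*Y)$ is a first-order natural operator and $Y$ is first order, the value of $T=H(M,\psi)$ at $p$ is determined by $j^1_p\psi$. I would then invoke the standard fact that $\nabla_m T$ is itself the output of a natural operator on $G$, one of order at most two.

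Instead of appealing to invariant theory, I would argue directly. Work in normal coordinates at $p$, or use the diffeomorphisms of Theorem~\ref{thm:fletcherweatherall}. In geodesically normal coordinates the first derivatives of the metric vanish at $p$. Naturality of $H$ together with first-orderness of $Y$ then implies that $H$ is built equivariantly from the metric. In particular, comparing $H$ on $(M,\psi)$ with $H$ on flat Minkowski space at $p$ (which agree to first order at $p$) shows that the coordinate components of $T$ at $p$ have vanishing first partial derivatives there.

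The Levi-Civita connection symbols, and the induced connection symbols on $Y$, also vanish at $p$ in these coordinates. The latter holds because the induced connection on an associated bundle is linear in the Christoffel symbols. Hence $\nabla_m T=\mathbf{0}$ at $p$, and since $p$ is arbitrary, everywhere.

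I expect this step to be the main obstacle. The issue is that first-order dependence of $T$ on $\psi$ at $p$ does not by itself control derivatives of $T$, which involve second derivatives of the metric. To handle this I would apply the first-order-dependence argument at each nearby point $q$, not just at $p$, so that $T(q)$ depends only on $j^1_q\psi$. I would then use the smoothly parametrized family $\varphi_q$ from the proof of Theorem~\ref{thm:itsabundle}, or a normal-coordinate argument with a parallel frame, to show that $T$ is parallel.

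Concretely, $T$ should be an equivariant polynomial in $g^{ab}$ and (if $Y$ carries it) the volume element, and such objects are $\nabla$-parallel. The reduction to this form uses the fact that the only first-order natural operators from $G$ depend on $g$ zeroth-order: the first derivatives of $g$ can be killed at a point by a diffeomorphism fixing $j^0$, and first-order data must enter through connection-type terms, which are not tensorial.

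With $\nabla_n T\indices{^n_{\mu'}^a_{\nu'}}=\mathbf{0}$ in hand, the Leibniz rule and the symmetry $T\indices{^n_{\mu'}^a_{\nu'}}=T\indices{^n_{\nu'}^a_{\mu'}}$ give $\nabla_n(T\indices{^n_{\mu'}^a_{\nu'}}v^{\mu'}v^{\nu'})=2T\indices{^n_{\mu'}^a_{\nu'}}v^{\nu'}\nabla_n v^{\mu'}$.

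Finally, $T$ annihilates $U$ in the pair of indices $({}^n,{}_{\mu'})$. This holds because $t_mT\indices{^n_{\mu'}^m_{\nu'}}$ is a symmetrizer for every timelike $t_m$ by Theorem~\ref{thm:equivalence}, timelike covectors span the cotangent space, and $T$ is symmetric in its upper indices. It follows that $\nabla_nv^{\mu'}$ may be replaced by any representative $j\indices{^{\mu'}_n}$ of its class modulo $U$, which gives the stated identity.
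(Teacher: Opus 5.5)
Your proof is correct. Its outer structure is the same as the paper's: the Leibniz rule, symmetry of $T$ in its vertical indices, and annihilation of $U$, so that $\nabla_n v^{\mu'}$ can be replaced by any representative $j\indices{^{\mu'}_n}$. The difference is in how you dispose of the term $v^{\mu'}v^{\nu'}\nabla_nT\indices{^n_{\mu'}^a_{\nu'}}$.

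The paper observes that $\nabla_nT\indices{^n_{\alpha'}^a_{\beta'}}$ is a first-order natural operator $G\rightarrow T\otimes S^2V^*Y$, and invokes the cited fact that every such operator is zero. You instead prove the stronger statement $\nabla_mT\indices{^a_{\alpha'}^b_{\beta'}}=\mathbf{0}$. Since $T$ is a pointwise equivariant function of $g$, in normal coordinates at $p$ its partial derivatives vanish, as do the Christoffel symbols and hence the induced connection coefficients on $Y$.

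Two remarks on your write-up. First, in the paper a natural symmetrizer is by definition a natural transformation on $G$, i.e.\ zeroth order. So $T(q)=f(g(q))$ in any chart for a fixed $f$, and the ``main obstacle'' you flag never arises. The same goes for the invalid inference in your third paragraph, that first-order agreement of two metrics at $p$ controls $\partial T$ at $p$. Second, if you do start from a first-order $H$, the diffeomorphism normalizing $j^1_pg$ must have identity $1$-jet at $p$, not merely fix $p$. Only then does it act trivially on the first-order target fiber.

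What your route buys is robustness. The paper's vanishing claim is a parity fact: $-\mathrm{id}\in O(1,n-1)$ acts by $-1$ on $T\otimes S^2V^*Y$. That holds when $Y$ is a single tensor bundle, but not for every first-order $Y$. For $Y=\mathbb{R}\oplus T^*$ (first-order Klein--Gordon), placing $g^{ab}$ in the mixed block of $S^2V^*Y$ gives a nonzero natural operator $G\rightarrow T\otimes S^2V^*Y$. Parallelism of zeroth-order metric concomitants, by contrast, holds for any first-order $Y$. The paper's route, in exchange, is a one-line appeal to a classification result and needs only the divergence.
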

\begin{proof}
    We know by the symmetry of $T$ that $\nabla_n(T\indices{^n_{\mu'}^a_{\nu'}}v^{\mu'}v^{\nu'})=v^{\mu'}v^{\nu'}\nabla_nT\indices{^n_{\mu'}^a_{\nu'}}+2T\indices{^n_{\mu'}^a_{\nu'}}v^{\nu'}\nabla_nv^{\mu'}$. Since $\nabla_av^{\alpha'}\in VXM$, and  $T\indices{^a_{\alpha'}^b_{\beta'}}$ annihilates every element of $U\subset VXM$, this implies that $\nabla_n(T\indices{^n_{\mu'}^a_{\nu'}}v^{\mu'}v^{\nu'})=v^{\mu'}v^{\nu'}\nabla_nT\indices{^n_{\mu'}^a_{\nu'}}+2T\indices{^n_{\mu'}^a_{\nu'}}v^{\mu'}j\indices{^{\nu'}_n}$, where $j\indices{^{\alpha'}_a}$ is any representative of $[\nabla_av^{\alpha'}]_{/U}\in VXM/U$. But now consider $\nabla_nT\indices{^n_{\alpha'}^a_{\beta'}}$. This is, in particular (the image at $M$ of) a first-order natural operator $G\rightarrow T\otimes S^2V^*Y$. But the only such natural operator is the zero operator. So $\nabla_nT\indices{^n_{\alpha'}^a_{\beta'}}=\mathbf{0}$.
\end{proof}

Note that Theorem \ref{conj:CC} does not say that a natural hyperbolization necessarily has vanishing divergence, even when evaluated at solutions.  This is to be expected: sometimes equations are sourced, and sometimes the equation itself has energy instabilities (as with the tachyon equation, $\nabla_a\nabla^a\varphi +m^2\varphi = \mathbf{0}$.  But it does allow us to state the conditions under which a natural hyperbolization \emph{is} divergence-free, as captured by the following corollary.

\begin{cor}
    Under the conditions of theorem \ref{conj:CC}, $T\indices{^a_{\mu'}^b_{\nu'}}v^{\mu'}v^{\nu'}$ satisfies the CC iff $T\indices{^n_{\mu'}^a_{\nu'}}v^{\nu'}\nabla_nv^{\mu'}=\mathbf{0}$.
\end{cor}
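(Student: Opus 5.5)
The plan is to read the corollary directly off Theorem \ref{conj:CC}, using that the hyperbolization annihilates the presystem. Recall that the CC for the vector field $T\indices{^a_{\mu'}^b_{\nu'}}v^{\mu'}v^{\nu'}$ means $\nabla_n(T\indices{^n_{\mu'}^a_{\nu'}}v^{\mu'}v^{\nu'})=\mathbf{0}$, with $\nabla$ the Levi-Civita derivative operator of $\psi$. This is extended to vertical indices via the induced connection on the first-order bundle $Y$ described just before Theorem \ref{conj:CC}.

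First, I will invoke Theorem \ref{conj:CC} to rewrite the divergence as
\[
\nabla_n(T\indices{^n_{\mu'}^a_{\nu'}}v^{\mu'}v^{\nu'})=2T\indices{^n_{\mu'}^a_{\nu'}}v^{\nu'}j\indices{^{\mu'}_n},
\]
where $j\indices{^{\mu'}_n}$ is any representative of $[\nabla_nv^{\mu'}]_{/U}$. The CC is therefore equivalent to the vanishing of the right-hand side. Next, I will take the particular representative $j\indices{^{\mu'}_n}=\nabla_nv^{\mu'}$. This is legitimate because Theorem \ref{conj:CC} allows any representative, and the expression is independent of that choice.

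The independence holds because two representatives differ by an element $u\indices{^{\mu'}_n}$ of the presystem $U$. By the definition of a natural symmetrizer, $U$ lies in the kernel of the induced morphism $(T^*\otimes VB)\times G\rightarrow(T\otimes V^*B)\times G$. Concretely, this says $T\indices{^n_{\mu'}^a_{\nu'}}u\indices{^{\mu'}_n}=\mathbf{0}$ for all $u\in U$. With this choice, the CC holds iff $2T\indices{^n_{\mu'}^a_{\nu'}}v^{\nu'}\nabla_nv^{\mu'}=\mathbf{0}$, which is the stated condition once the factor of $2$ is dropped.

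The only point needing care is index bookkeeping. The annihilation of $U$ must occur in the index pair contracted with the derivative, namely $(n,\mu')$, and not in the free pair $(a,\nu')$. If the kernel condition is instead stated on the other pair, I would use the symmetry of $H(M,\psi)$ to move it across: $T$ is symmetric in its upper indices and in its vertical indices, hence under simultaneous exchange of the pairs $(a,\mu')\leftrightarrow(b,\nu')$. This is exactly the symmetry already used in the first line of the proof of Theorem \ref{conj:CC}. Apart from this check, the corollary is immediate.
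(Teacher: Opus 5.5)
Your proposal is correct and matches the paper, which gives no separate proof and treats the corollary as an immediate reading of Theorem~\ref{conj:CC} with the representative $j\indices{^{\mu'}_n}=\nabla_nv^{\mu'}$. One small slip: $T\indices{^a_{\mu'}^b_{\nu'}}v^{\mu'}v^{\nu'}$ is a rank-2 tensor field, not a vector field, and the CC is the vanishing of its divergence.
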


\section{Constraints}

A full treatment of the existence and uniqueness properties of quasilinear symmetric hyperbolic equations in this setting requires a discussion of constraints.  Constraints play two roles in the theory of symmetric hyperbolic partial differential equations, both related to the existence side of the fundamental existence and uniqueness results.  They imply conditions that initial data must satisfy in order to be compatible with solutions on any open set around that initial data.  And they also imply integrability conditions that must be satisfied by the equation as a whole to admit any solutions at all.   We begin with the following definition, inspired by \citet{Geroch1996PDEs}.
\begin{defn}[Constraints]
    Let $\check{E}\subset J^1\check{B}\rightarrow \check{B}\rightarrow M$ be a first-order quasilinear equation, with presystem $\check{U}\rightarrow \check{B}$. A \emph{constraint field} is a tensor field $c\indices{^b^a_{\alpha'}}$ on $\check{B}$, antisymmetric in $a$ and $b$, that annihilates every tensor $\theta\indices{^{\alpha'}_a}$ in $\check{U}$ at each $p\in \check{B}$.
\end{defn}

To see how constraints impose (or capture) consistency conditions on initial data, suppose we have a first-order quasilinear equation $\check{E}\subset J^1\check{B}\rightarrow \check{B}\xrightarrow{\pi} M$, with presystem $\check{U}\rightarrow \check{B}$.  Suppose $c^{ba}{}_{\alpha'}$ is a constraint field for this equation.  Fix any point $x\in \check{B}$ and any covector $n_a$ at $\pi(x)$.  Then $n_bc^{ba}{}_{\alpha'}$ is a linear functional on elements of $K^1\check{B}$ at $x$.  Let $X_{|(x,n_a)}\subseteq (K^1\check{B})_{|x}$ be intersection of the kernels of all such maps for fixed $n_a$, ranging over constraints at $x$.  (It suffices to consider a basis for the vector space of constraints at $x$. The subscripts reflect that this space depends on both $x$ and $n_a$.)  By the definition of $c^{ba}{}_{\alpha'}$, $X_{(x,n_a)}\supseteq \check{U}_{|x}$.  So what we find is that $X_{|(x,n_a)}$ is itself a presystem, for some family of equations, which contains $U$; since a larger presystem means a small number of equations, it follows that $n_bc^{ba}{}_{\alpha'}$ determines a subsystem of $E$ at $x$, associated with the covector $n_a$.  Finally, note that if we act with $n_bc^{ba}{}_{\alpha'}$ on a field $\vartheta^{\alpha'}{}_a$, components parallel to $n_a$ will vanish, due to the antisymmetry of $c^{ba}{}_{\alpha}$.  This means that for any connection $\omega^{\alpha'}{}_{\beta}$ on $\check{B}$, $n_bc^{ba}{}_{\alpha'}\omega^{\alpha'}{}_{\beta}$ takes derivatives, relative to the connection, only along the surface normal to $n_a$. .

It follows that $n_bc^{ba}{}_{\alpha}$ determines a subsystem of equations that hold only on surfaces \emph{normal} to $n_a$.  These are implied by $\check{E}$ in the sense that any solution to $\check{E}$ must also solve these equations.  If we now consider an $n-1$ dimensional surface $S$ in $M$ on which we wish to specify initial data, and consider the normal $n_a$ to that surface, the construction just described generates, for each constraint field, an implied equation on $S$, i.e., depending only on derivatives tangent to $S$, that must be satisfied if there is to exist any solution to $\check{E}$ on a neighborhood of $S$ that coincides with the initial data on $S$.  These implied equations are the constraint equations.

In order for $\check{E}$ to admit solutions for arbitrary initial data satisfying the constraint equations, there must be enough constraints to fixed all implied degrees of freedom on $S$, without there being too many to overconstrain the system.  When this happens, the constraints are said to be \emph{complete}.  Constraints are complete at a point $x$ of $\check{B}$, relative to a covector $n_a$ at $\pi(x)$, if $\dim(X_n)-\dim(U)=\dim(F)$, where $F$ is the typical fiber of $B$.  We say the constraints are simply \emph{complete} if they are complete at all points of $\check{B}$, for all non-zero covectors; likewise, they are complete for timelike covectors if the condition holds for all points of $\check{B}$ and all timelike covectors.  To see why these definitions capture the required condition, note that the dimensions of the corresponding equation spaces are $\dim(M)\times \dim(F)-\dim(X_n)$ and $\dim(M)\times \dim(F)-\dim(U)$.  Thus the expression says that the number of equations in the system, minus the number of constraints, is equal to the number of unknowns for the system.

We now turn to integrability.  Our treatment will be somewhat idiosyncratic, though, we hope, enlightening.  We are interested, here, in whether there exist smooth local solutions to $\check{E}$ through every point $x\in\check{B}$.  If there \emph{does} exist such a solution, $\varphi:O\rightarrow B$, through $x$, it follows that there also exists a field $(\nabla\varphi)^{\alpha}{}_a$, the differential of $\varphi$, on $\varphi[O]$, which (a) lies everywhere in $E$, and (b) is such that for any vector fields $\xi^a$, $\eta^a$ on $O$, $[(\nabla\varphi)^{\alpha}{}_a\xi^a,(\nabla\varphi)^{\beta}{}_b\eta^b]$ is tangent to $\varphi[O]$.  This sort of expression should be familiar: it is a kind of local flatness condition for solutions, relative to themselves (as surfaces in $\check{B}$). In the present context, it means that if the equation is ``integrable'' in the sense of admitting smooth solutions, then it is ``integrable'' in the sense of satisfying local integrability conditions in the form of local curvature conditions.  In other words, equations admitting solutions through every point must contain ``flat'' distributions of horizontal subspaces of $T\check{B}$.

With this basic insight in mind, we now introduce our integrability condition.  First, observe that for any open set $O\in\check{B}$, and any section $W^{\alpha}{}_a:O\rightarrow J^1\check{B}$, we can define a curvature tensor in the standard way as that tensor such that, for any fields $\xi^{a}$, $\eta^{a}$ on $\pi[O]$,
\[
    \Omega^{\alpha'}{}_{ab}\xi^{a}\eta^{b} = \omega^{\alpha'}{}_{\beta}[W^{\kappa}{}_{a}\xi^a,W^{\beta}{}_{b}\eta^b],
\]
where $\omega^{\alpha'}{}_{\beta}$ is the unique (surjective) projection onto the vertical space $V\check{B}$ at each point of $O$ such that $\omega^{\alpha'}{}_{\beta}W^{\beta}{}_a=\mathbf{0}$.  Then we have:
\begin{defn}
 Let $\check{E}\subset J^1\check{B}\rightarrow \check{B}\rightarrow M$ be a first-order quasilinear equation, with presystem $\check{U}\rightarrow \check{B}$. A constraint field $c\indices{^b^a_{\alpha'}}$ on $\check{B}$ is \emph{integrable} if, for any point $x\in\check{B}$, there exists a neighborhood $O\subseteq\check{B}$ of $x$ such that every section $W^{\alpha}{}_a:O\rightarrow \check{E}$ has curvature $\Omega^{\alpha'}{}_{ab}$ satisfying $c^{ab}{}_{\alpha'}\Omega^{\alpha'}{}_{ab}=\mathbf{0}$.  The equation $\check{E}$ is \emph{integrable} if all of its constraint fields are.
\end{defn}
What this definition says is that an equation is integrable if any local connection that lies within the equation is flat ``up to the direction of $E$'', i.e., its curvature vanishes when acted upon by any linear functional on curvatures that annihilates the presystem, $U$.

This way of thinking about integrability is quite different from how Geroch handles it, even though we follow his treatment closely in other respects.  To make contact, we first introduce a derivative operator $\nabla$ on $\check{B}$ with the property that the derivative of every vertical vector is vertical; and we extend this derivative operator to act on mixed index tensors by stipulating that $\nabla_{\alpha}(\nabla\pi)^{\alpha}{}_{\beta}=\mathbf{0}$.  (The derivative operator is otherwise arbitrary.)  One can then derive the following expression, for any constraint field $c^{ba}{}_{\alpha'}$ and section $W^{\alpha}{}_a:O\rightarrow \check{E}$:
\[
c^{ba}{}_{\alpha'}\Omega^{\alpha'}{}_{ab}= 2W^{[\kappa}{}_aW^{\beta]}{}_{b}(\nabla_{\kappa} c^{ba}{}_{\alpha'}\omega^{\alpha'}{}_{\beta}),
\]
where $\Omega^{\alpha'}{}_{ab}$ is the curvature associated with $W^{\alpha}{}_a$ and $\omega^{\alpha'}{}_{\beta}$ is as defined above.  Now consider the tensor field $\nabla_{[\kappa} c^{ba}{}_{|\alpha'|}\omega^{\alpha'}{}_{\beta]}$ on the right-hand side of the equation.  Note that the expression on the left-hand side will vanish iff $\nabla_{[\kappa} c^{ba}{}_{|\alpha'|}\omega^{\alpha'}{}_{\beta]}$ annihilates any pair of sections of $E$.  (Why any pair?  Because the difference between $W^{\alpha}{}_a$ and another section of $E$ will be given by a field in $U$, which will vanish when contracted with $\nabla_{[\kappa} c^{ba}{}_{|\alpha'|}\omega^{\alpha'}{}_{\beta]}$.)  It follows that $\nabla_{[\kappa} c^{ba}{}_{|\alpha'|}\omega^{\alpha'}{}_{\beta]}$, as a biaffine functional on fibers of $J^1\check{B}$ at each point of $O$, factors through the quotient map $k^A:J^1\check{B}\rightarrow J^1\check{B}/U$, taking derivatives to the ``equation space'' associated with $\check{E}$, in both arguments.  In other words, we have proved the following proposition:
\begin{prop}
    A constraint field $c^{ba}{}_{\alpha'}$ for a first-order quasilinear equation $\check{E}\subseteq J^1\check{B}\rightarrow\check{B}\rightarrow M$ with presystem $U$ is integrable iff (a) $\nabla_{[\kappa} c^{ba}{}_{\alpha']}$ annihilates pairs of elements of the fibers of the presystem at each point and (b) $\nabla_{[\kappa} c^{ba}{}_{|\alpha'|}\omega^{\alpha'}{}_{\beta]}=\sigma_A{}^b{}_{\kappa}k^{Aa}{}_{\beta} + \sigma_A{}^a{}_{\beta} k^{Ab}{}_{\kappa}$, for some tensor field $\sigma^{A}{}_a{}_{\alpha}$, with $k$, $\nabla$, and $\omega$ as in the previous paragraphs.\footnote{We have abused notation here, since $k^A$ is an affine map and we have written it as a tensor.  The action should be understood as limited to fibers of $J^1\check{B}$.}
\end{prop}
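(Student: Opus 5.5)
The plan is to reduce integrability to a pointwise algebraic condition on the biaffine functional $\Theta\indices{^{ba}_{\kappa\beta}}:=\nabla_{[\kappa} c^{ba}{}_{|\alpha'|}\omega^{\alpha'}{}_{\beta]}$, using the identity $c^{ba}{}_{\alpha'}\Omega^{\alpha'}{}_{ab}=2W^{[\kappa}{}_aW^{\beta]}{}_{b}\Theta\indices{^{ba}_{\kappa\beta}}$ stated above. First I would verify that identity directly: expand $\omega^{\alpha'}{}_{\beta}[W\xi,W\eta]$ via torsion-freeness of $\nabla$ as $W^\kappa{}_a\xi^a\nabla_\kappa(\omega^{\alpha'}{}_\beta W^\beta{}_b\eta^b)-(\xi\leftrightarrow\eta)$ minus terms in which $\nabla_\kappa\omega$ appears. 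Then I would use $\omega W=\mathbf{0}$ to move the derivative onto $\omega$, and contract with $c^{ba}{}_{\alpha'}$. The point of the convention $\nabla(\nabla\pi)=\mathbf{0}$ is that $\xi^a$ and $\eta^a$, lifted through $W$, have covariantly controlled horizontal parts, so the only surviving terms are bilinear in $W$ and involve $\nabla c$ contracted with $\omega$.

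Given the identity, the statement is that $W\mapsto W^{[\kappa}{}_aW^{\beta]}{}_b\Theta\indices{^{ba}_{\kappa\beta}}$ vanishes for every local section $W$ of $\check{E}$. Since $\check{E}$ is an affine subbundle and local sections through any point realize every value in the fiber, this is equivalent to vanishing on all fiber points $W$ of $\check{E}$ at each $x$. I would fix a base point $W_0\in\check{E}_x$ and write $W=W_0+\theta$ with $\theta\in\check{U}_x$. This turns the quadratic expression into a constant term, a term linear in $\theta$, and a term quadratic in $\theta$.

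Vanishing for all $\theta$ forces each homogeneous piece to vanish separately, by scaling $\theta\mapsto s\theta$. The quadratic piece vanishing on the vector space $\check{U}_x$ is, after polarization, exactly condition (a): the antisymmetrized $\nabla c$ annihilates pairs of presystem elements. Then, viewing $\Theta$ as a symmetric bilinear form on $K^1\check{B}_x$ (up to the affine shift), vanishing on $\check{U}\times\check{U}$ means $\Theta$ factors through the quotient $K^1\check{B}/\check{U}$ in at least one slot. Vanishing of the linear and constant pieces on $\check{E}$ then says that the factored form is zero on $\check{E}\times\check{E}$ modulo terms that are proportional to $k^A$. Linear algebra then gives the splitting $\Theta=\sigma_A k^A+\sigma_A k^A$ with the index placement in (b), where $\sigma$ is obtained by choosing a complement of $\check{U}$ smoothly, which is possible since $\check{U}$ is a vector bundle. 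Conversely, (a) and (b) make the expression vanish on $\check{E}$ because $k^A(W)=\mathbf{0}$ there.

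I expect the main obstacle to be the first step, which is establishing the curvature identity cleanly with the given conventions on $\nabla$, especially the bookkeeping of the terms involving derivatives of $W$, which must cancel by antisymmetry. The linear-algebra factorization step is the second delicate point, because $k^A$ is affine rather than linear, and the smooth choice of $\sigma$ must be justified.
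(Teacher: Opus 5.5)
Your architecture matches the paper's (curvature identity, then pointwise algebra on a biaffine functional, then factorization through $k^A$), but the step that makes the algebra applicable is missing. The projection $\omega^{\alpha'}{}_\beta=\delta^{\alpha'}{}_\beta-W^{\alpha'}{}_c(\nabla\pi)^c{}_\beta$ is built from the section $W$. So $\Theta^{ba}{}_{\kappa\beta}$ is not a fixed tensor field on $\check B$, and ``local sections realize every fiber value'' does not reduce the condition to fiber points.

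If you carry out your step 1, the $(\nabla_\kappa c)\,\omega$ terms you expect to survive are killed identically by $\omega W=\mathbf{0}$. What survives is a multiple of $c^{ba}{}_{\alpha'}W^{[\kappa}{}_aW^{\beta]}{}_b\nabla_\kappa\omega^{\alpha'}{}_\beta$. Since $\nabla_\kappa\omega^{\alpha'}{}_\beta=-(\nabla_\kappa W^{\alpha'}{}_c)(\nabla\pi)^c{}_\beta$ (using $\nabla(\nabla\pi)=\mathbf{0}$), this gives $c^{ba}{}_{\alpha'}\Omega^{\alpha'}{}_{ab}=2c^{ba}{}_{\alpha'}W^\kappa{}_a\nabla_\kappa W^{\alpha'}{}_b$, which contains first derivatives of $W$. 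These do not cancel by antisymmetry. For an arbitrary section of $J^1\check B$ they do not cancel at all, since $\nabla W$ at $x$ can be prescribed freely.

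The missing idea uses both that $W$ lies in $\check E$ and that $c$ is a constraint. Fix a reference section $W_0$ of $\check E$ with projection $\omega_0$. For any section $W$ of $\check E$, $\theta=W-W_0$ is a section of $\check U$. Then $c^{ba}{}_{\alpha'}\theta^{\alpha'}{}_b=\mathbf{0}$ identically: $c$ annihilates $\check U$, and antisymmetry moves that annihilation to the slot contracted with $W$'s lower index. Differentiating gives $c\,\nabla\theta=-(\nabla c)\,\theta$. Substituting $W=W_0+\theta$ yields, up to an overall sign,
$c^{ba}{}_{\alpha'}\Omega^{\alpha'}{}_{ab}=2W^{[\kappa}{}_aW^{\beta]}{}_b\nabla_\kappa(c^{ba}{}_{\alpha'}\omega_0{}^{\alpha'}{}_\beta)$,
now with a genuinely fixed tensor on the right. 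Equivalently, for sections of $\check E$, $c\Omega$ at $x$ depends only on $W(x)$. Only then does your decomposition $W=W_0+\theta$ with scaling apply, and its quadratic part is indeed $\nabla_{[\kappa}c^{ba}{}_{\alpha']}$ on $\check U\times\check U$, i.e.\ (a).

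Your final algebra is fine once stated correctly. Vanishing on $\check U\times\check U$ alone gives no factorization. A symmetric form on the span of the $J^1$-fiber has the shape in (b) exactly when it vanishes on $\ker k^A\times\ker k^A$, where $\ker k^A=\mathbb{R}W_0\oplus\check U_x$ is the span of $\check E_x$; so (b) already contains (a).

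The paper's sketch is equally terse at this reduction step. Its ``Why any pair?'' remark points at exactly this comparison of sections of $\check E$ through $\check U$-valued differences, so this is an argument you need to supply rather than one you can cite.
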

Compare this result to the analysis of integrability in \citet[\S4]{Geroch1996PDEs}.

We now turn to constraints in the natural context.  Unlike symmetrizers and hyperbolizations, constraints form a vector space at each point of $B$.  This is important for present purposes, because constraints, as particular fields, do not lend themselves to the natural theories context: we apparently cannot think of constraints as natural operators on $G$ or $B$.  But the \emph{vector space} of constraints may be natural, in the following sense.
\begin{defn}[Naturalized Constraint Space]
Let $(E,e)$ be a first-order quasilinear natural equation on a natural bundle $X\rightarrow M$, with presystem $(U,u)$.  A \emph{natural constraint bundle} associated with $E$ is a natural bundle $C$ for which the following diagram is a pullback in $\mathcal{NB}_n$:

\begin{figure}[h!]
\label{fig:fiberedproductsofnaturalbundlesareproducts}
    \begin{center}
    \begin{tikzcd} C \arrow[dd] \arrow[rr] & & (T\times)\otimes\ker(u^*) \arrow[dd, "i"] \\ & & \\
            \left((\wedge^2T\times)\otimes V^*\right)X \arrow[rr, "i"] & & \left(((\otimes^2T)\times)\otimes V^*\right)X
    \end{tikzcd}
\end{center}
\end{figure}

where $\ker({u^*})$, recall, is the cokernel of $u$. In other words, it is a (special) natural tensorial construction over $X$ whose fibers consist of tensors with the symmetries of a constraint, as defined above, that annihilate the presystem $(U,u)$ at each point of $X$ at each $n-$manifold.
\end{defn}

In what follows, we assume that the equations we are considering admit natural constraint bundles.  This is an additional regularity assumption that we are imposing.  In the presence of this assumption, no significant changes are needed to define when constraints are complete in the natural context; one simply applies the dimensional calculation to the fibers of the relevant natural bundles at each $n-$manifold.  (Of course, whether the constraints are complete will not depend on the choice of $n-$manifold.)  Integrability, by contrast, now admits a much more elegant definition.  First, let $\Omega_X:J^1\circ J^1X\rightarrow VX\otimes \wedge^2 (T\times X)$ be the natural operation taking connections on $X$ and their first derivatives to their associated curvature tensors.  (Note that we can think of $\Omega_X$ as arising from a natural transformation between natural constructions $\Omega:J^1\circ J^1\rightarrow V\otimes\wedge^2(T\times)$ whose components, at each natural bundle $X$, is the natural operation that assigns curvatures to connections on $X$.)  Now consider the natural bundle $\nabla E$ defined by following pullback diagram:
\begin{figure}[h]
\label{fig:fiberedproductsofnaturalbundlesareproducts}
    \begin{center}
    \begin{tikzcd} \nabla E \arrow[dd] \arrow[rr] & & J^1\circ J^1X \arrow[dd, "\imath"] \\ & & \\
           J^1E \arrow[rr, "i"] & & J^1\circ J^1X
    \end{tikzcd}
\end{center}
\end{figure}
Here $\nabla E$ is the natural bundle of fields, derivatives of fields, and second derivatives of fields that satisfy $E$.  Since this is a subbundle of $J^1\circ J^1X$, $\Omega_X$ acts on $\nabla E$ by restriction, yielding a natural operation $\Omega_E:\nabla E\rightarrow V\otimes\wedge^2(T\times X)$.  We can think of this as assigning ``curvatures'' to ``solutions'' of $E$, i.e., measuring, for each hyperlocal solution (to second order), its failure to be integrable to a solution over a neighborhood.

Then we have the following.
\begin{defn}
A natural equation $(E,e)$ on a natural bundle $X\rightarrow M$ has \emph{integrable constraints} if the natural constraint bundle $C$ associated with $(E,e)$ is isomorphic to $\ker(\Omega_E^*)$, the cokernel of $\Omega_E$.
\end{defn}
This definition says that a natural system of equations is integrable just in case its curvature at each point annihilates the presystem.

\section{Well-posedness and naturality}

Taken together, the foregoing almost establishes the following:
\begin{thm}\label{thm:existence}
    Let $(E,e)$ be a minimally coupled quasilinear naturally symmetric hyperbolic natural equation on a natural bundle $X=Y\times G$, admitting a natural constraint bundle, whose constraints that are integrable and complete relative to all timelike covectors.   Then given any manifold $M$, any section $\psi:M\rightarrow GM$, and any pair $(S,\varphi_0)$, where $S\subseteq M$ is an $n-1$ dimensional hypersurface of $M$ whose normal covector $n_a$ is everywhere timelike and $\varphi_0:S\rightarrow XM_{|\psi}$ satisfies all constraint equations, there exists a neighborhood $O$ of $S$ and a section $\varphi:O\rightarrow XM_{|\psi}$ that solves $E$ and agrees with $\varphi_0$ on $S$.
\end{thm}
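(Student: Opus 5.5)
The plan is to reduce the statement to Geroch's existence theorem for first-order quasilinear symmetric hyperbolic systems with complete, integrable constraints, applied to the denaturalized equation $EM_{|\psi}$. \citep{Geroch1996PDEs}

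First, fix $M$ and $\psi$ and pass to the ordinary bundle $XM_{|\psi}\rightarrow M$, which Theorem \ref{thm:itsabundle} identifies with $BM$. Theorem \ref{thm:denaturalisedquasilinear} then says $EM_{|\psi}$ is a quasilinear first-order equation on it. Write $T\indices{^a_{\alpha'}^b_{\beta'}}=H(M,\psi)$ for the natural maximal hyperbolization. Choose a smooth timelike covector field $t_a$ on a neighborhood of $S$; the unit normal $n_a$ to $S$, extended off $S$, is the natural choice. By Theorem \ref{thm:equivalence} and Lemma \ref{lem:thefirstargument}, $t_nT\indices{^n_{\alpha'}^b_{\beta'}}$ is a maximal hyperbolization of $EM_{|\psi}$ whose causal covectors are exactly the timelike covectors co-oriented with $t_a$ (replacing $T$ by $-T$ if necessary). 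In particular $n_a$ is causal at every point of $\varphi_0[S]$, so $(S,\varphi_0)$ is initial data in the sense of \S\ref{sec:symmetrichyerbolic}.

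Second, transfer the constraint hypotheses. Restrict the natural constraint bundle $C$ to $M$ and to $\psi$; this yields a vector space of constraint fields $c\indices{^{ba}_{\alpha'}}$ on $XM_{|\psi}$ in the non-natural sense. Completeness relative to timelike covectors is a fiberwise dimension count, so it carries over directly: $\dim X_n-\dim U=\dim F$ for every timelike $n_a$, in particular for the normal to $S$.

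For integrability, I would use the natural definition, namely that $C$ is isomorphic to $\ker(\Omega_E^*)$. Restricting to $\psi$, this says every constraint annihilates the curvature of any second-order hyperlocal solution. By the Proposition on integrability, this is equivalent to Geroch's condition that $\nabla_{[\kappa}c^{ba}{}_{|\alpha'|}\omega^{\alpha'}{}_{\beta]}$ factors through the equation map $k^A$. I expect this identification to be the main obstacle. It requires checking that restriction along $\psi$ commutes with forming $\nabla E$ and $\Omega_E$; the analogue of the pullback argument in Theorem \ref{thm:denaturalisedquasilinear}, applied to $J^1\circ J^1$, is what I would try. It also requires checking that every local section $W\indices{^\alpha_a}$ of $EM_{|\psi}$ arises from a point of $\nabla E$ restricted to $\psi$. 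If the last point fails, I would weaken the goal to integrability in Geroch's sense directly, via the factorization in part (b) of the Proposition.

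Third, with these ingredients the argument follows Geroch. Use the constraints together with completeness to split $EM_{|\psi}$, relative to $t_a$, into an evolution subsystem and the constraint subsystem. The evolution subsystem is $t_nT\indices{^n_{\alpha'}^b_{\beta'}}$ contracted with the equation; it is genuinely symmetric hyperbolic with an invertible principal part along $n_a$. Apply the standard Friedrichs/Leray local existence theorem for this evolution system with data $\varphi_0$ on $S$, working in local charts and patching by the uniqueness result Theorem \ref{thm:uniqueness}. This gives a solution $\varphi$ on some neighborhood $O$ of $S$.

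Finally, show that the constraints propagate. Integrability implies that the constraint quantities obey a homogeneous linear symmetric hyperbolic system along solutions of the evolution system. They vanish on $S$ because $\varphi_0$ satisfies the constraint equations, so Theorem \ref{thm:uniqueness} forces them to vanish on $O$, perhaps after shrinking $O$ to a domain of dependence. Hence $\varphi$ solves all of $EM_{|\psi}$, and therefore $E$ relative to $\psi$.
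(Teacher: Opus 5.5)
Your reduction follows the route the paper has in mind. The paper gives no proof of this theorem; it says only that the foregoing ``almost'' establishes it: denaturalize at $\psi$, take $t_nT\indices{^n_{\alpha'}^b_{\beta'}}$ (with $t_a$ extending the normal) as a hyperbolization for which the normal to $S$ is causal, restrict the constraint bundle, and run Geroch's argument. The genuine gap is in your final step, and it is exactly why the paper says ``almost'': the paper points to the gap of \citet[\S A]{Geroch2004} and says it remains open. As far as I can reconstruct that gap, it is this constraint-propagation step. Integrability does not give you a symmetric hyperbolic system for the constraint quantities. Both the symmetrizer and each constraint annihilate the presystem, so both factor through the equation map: $h\indices{^a_{\alpha'}_{\beta'}}=h_{\alpha'A}k\indices{^{Aa}_{\beta'}}$ and $c\indices{^{ba}_{\alpha'}}=c\indices{^b_A}k\indices{^{Aa}_{\alpha'}}$. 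Write $\epsilon^A=\kappa^A(j^1\varphi)$ for the equation values of the evolved field. The evolution equations confine $\epsilon^A$ to the kernel of $\epsilon^A\mapsto h_{\alpha'A}\epsilon^A$. Taking the divergence of $c\indices{^{ba}_{\alpha'}}\nabla_a\varphi^{\alpha'}$ and using integrability then yields a subsidiary system $c\indices{^b_A}\nabla_b\epsilon^A=(\text{terms linear in }\epsilon)$ on that kernel, and that is all you get. Completeness, together with causality of $n_b$ for the hyperbolization, does make $n_bc\indices{^b_A}$ injective on the kernel, so $S$ is non-characteristic for the subsidiary system. But non-characteristic is not hyperbolic: the Cauchy--Riemann system is non-characteristic in every direction. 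Nothing among your hypotheses (symmetric hyperbolicity of the evolution system, completeness, integrability) supplies a symmetrizer for this system. So Theorem \ref{thm:uniqueness} cannot be invoked to conclude $\epsilon^A=0$ off $S$, and there is no domain of dependence of the subsidiary system to shrink $O$ to. The paper points to Reula and Abalos for work on this question but does not claim the step. You assume it rather than prove it, so as written your argument gives existence only conditionally on hyperbolicity of the constraint-propagation system.

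Relatedly, you have misplaced the main obstacle. Translating $C\cong\ker(\Omega_E^*)$ into integrability of the restricted constraint fields is essentially bookkeeping. The curvature at $x$ of a local section $W\indices{^\alpha_a}$ of $EM_{|\psi}$ equals $\Omega$ evaluated on $j^1_p(W\circ\phi)$, for any local section $\phi$ with $j^1_p\phi=W(x)$. That jet lies in $\nabla E$ restricted to $\psi$, so every restricted constraint annihilates it. (Minor: the identification $XM_{|\psi}\cong BM$ comes from the pullback construction in the naturalization subsection, not from Theorem \ref{thm:itsabundle}, which concerns $EM_{|\psi}$.)
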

This is the existence half of the classic existence and uniqueness theorems.  (Theorem \ref{thm:uniqueness} stated the other half.) We say ``almost'' establishes because, to our knowledge, the gap described by \citet[\S A]{Geroch2004} remains open.\footnote{Though, see \citet{Reula} and \citet{Abalos}.}

Note that the naturality of everything in this result establishes the following:
\begin{thm}
     Let $(E,e)$ be a minimally coupled quasilinear naturally symmetric hyperbolic natural equation on a natural bundle $X=Y\times G$, admitting a natural constraint bundle, whose constraints that are integrable and complete relative to all timelike covectors.  Given any $n-$manifold $M$, section $\psi:M\rightarrow GM$, local solution $\varphi:O\rightarrow EM_{|\psi}$, and diffeomorphism $\chi:O\rightarrow M'$, $\chi_*(\varphi):\chi[O]\rightarrow EM_{|\chi_*(\psi)}$ is also a solution.
\end{thm}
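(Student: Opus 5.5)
The claim is a direct consequence of naturality: the equation $(E,e)$ is a natural transformation into $J^1X$, so $J^1X\chi$ carries $EM$ onto $EM'$. The plan is to push everything forward along $\chi$ and check that solutions go to solutions. I read $\chi$ as a diffeomorphism onto the open set $\chi[O]\subseteq M'$, that is, an arrow of $\bar{\mathcal{M}}_n$ after restricting $O$ and $\chi[O]$ to manifolds in their own right. This is harmless because natural bundles are local: for $O\subseteq M$ open, $XO=XM|_O$. I read $\chi_*(\psi)$ as $G\chi\circ\psi\circ\chi^{-1}$ and $\chi_*(\varphi)$ as $X\chi\circ\varphi\circ\chi^{-1}$, the standard action of a natural bundle on sections.

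\emph{Step 1: the transformed section lies over the transformed metric.} Since $X=Y\times G$ is a product in $\NB_n$, the projection $g:X\rightarrow G$ is natural, so $g_{M'}\circ X\chi=G\chi\circ g_M$. If $\varphi$ takes values in $XM_{|\psi}$, then $g_M\circ\varphi=\psi$. It follows that $g_{M'}\circ\chi_*(\varphi)=G\chi\circ\psi\circ\chi^{-1}=\chi_*(\psi)$, so $\chi_*(\varphi)$ takes values in $XM'_{|\chi_*(\psi)}$ over $\chi[O]$.

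\emph{Step 2: the jet of the transformed section.} I use the standard identity $j^1(X\chi\circ\varphi\circ\chi^{-1})=J^1X\chi\circ j^1\varphi\circ\chi^{-1}$, which is just the definition of how $J^1X$ acts on morphisms. Naturality of $e$ gives $J^1X\chi[e_M(EM)]\subseteq e_{M'}(EM')$, and in fact equality over $\chi[O]$ by applying the same argument to $\chi^{-1}$. Since $\varphi$ solves $E$, $j^1\varphi$ lands in $e_M(EM)$ pointwise. Hence $j^1\chi_*(\varphi)$ lands in $e_{M'}(EM')$. Combining this with Step 1, its image lies in the restriction $EM'_{|\chi_*(\psi)}$, exactly as in the chain of equalities in the proof of Lemma \ref{lem:(i)}.

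\emph{Step 3: what the hypotheses contribute.} The hyperbolicity, constraint and minimal-coupling hypotheses are not needed for the solution property itself; they only ensure that $\varphi$ is the solution furnished by Theorem \ref{thm:existence}. The one place they matter is if one also wants to transport initial data. Naturality of $H$ gives $H(M',\chi_*\psi)=(S^2T\otimes S^2V^*B)(\chi)H(M,\psi)$, so a surface whose normal is timelike for $\psi$ has a normal that is timelike for $\chi_*\psi$. Likewise naturality of the constraint bundle $C$ means the transported data satisfy the transported constraints. I would record this remark but not rely on it.

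The main obstacle is bookkeeping rather than mathematics: the paper's categories use embeddings $\mathcal{M}_n$ and manifolds admitting Lorentzian metrics, so one must justify treating $\chi:O\rightarrow\chi[O]$ as a morphism and identifying $EO$ with $EM|_O$ (locality/regularity of natural bundles and equations). I would dispatch this first, invoking the standard fact that natural bundles over embeddings restrict to open subsets.
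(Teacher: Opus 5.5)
Your proposal is correct and follows essentially the paper's approach: the paper simply asserts that the result follows from "the naturality of everything in this result," and your Steps 1--2 spell out that argument (naturality of the projection $g$ places $\chi_*(\varphi)$ over $\chi_*(\psi)$, and naturality of $e$ combined with $j^1(X\chi\circ\varphi\circ\chi^{-1})=J^1X\chi\circ j^1\varphi\circ\chi^{-1}$ carries solutions to solutions). You are also right that the hyperbolicity and constraint hypotheses play no role in the solution property itself and matter only for transporting initial data, which is the use the paper's surrounding remark makes of them.
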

It follows that existence and uniqueness also hold ``naturally'' for suitably transformed initial data.  This captures the sense in which minimally coupled natural equations may be symmetric hyperbolic, up to a choice of (natural) background structure.

\section{Conclusion}

We have considered the relationship between symmetric hyperbolicity, natural equations, and minimal coupling, and we have shown how standard definitions of symmetric hyperbolicity may be adapted to a natural context.  We have also shown how minimal coupling imposes constraints on the causal structure of naturally symmetric hyperbolic equations, and how it implies a kind of conservation condition for stress-energy tensors suitably related to the hyperbolizability of an equation.  Finally, we have discussed how to understand constraints, including integrability, in this context.

\section*{Acknowldgements}
This material is based upon work supported by the National Science Foundation under Grant No. 2419967.
%\nocite{*}
\bibliography{covariance}% Produces the bibliography via BibTeX.

\end{document}